\newif\iflongversion
  \title{All-Subsets Important Separators with Applications to Sample Sets, Balanced Separators and Vertex Sparsifiers in Directed Graphs}
  \author{ {Aditya Anand \thanks{University of Michigan, Ann Arbor} }\and {Euiwoong Lee \thanks{University of Michigan, Ann Arbor. Supported in part by NSF grant CCF-2236669 and Google}}
  \and {Jason Li \thanks{Carnegie Mellon University}}
  \and {Thatchaphol Saranurak\thanks{
        University of Michigan,
        \texttt{thsa@umich.edu}.
        Supported by NSF Grant CCF-2238138. Partially funded by the Ministry of Education and Science of Bulgaria's support for INSAIT, Sofia University ``St.~Kliment Ohridski'' as part of the Bulgarian National Roadmap for Research Infrastructure.
    }}}
    \date{}
  \newtheorem{theorem}{Theorem}[section]
\newtheorem{lemma}[theorem]{Lemma}
\newtheorem{definition}[theorem]{Definition}
\newtheorem{corollary}[theorem]{Corollary}
  \title{All-Subsets Important Separators with Applications to Sample Sets, Balanced Separators and Vertex Sparsifiers in Directed Graphs}
 \titlerunning{All-Subsets Important Separators with Applications}
 \author{Aditya Anand}{University of Michigan Ann Arbor, USA}{adanand@umich.edu}{}{}
\author{Euiwoong Lee}{University of Michigan Ann Arbor, USA}{euiwoong@umich.edu}{}{}
\author{Jason Li}{Carnegie Mellon University, USA}{jmli@alumni.cmu.edu}{}{}
\author{Thatchaphol Saranurak}{University of Michigan Ann Arbor, USA}{thsa@umich.edu}{}{}
\authorrunning{A. Anand, E. Lee, J. Li and T. Saranurak}
\keywords{directed graphs, important separators, sample sets, balanced separators, vertex sparsifiers}
\newif\iflongversion
\newcommand{\lv}[1]{\iflongversion #1 \fi}  %
\definecolor{ForestGreen}{rgb}{0.1333,0.5451,0.1333}
\definecolor{DarkRed}{rgb}{0.65,0,0}
\definecolor{Red}{rgb}{1,0,0}
    \newcommand{\thnote}[1]{\todo[color=red!25!white]{TS: #1}\xspace}
 \newcommand{\enote}[1]{\todo[color=green!25!white]{EL: #1}\xspace}
  \newcommand{\anote}[1]{\todo[color=blue!25!white]{AA: #1}\xspace}
\newcommand{\enote}[1]{}
\newcommand{\anote}[1]{}
\newcommand{\thnote}[1]{}
\newcommand{\Reach}{\mathrm{Reach}\xspace}
\renewcommand{\SS}{\mathcal{S}}
\newcommand{\OO}{\mathcal{O}}
\newcommand{\DBS}{{\textsc{Directed Bisection}}{}}
\newcommand{\MB}{{\textsc{Minimum Bisection}}{}}
\newcommand{\BS}{{\textsc{Balanced Separator}}}
\newcommand{\OC}{{\textsc{Oneway Cuts}}}
\newcommand{\DB}{{\textsc{Directed Balanced Separator}}}
\newcommand{\SMC}{{\textsc{Skew Separator}}}
\newcommand{\DFAS}{{\textsc{Directed Feedback Arc Set}}}
\newcommand{\SMCE}{{\textsc{Skew Edge Separator}}}
\newcommand{\Wahlstrom}{Wahlstr\"{o}m\xspace}
\newcommand{\defparproblem}[4]{
  \vspace{1mm}
\begin{center}
\noindent\fbox{

  \begin{minipage}{0.85\textwidth}
  \begin{tabular*}{\textwidth}{@{\extracolsep{\fill}}lr} \textsc{#1}  & {\bf{Parameter:}} #3 \\ \end{tabular*}
  {\bf{Input:}} #2  \\
  {\bf{Question:}} #4
  \end{minipage}
 
  }
  \end{center}
 }
\begin{document}

\maketitle
\pagenumbering{gobble}

\begin{abstract}
Given a directed graph $G$ with $n$ vertices and $m$ edges, a parameter $k$ and two disjoint subsets $S,T \subseteq V(G)$, we show that the number of \emph{all-subsets important separators}, which is the number of $A$-$B$ important vertex separators of size at most $k$ over all $A \subseteq S$ and $B \subseteq T$, is at most $\beta(|S|, |T|, k) = 4^k {|S| \choose \leq k} {|T| \choose \leq 2k}$, where ${x \choose \leq c} = \sum_{i = 1}^c {x \choose i}$, and that they can be enumerated in time $\OO(\beta(|S|,|T|,k)k^2(m+n))$. This is a generalization of the folklore result stating that the number of $A$-$B$ important separators for two fixed sets $A$ and $B$ is at most $4^k$ (first implicitly shown by Chen, Liu and Lu Algorithmica '09). From this result, we obtain the following applications: 

\begin{enumerate}
    \item We give a construction for detection sets and sample sets in directed graphs, generalizing the results of Kleinberg (Internet Mathematics' 03) and Feige and Mahdian (STOC' 06) to directed graphs. 
    \item Via our new sample sets, we give the first FPT algorithm for finding balanced separators in directed graphs parameterized by $k$, the size of the separator. Our algorithm runs in time $2^{\OO(k)} \cdot (m + n)$. 
    \item Additionally, we show a $\OO(\sqrt{\log k})$ approximation algorithm for finding balanced separators in directed graphs in polynomial time. This improves the best known approximation guarantee of $\OO(\sqrt{\log n})$ and matches the known guarantee in undirected graphs by Feige, Hajiaghayi and Lee~(SICOMP' 08).
    \item Finally, using our algorithm for listing all-subsets important separators, we give a deterministic construction of vertex cut sparsifiers in directed graphs when we are interested in preserving min-cuts of size upto $c$ between bipartitions of the terminal set. Our algorithm constructs a sparsifier of size $\OO\left({t \choose \leq 3c}2^{\OO(c)}\right)$ and runs in time $\OO\left({t \choose \leq 3c} 2^{\OO(c)}(m + n)\right)$, where $t$ is the number of terminals, and the sparsifier additionally preserves the set of important separators of size at most $c$ between bipartitions of the terminals.
\end{enumerate}

\end{abstract}

\lv{
\newpage
\tableofcontents
\newpage
}

\pagenumbering{arabic}

\newpage
\section{Introduction}

The study of parameterized algorithms, or fixed-parameter tractability (FPT) (see~\cite{cygan2015parameterized,downey2013fundamentals}) along with the classical area of approximation algorithms, has emerged as one of the most promising ways to cope with NP-completeness. Given a decision problem $Q$ with input size $n$, together with a parameter $\ell$, a parameterized (or FPT) algorithm is an algorithm running in time $f(\ell) n^{\OO(1)}$ that decides $Q$. Over the last few years, parameterized algorithms have been studied for most well-studied NP-complete problems. A particular focus has been on graph cut optimization problems, including {\textsc{Multiway Cut}}, {\textsc{Multicut}}, {\textsc{Minimum Bisection}}, {\textsc{Feedback Vertex Set}}, {\textsc{Balanced Separator}}~\cite{marx2006parameterized,fm06,chen2008fixed,marx2011fixed,cygan2014minimum,cygan2020randomized,li2022detecting}.
One of the most important tools for graph cut problems has been the technique of \emph{important separators} (see~\cite{marx2011important} for a brief survey), which has formed a building block in parameterized algorithms for many of these and other problems~\cite{marx2006parameterized,chen2008fixed,chen2009improved, chitnis2013fixed, lokshtanov2013clustering,lokshtanov2021fpt}.

In this work, our contribution is two-fold. First, we show a new structural result on important separators. Next, we show that using this result, combined with other techniques, leads to interesting consequences. We show that one can compute \emph{sample sets} in directed graphs. This in turn allows us to obtain both an FPT algorithm and an improved approximation algorithm for finding small balanced separators in directed graphs. We also show a deterministic construction of vertex cut sparsifiers in directed graphs. All our algorithms are simple and concise modulo standard results in approximation and parameterized algorithms.

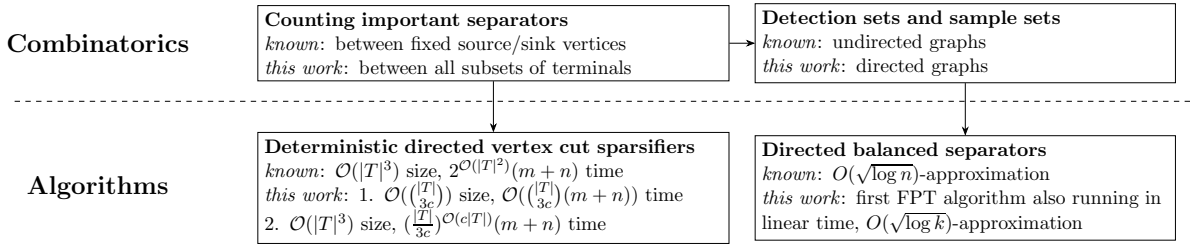
\begin{figure}[H]

\resizebox{\textwidth}{!}{%
	\begin{tikzpicture}[
			node distance = 1cm and 0.5cm,
			mynode/.style={draw, align=left, minimum height=2em, minimum width=3em},
			mybignode/.style={ align=left, minimum height=2em, minimum width=3em},
			myarrow/.style={-Stealth},
			mylabel/.style={font=\scriptsize\itshape}
			]
			
			\node (combinatorics) [mybignode, font=\bfseries\Large, align=center, text width=5cm] {Combinatorics};
			
			\node (counting) [mynode, right=of combinatorics, text width=9cm] {\textbf{Counting important separators}\\ \emph{known}: between fixed source/sink vertices \\ \emph{this work}: between all subsets of terminals};
			\node (sample) [mynode, right=of counting, text width=8cm] {\textbf{Detection sets and sample sets}\\ \emph{known}: undirected graphs \\ \emph{this work}: directed graphs};
			\node (mimick) [mynode, below=of counting, text width=9cm] {\textbf{Deterministic directed vertex cut sparsifiers}\\ \emph{known}: $\OO(|T|^3)$ size, $2^{\OO(|T|^2)}(m + n)$ time\\ \emph{this work}: 1. $\OO({|T| \choose 3c})$ size, $\OO({|T| \choose 3c}(m + n))$ time \\2. $\OO(|T|^3)$ size, $(\frac{|T|}{3c})^{\OO(c|T|)}(m + n)$ time};
			\node (separator) [mynode, right=of mimick, text width=8cm] {\textbf{Directed balanced separators}\\ \emph{known}: \(O(\sqrt{\log n})\)-approximation\\ \emph{this work}: first FPT algorithm also running in linear time, \(O(\sqrt{\log k})\)-approximation};
			
			\node (algorithms) [mybignode,  left=of mimick, font=\bfseries\Large, align=center, text width=5cm]{Algorithms};
			
			\draw[myarrow] (counting) -- (sample);
			\draw[myarrow] (counting) -- (mimick);
			\draw[myarrow] (sample) -- (separator);
			
			\draw[dashed] ([xshift=1cm,yshift=-0.75cm]combinatorics.south west) -- ([xshift=19cm,yshift=-0.75cm]combinatorics.south east);

		\end{tikzpicture}
		}

\caption{Summary of contribution of our paper}

\end{figure}

\subsection{All-Subsets Important Separators}
Important separators have proved to be a very important tool in the design of many FPT algorithms, including fundamental problems such as {\textsc{Multiway Cut}}, {\textsc{Multicut}}, {\textsc{Directed Feedback Arc Set}}.
We refer to Chapter 8 of~\cite{cygan2015parameterized} for various applications of important separators to design parameterized algorithms. Given two disjoint subsets of vertices $A, B$ and another set of vertices $X$ (which may intersect $A,B$) in a directed graph, we say that $X$ is an $A$-$B$ separator if there is no directed path from any vertex of $A$ to any vertex of $B$ in $G \setminus X$\footnote{$G \setminus X$ denotes the graph obtained from $G$ by deleting the vertex set $X$ along with its incident edges}. For such a separator $X$, define $R(X)$ to be the set of vertices reachable from $A$ in $G \setminus X$. Then an $A$-$B$ separator $X$ is called an important $A$-$B$ separator if it is (a) inclusion-wise minimal - so that for any $v \in X$, $X \setminus \{v\}$ is not an $A$-$B$ separator and (b) for any other $A$-$B$ separator $Y \subseteq V(G)$ with $|Y| \leq |X|$, we do not have $R(X) \subset R(Y)$.\footnote{Throughout the paper, we use the notation $P \subset Q$ to mean that $P$ is a proper subset of $Q$.} Informally, important separators are those for which there is no other separator which is further away from $A$ without increasing the cut-size. Marx~\cite{marx2006parameterized} first introduced the notion of important separators and showed a bound of $4^{k^2}$ on the number of important $A$-$B$ separators of size at most $k$. Later this bound was improved to $4^k$ (implicit in the FPT algorithm for \textsc{Vertex Multiway Cut} by Chen, Liu and Lu~\cite{chen2009improved})\footnote{While these results focused mostly on the undirected version, the same proof works in directed graphs.}. Over the last few years, this result has been used extensively to obtain FPT algorithms for various central parameterized problems~\cite{marx2011fixed, chitnis2013fixed,cygan2014minimum,lokshtanov2021fpt}.
In this work, we show a new structural result on important separators in directed graphs: Given a directed graph $G$ and two disjoint subsets $S,T \subseteq V(G)$, we show that the number of $A$-$B$ important separators across all $A \subseteq S$ and $B \subseteq T$ is at most $\beta(|S|, |T|, k) = 4^k {|S| \choose \leq k} {|T| \choose \leq 2k}$, where ${n \choose \leq a} = \sum_{i = 1}^{a} {n \choose i}$. Note that the trivial bound is $4^k 2^{|S| + |T|}$, which follows directly from the fact that there are at most $4^k$ important $A$-$B$ separators for any fixed $A \subseteq S, B \subseteq T$. Previously, no such result is known even for undirected graphs. 

Throughout this paper, given a graph, we will denote by $n$ the number of vertices and by $m$ the number of edges/arcs. 

\begin{restatable}[All Subsets Important Separators]{theorem}{impsep}\label{thm:importantbound}
Let $G$ be a digraph, $k$ be a positive integer and let $S \subseteq V(G)$ and $T \subseteq V(G)$ be disjoint sets of source and sink vertices. Then there are at most $ \beta(|S|, |T|, k) = 4^k {|S| \choose \leq k}{|T| \choose \leq 2k}$ $A$-$B$ important separators of size $\leq k$ across all $A \subseteq S$ and $B \subseteq T$. Further, they can be enumerated in time $\OO(\beta(|S|, |T|, k)\cdot k^2\cdot(m + n))$. 
\end{restatable}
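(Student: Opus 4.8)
The plan is to reduce the counting and enumeration to many instances of the classical single-pair problem, via the following \emph{canonicalization} claim: every $A$-$B$ important separator $X$ with $|X|\le k$ (over any $A\subseteq S$, $B\subseteq T$) is also an $A'$-$B'$ important separator for some $A'\subseteq S$ with $|A'|\le k$ and some $B'\subseteq T$ with $|B'|\le 2k$. Granting this, the counting bound is immediate: the set of all-subsets important separators is contained in $\bigcup_{A',B'}\{A'\text{-}B'\text{ important separators of size }\le k\}$, a union over at most $\binom{|S|}{\leq k}\binom{|T|}{\leq 2k}$ pairs, and by the folklore bound (implicit in~\cite{chen2009improved}) each set in the union has size at most $4^k$; summing gives at most $4^k\binom{|S|}{\leq k}\binom{|T|}{\leq 2k}=\beta(|S|,|T|,k)$ distinct separators. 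The enumeration is equally immediate: iterate over all $\binom{|S|}{\leq k}\binom{|T|}{\leq 2k}$ candidate pairs $(A',B')$, run for each the standard branching procedure that lists all $A'$-$B'$ important separators of size $\le k$ in time $\OO(4^k k^2(m+n))$, and deduplicate the outputs with a hash set keyed by the sorted vertex list. Every separator produced is by definition an all-subsets important separator, and the canonicalization claim guarantees none is missed, so the total running time is $\OO(\beta(|S|,|T|,k)\cdot k^2(m+n))$.

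It remains to prove the canonicalization claim, which is the heart of the theorem. Fix an $A$-$B$ important separator $X$ with $|X|\le k$ and let $R:=R_A(X)$ be its reachable side. Since $X$ is inclusion-minimal, a routine argument gives $X=\partial^+ R$ (the out-neighborhood of $R$): if some $x\in X$ had no in-neighbor in $R$ then $X\setminus\{x\}$ would already be an $A$-$B$ separator. To each $x\in X$ I attach bounded ``evidence'': a \emph{source witness} $a_x\in A$ together with a path from $a_x$ to $x$ internally contained in $R$ (such $a_x$ exists because $x$ has an in-neighbor in $R$, which is reachable from some vertex of $A$ in $G\setminus X$; if $x\in A$ take $a_x=x$); and a \emph{sink witness} consisting of \emph{two} internally vertex-disjoint paths from $x$ to two \emph{distinct} sinks $b_x^1\ne b_x^2$, each avoiding $X\setminus\{x\}$ --- except when $x\in B$, in which case the trivial path suffices and I take $b_x^1=x$. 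The two-disjoint-paths guarantee is exactly where the \emph{importance} of $X$ (beyond minimality) enters: if a single vertex $w$ separated $x$ from $B$ in $G\setminus(X\setminus\{x\})$, then $(X\setminus\{x\})\cup\{w\}$ would be an $A$-$B$ separator of size at most $|X|$ whose reachable side strictly contains $R$, contradicting importance. Setting $A':=\{a_x:x\in X\}$ and $B':=\bigcup_{x\in X}\{b_x^1,b_x^2\}$ gives $|A'|\le|X|\le k$ and $|B'|\le 2|X|\le 2k$, and one checks that $\partial^+ R_{A'}(X)=X$ as well.

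It is routine that $X$ is an $A'$-$B'$ separator that is inclusion-minimal (for each $x$, the concatenation $a_x\rightsquigarrow x\rightsquigarrow b_x^1$ is an $A'$-$B'$ path through $x$ in $G\setminus(X\setminus\{x\})$). The delicate step is showing $X$ is \emph{important} for $(A',B')$. Suppose not: let $Y$ with $|Y|\le|X|$ be an $A'$-$B'$ separator with $R_{A'}(Y)\supsetneq R_{A'}(X)$; replacing $Y$ by $\partial^+ R_{A'}(Y)$ we may assume $R_{A'}(Y)$ is exactly the reachable side of $Y$, so $Y\cap R_{A'}(Y)=\emptyset$. Since every $x\in X$ has an in-neighbor in $R_{A'}(X)\subseteq R_{A'}(Y)$, the set $X$ splits as a disjoint union $X=P\sqcup(X\cap Y)$ with $P:=X\cap R_{A'}(Y)$; and $P\ne\emptyset$ because a path in $G\setminus Y$ from $A'$ to any vertex of $R_{A'}(Y)\setminus R_{A'}(X)$ must cross $\partial^+ R_{A'}(X)=X$ at a vertex of $R_{A'}(Y)\cap X=P$. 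Each $x\in P$ satisfies $x\notin B$ (otherwise $x\in B'\cap R_{A'}(Y)$, contradicting that $Y$ separates $A'$ from $B'$), so $x$ carries two disjoint forward witnesses, and $Y$ must intersect both --- at vertices outside $X$, since those paths avoid $X\setminus\{x\}$ and $x\notin Y$. As the two forward witnesses of $x$ are internally disjoint and end at distinct sinks, this contributes two distinct vertices of $Y\setminus X$ per $x\in P$. If these contributions were globally distinct then $|Y\setminus X|\ge 2|P|$, whence $|Y|=|X\cap Y|+|Y\setminus X|\ge(|X|-|P|)+2|P|=|X|+|P|>|X|$, a contradiction. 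The main obstacle I anticipate is precisely making this count honest: the sink witnesses of different separator vertices may a priori overlap, so one must either select all witness paths simultaneously through a single max-flow computation (and argue that a separator of size at most $|X|$ cannot reuse a vertex to block the threats of two different vertices of $X$), or instead run the importance argument through the posimodularity of the directed out-boundary function --- standard after splitting each vertex to reduce vertex cuts to edge cuts.
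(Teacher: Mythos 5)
Your reduction to a canonicalization claim, and the counting/enumeration given that claim, match the paper exactly; the problem is that the canonicalization step you give does not work, and you have left unproven precisely the part that carries all the difficulty. The issue is not merely that your count ``$|Y\setminus X|\ge 2|P|$'' is lossy when sink witnesses overlap: with witnesses chosen independently per vertex of $X$, the statement you reduce to is actually false. Take $V=\{s,x_1,x_2,u,u',b_1,\dots,b_6\}$ with arcs $s\to x_1$, $s\to x_2$, $x_i\to u$ and $x_i\to u'$ for $i=1,2$, $u\to b_1$, $u\to b_3$, $u'\to b_2$, $u'\to b_4$, $x_1\to b_5$, $x_2\to b_6$; let $A=\{s\}$, $B=\{b_1,\dots,b_6\}$, $X=\{x_1,x_2\}$, $k=2$. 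Then $X$ is an important $A$-$B$ separator (the private sinks $b_5,b_6$ rule out every other separator of size $2$). Your rule permits choosing, for $x_1$, the two internally disjoint witnesses $x_1\to u\to b_1$ and $x_1\to u'\to b_2$, and for $x_2$, the witnesses $x_2\to u\to b_3$ and $x_2\to u'\to b_4$ (all avoid the other vertex of $X$ and end at distinct sinks). This gives $A'=\{s\}$, $B'=\{b_1,b_2,b_3,b_4\}$, and now $Y=\{u,u'\}$ is an $A'$-$B'$ separator with $|Y|=2=|X|$ and $R_{A'}(Y)\supsetneq R_{A'}(X)$, so $X$ is \emph{not} an important $A'$-$B'$ separator. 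So a greedy per-vertex choice of two paths can route several threats through a shared bottleneck, and the target set $B'$ it produces simply fails to certify importance; your closing sentence correctly anticipates that a coordinated choice is needed, but that coordination is the theorem, and neither of the two fixes you sketch is carried out.

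The paper's proof supplies exactly this missing global structure. Using the equivalence ``important $\Leftrightarrow$ minimal and closest to $B$'' together with the closest-set characterization (for every $v\in X$ there are $|X|+1$ paths from $X$ to $B$ that are vertex-disjoint except at $v$), it fixes one maximum family of $|X|$ vertex-disjoint $X$-to-$B$ paths with endpoint set $Y_0$, and for each $v\in X$ performs a single augmentation in the flow network where the capacity of $v$ is raised to $2$; this yields endpoints $Y_v=Y_0\cup\{u_v\}$, so $B^*=Y_0\cup\{u_v:v\in X\}$ has size at most $2k$, and, crucially, for each $v$ the $|X|+1$ paths ending in $Y_v\subseteq B^*$ are pairwise vertex-disjoint except at $v$. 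Then any purported dominating separator of size at most $k$ that misses $v$ can meet at most $k<|X|+1$ of these paths, which is the step your two-paths-per-vertex evidence cannot support (in the example above, $\{u,u'\}$ blocks all four witness paths, but it cannot block the $|X|+1=3$ nearly-disjoint paths, which necessarily use $b_5$ or $b_6$). To complete your argument you would need to implement the ``single max-flow computation'' you mention in essentially this way; as written, the proposal has a genuine gap at its core claim.
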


It is not difficult to show that this result is essentially tight: we show this formally in the next lemma, whose proof is deferred to the appendix. 

\begin{restatable}{lemma}{lowerbound}
For any positive integer $k$, the following two statements hold.
\begin{enumerate}
\item There exist infinitely many positive integers $c$, such that for each $c$, there is a directed graph $G_{c,k}$ and disjoint subsets $S, T \subseteq V(G)$ of vertices with $|T| = c$ and $|S| = 1$ so that there are at least ${|T| \choose \leq k}$ $A$-$B$ important separators of size at most $k$ across all choices $A \subseteq S$, $B \subseteq T$.

\item There exist infinitely many positive integers $c$, such that for each $c$, there is a directed graph $G_{c,k}$ and disjoint subsets $S,T \subseteq V(G)$ of vertices with $|S| = c$ and $|T| = k + 1$ so that there are at least ${|S| \choose \leq k}$ $A$-$B$ important separators of size at most $k$ across all choices $A \subseteq S$, $B \subseteq T$.

\end{enumerate}
\end{restatable}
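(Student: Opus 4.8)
The plan is to build two explicit gadgets --- a star and a complete bipartite digraph --- in which the important separators are precisely prescribed subsets of the terminal sets, reducing the count to the number of small subsets of $S$ or of $T$.

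For the first statement, let $G_{c,k}$ be the star with one source $s$, sinks $t_1,\dots,t_c$, and an arc $s\to t_i$ for every $i$; set $S=\{s\}$ and $T=\{t_1,\dots,t_c\}$. I claim that for every nonempty $B\subseteq T$ with $|B|\le k$, the vertex set $B$ itself is an important $\{s\}$-$B$ separator. It is an $\{s\}$-$B$ separator, and it is inclusion-minimal, since deleting any $t_i\in B$ from it restores the $s\to t_i$ arc. For importance, note that $R(B)=V(G)\setminus B$ is the largest reachable set that any $\{s\}$-$B$ separator can have: if a separator $Y$ left some $t_i\in B$ reachable from $s$, then $s,t_i\notin Y$, but then $s\to t_i$ is an $s$-to-$B$ path, so $Y$ is not a separator. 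Hence no $Y$ has $R(B)\subsetneq R(Y)$, so $B$ is important. Distinct sets $B$ give distinct vertex sets, so this yields $\binom{|T|}{\le k}$ important separators.

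For the second statement, let $G_{c,k}$ be the complete bipartite digraph with sources $s_1,\dots,s_c$, sinks $t_1,\dots,t_{k+1}$, and all arcs $s_i\to t_j$; set $S=\{s_1,\dots,s_c\}$ and $T=\{t_1,\dots,t_{k+1}\}$. The key observation is that any $A$-$T$ separator $Y$ with $|Y|\le k$ must contain all of $A$: if some source $s_i\in A$ is not deleted, then killing all arcs out of $s_i$ forces $\{t_1,\dots,t_{k+1}\}\subseteq Y$, which has size $k+1>k$. Consequently, for every nonempty $A\subseteq S$ with $|A|\le k$, the set $A$ is the unique $A$-$T$ separator of size at most $k$; it is inclusion-minimal (deleting $s_i$ re-exposes $s_i\to t_1$) and it is (vacuously) important, as the only separator $Y$ with $|Y|\le|A|\le k$ to compare against is $A$ itself. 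The $A$'s are distinct and are all witnessed by the single choice $B=T$, giving $\binom{|S|}{\le k}$ important separators.

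The only nontrivial content is the case analysis behind these two observations, and the one subtlety is that the definition of important separator quantifies over \emph{all} separators $Y$ of size at most $|X|$, not just the minimal ones; one must therefore check that no separator in the gadget can enlarge $R(\cdot)$ past the claimed value, which is immediate since $R(Y)=\emptyset$ as soon as $Y$ contains the relevant source set and otherwise $Y$ is not a separator at all. I do not anticipate a real obstacle, as the constructions are chosen to be minimal. One may then remark that these examples, together with the classical $4^k$ lower bound for a fixed source/sink pair, show that all three factors of $\beta(|S|,|T|,k)$ are necessary, up to the constant in the exponent and the gap between $\binom{|T|}{\le k}$ and $\binom{|T|}{\le 2k}$.
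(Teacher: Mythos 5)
Your proposal is correct and follows essentially the same construction as the paper: a star for the first part, and a star with its center blown up to $k+1$ vertices (your complete bipartite digraph from $S$ to $T$) for the second, with the prescribed subsets of $T$ (resp.\ $S$) themselves serving as the important separators. The only cosmetic slip is the claim that $A$ is the unique $A$-$T$ separator of size at most $k$ --- it is only unique among separators of size at most $|A|$ --- but that is exactly the comparison the definition of importance requires, as you note, so the argument stands.
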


In  the next few subsections, we show that this simple structural result has various interesting consequences, allowing us to obtain many new results which were previously only known for undirected graphs. 

\subsection{Detection Sets and Sample Sets in Directed Graphs} 

Kleinberg~\cite{kleinberg2004detecting} introduced the concept of a detection set in a graph. The principal motivation for this concept was that of a network failure:  given a network, can we compute a small set of representative nodes so that for any  small set of failure nodes that cuts communication between two large subsets of nodes, there are two representatives that cannot communicate? Formally, Kleinberg defined an $(\epsilon, k)$ detection set for an undirected graph $G$ as a set of terminals $T \subseteq V(G)$ which satisfies the following property. First, define a network failure as a set of 
vertices $X$ with $|X| \leq k$ so that $G \setminus X$ can be partitioned into $A \cup B$ where $|A|, |B| \geq \epsilon n$ and there are no edges between $A$ and $B$. Then for every such (vertex) failure set $X$, the set $T$ must intersect with at least two components of $G \setminus X$. Kleinberg~\cite{kleinberg2004detecting} showed that there is a detection set of size $\OO(\frac{k^3}{\epsilon} \log \frac{1}{\epsilon})$. For the edge failure version, he showed a bound of $\OO(\frac{k}{\epsilon} \log \frac{1}{\epsilon})$ which was subsequently improved to $\OO(\frac{k}{\lambda}\frac{1}{\epsilon} \log \frac{1}{\epsilon})$ by Kleinberg et al.~\cite{kleinberg2008network}, where $\lambda$ is the size of the global minimum cut in the graph. Feige and Mahdian~\cite{fm06} showed an improved bound on (vertex) detection sets: they show a bound of $\OO(\frac{k}{\epsilon} \log \frac{1}{\epsilon})$. Further, they showed a bound of $\OO(\frac{k}{\epsilon})$ for the edge failure version, removing the dependence on $\log \frac{1}{\epsilon}$.

Feige and Mahdian~\cite{fm06} also studied a strengthening of this notion of detection sets called \emph{sample sets} for undirected graphs: on a high level, these are a small set of terminals $T$ which represent all the small cuts of the graph in proportion to their size, up to some additive error. Concretely, given an undirected graph $G$ and a parameter $k$, they show that there exists a set of terminals $T$ with $|T| = \OO(\frac{k}{\epsilon^2} \log \frac{1}{\epsilon})$, such that for any vertex set $X$ of size $|X| \leq k$ and every connected component $C$ of $G \setminus X$, we have $||C| - \frac{n}{|T|} |C \cap T|| \leq \epsilon n$. Further, the set $T$ can be obtained by simple random sampling, and~\cite{fm06} shows that any random subset $T \subseteq V(G)$ of size $\OO(\frac{k}{\epsilon^2} \log \frac{1}{\epsilon})$ is a sample set with constant probability.

The most important feature of these results is that the size of the detection set (or sample set) $T$ does not depend on $n$. It is then natural to ask if such results are possible for directed graphs. Given a digraph $G$, let us define a network failure as a set of vertices $X$ with $|X| \leq k$, so that $G \setminus X$ can be partitioned into two parts $A$ and $B$, each of size at least $\epsilon n$, such that there is no arc from $B$ to $A$. The analogous question for detection sets is then: is there a set of terminals $T$ with $|T| = f(k, \epsilon)$, such that for any network failure $X$ there exists a pair $t_1, t_2 \in T$ so that there is no path from $t_1$ to $t_2$ in $G \setminus X$? Similarly, the analogous question for sample set becomes: is there a set of terminals $T$ with $|T| = g(k, \epsilon)$ for some function $f$, so that for any set of vertices $X$ with $|X| \leq k$, and any \emph{strongly connected component (SCC)} $C$ of $G \setminus X$, we have $||C| - \frac{n}{|T|} |C \cap T|| \leq \epsilon n$? We answer both these questions in the affirmative, showing that one can in fact asymptotically match the same bound as in the undirected case, with $f(k, \epsilon) = \OO(\frac{k}{\epsilon} \log \frac{1}{\epsilon})$ and $g(k, \epsilon) = \OO(\frac{k}{\epsilon^2} \log \frac{1}{\epsilon})$.

\begin{restatable}{theorem}{detectionset}\label{thm:detection_set}
For any directed graph $G$, given parameters $\epsilon \in (0,1)$ and $k \in \mathbb{N}$, there is an absolute constant $c$ such that there is an $(\epsilon,k)$ detection set of size $f(k,\epsilon) = c\frac{k}{\epsilon} \log \frac{1}{\epsilon}$. Further, a random set of $f(k,\epsilon)$ vertices must be an $(\epsilon,k)$ detection set with probability at least $\frac{2}{3}$.
\end{restatable}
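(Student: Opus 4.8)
The plan is to establish the ``random set'' assertion (the existence claim then follows immediately). So I would fix a uniformly random set $T$ of $|T| = c\,\tfrac k\epsilon\log\tfrac1\epsilon$ vertices and bound by $\tfrac13$ the probability that some network failure $X$ fails to be detected by $T$. The first step is to reformulate detection in terms of strongly connected components: $T$ detects a network failure $X$ iff $T\setminus X$ is not contained in a single SCC of $G\setminus X$ — if $T\setminus X$ meets two distinct SCCs, ordering them along the condensation DAG produces a pair of terminals with no path between them in $G\setminus X$, and conversely a single SCC mutually connects everything it contains. Next I would observe that a network failure $X$ forces $G\setminus X$ to have no SCC of size $>(1-\Omega(\epsilon))n$ (otherwise $V\setminus X$ cannot be split into two forward-closed-complementary parts of size $\ge\epsilon n$ each), and that, since $|T|\gg k$, an undetected $X$ must trap $T\setminus X$ inside a single SCC $C$ with $|V\setminus X\setminus C|\ge\Omega(\epsilon n)$.

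Second, I would extract from such a trap a large ``closed shadow'' that $T$ misses, together with a single representative of $C$. Split $V\setminus X\setminus C$ into the vertices unreachable from $C$ and the vertices that cannot reach $C$ in $G\setminus X$; the larger part $L$ has $\ge\Omega(\epsilon n)$ vertices, is disjoint from $T$, and — after possibly replacing $G$ by its reverse $G^R$, costing a factor $2$ — is forward-closed in $G\setminus X$ with out-boundary inside $X$ and is separated from every single vertex of $C$ by $X$. Picking $q\in T\setminus X\subseteq C$ (possible as $|T\setminus X|\ge|T|-k\ge1$), we get that $X$ is a $\{q\}$--$L$ separator. The crux is then to replace $X$ by a \emph{canonical} separator from a short list: via Theorem~\ref{thm:importantbound}, the important separators of size $\le k$ between $\mathrm{poly}(1/\epsilon)$-size vertex sets number only $\beta(\mathrm{poly}(1/\epsilon),\mathrm{poly}(1/\epsilon),k)=(1/\epsilon)^{O(k)}$; accessing the far side of the separator through a $\mathrm{poly}(1/\epsilon)$-size auxiliary random sample $W$ that (whp) meets $L$, and ranging over the $|T|$ choices of $q$, this leaves only $|T|\cdot(1/\epsilon)^{O(k)}$ candidate closed shadows $L^\ast$, each of size $\Omega(\epsilon n)$, such that every undetected network failure forces $T$ to miss (essentially) one of them.

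Finally, for each fixed $L^\ast$ with $|L^\ast|=\Omega(\epsilon n)$ we have $\Pr[T\cap L^\ast=\emptyset]\le(1-\Omega(\epsilon))^{|T|}\le e^{-\Omega(\epsilon|T|)}$, so a union bound over all $|T|\cdot(1/\epsilon)^{O(k)}$ candidates and over $\{G,G^R\}$ gives failure probability $|T|\cdot(1/\epsilon)^{O(k)}\cdot e^{-\Omega(\epsilon|T|)}$. This drops below $\tfrac13$ as soon as $\epsilon|T|$ exceeds a suitable constant times $k\log\tfrac1\epsilon+\log|T|=O\!\left(k\log\tfrac1\epsilon\right)$, i.e. as soon as $|T|=c\,\tfrac k\epsilon\log\tfrac1\epsilon$ for a large enough absolute constant $c$; folding $W$ into $T$ only changes $c$.

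The main obstacle is exactly the reduction in the second step: passing from the $\exp(\Omega(n))$ possible shadows $L$ to $(1/\epsilon)^{O(k)}$ canonical ones using Theorem~\ref{thm:importantbound}. Two points need care. First, one must pass to a single representative $q\in T$ and use a $\mathrm{poly}(1/\epsilon)$-size set (not $T$ itself, which has size $\gg k$) as the endpoint set fed to the theorem, so that the count is $(1/\epsilon)^{O(k)}$ rather than $(k/\epsilon)^{O(k)}$ — otherwise one only obtains $f(k,\epsilon)=O(\tfrac k\epsilon\log\tfrac k\epsilon)$, losing the clean $\log\tfrac1\epsilon$. Second, ``whp $W$ meets $L$'' is mildly circular, since $L$ itself ranges over exponentially many sets; it has to be untangled by first bounding the number of canonical shadows and only then choosing $|W|$, and treating separately the residual bad events in which $W$ misses its shadow — here it is essential that a detection failure requires a trapping SCC, not merely a missed large set, so that the naive obstruction (a huge antichain of sinks, where no small $T$ can hit every large subset) does not occur. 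Making the ``replace $X$ by an important separator'' step preserve both the largeness of $L$ and the fact that $T$ misses it, despite the separator changing, is the technical heart of the argument.
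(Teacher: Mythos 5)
Your overall instinct is the right one: the only place the graph structure should enter is in bounding the number of distinct ``failure patterns'' visible on a small set of vertices, and Theorem~\ref{thm:importantbound} is indeed the tool for that (this is exactly how the paper uses it, through the single-source reachability profile $\Reach^k(s,T')$). But the probabilistic scaffolding you wrap around it has a genuine gap, and it sits precisely at the step you yourself flag as ``the technical heart.'' Your plan needs a family of at most $|T|\cdot(1/\epsilon)^{O(k)}$ canonical shadows, each of size $\Omega(\epsilon n)$, fixed \emph{independently of the random draw}, such that every undetected failure forces $T$ to miss one of them. The canonicalization you propose --- replace the failure set $X$ by an important $\{q\}$--$B$ separator with $B$ a small subset of an auxiliary sample $W$ --- does not deliver this: an important separator is by definition pushed as far toward the sink side as possible, so the region it cuts off can collapse to a set of size comparable to $|B|+k$ rather than $\Omega(\epsilon n)$; largeness of the original shadow $L$ is not inherited by the canonical one, and the final union bound $\Pr[T\cap L^\ast=\emptyset]\le(1-\Omega(\epsilon))^{|T|}$ has nothing large to hit. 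Moreover the candidate family is built from $q\in T$ and from $W$, i.e.\ from the random objects themselves, so the events ``$T$ misses $L^\ast$'' are correlated with the family over which you union bound; decoupling $q$ is routine, but decoupling $W$ while still guaranteeing that $W$ meets every relevant $L$ is exactly the circularity you acknowledge and do not resolve. Carried out correctly, this untangling is the double-sampling (ghost sample) proof of the $\epsilon$-net theorem, which needs a growth-function bound on the traces of the shadow family on a small sample --- i.e.\ it needs the statement you have not proved.

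The paper avoids all of this by proving precisely that missing statement and then outsourcing the probability. It defines the set family $\SS$ of (a) SCCs of $G\setminus F$ and (b) unions of all-but-one SCC of $G\setminus F$, over all $|F|\le k$, shows every $(\epsilon,k)$-net for $\SS$ is a detection set, and bounds the VC dimension of $\SS$ by $O(k)$: Lemma~\ref{lemma:wlgclosest} replaces an arbitrary deletion set by an important separator \emph{with the same reachability trace on the terminals} (no largeness needs to be preserved, only the trace), Theorem~\ref{thm:importantbound} then gives $|\Reach^k(s,T')|\le 4^k\binom{|T'|}{\le 2k}$, and a shattered set of size $\gg k$ would contradict this count. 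The standard $\epsilon$-net theorem (Theorem~\ref{thm:eps_net}) then immediately yields that a random set of size $O(\frac{k}{\epsilon}\log\frac1\epsilon)$ is a net, hence a detection set, with probability $2/3$. So the fix for your argument is to aim the important-separator machinery at bounding the number of traces on the sampled vertices (a VC/growth-function bound) rather than at constructing large canonical shadows in the graph, and to let the $\epsilon$-net theorem do the sampling argument for you.
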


\begin{restatable}{theorem}{sampleset}\label{thm:sample_set}
For any directed graph $G$, given parameters $\epsilon \in (0,1)$ and $k \in \mathbb{N}$, there is an absolute constant $c$ such that there is an $(\epsilon,k)$ sample set of size $f(k,\epsilon) = c\frac{k}{\epsilon^2} \log \frac{1}{\epsilon}$. Further, a random set of $f(k,\epsilon)$ vertices must be an $(\epsilon,k)$ sample with probability at least $\frac{2}{3}$.
\end{restatable}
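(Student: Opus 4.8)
The plan is to show that a uniformly random set $T$ of $N := f(k,\epsilon) = c\,\epsilon^{-2}k\log(1/\epsilon)$ vertices is an $(\epsilon,k)$ sample set with probability at least $\tfrac23$; the existence claim follows. So we must bound the probability that there is some $X\subseteq V(G)$ with $|X|\le k$ and some strongly connected component $C$ of $G\setminus X$ with $\bigl||C|-\tfrac{n}{N}|C\cap T|\bigr| > \epsilon n$. For a \emph{single fixed} pair $(X,C)$ this is routine: $|C\cap T|$ is binomial with mean $\tfrac{N}{n}|C|$, so a Chernoff bound gives deviation exceeding $\tfrac{\epsilon}{2}n$ with probability $e^{-\Omega(\epsilon^{2}N)}$. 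The tight regime is $|C|=\Theta(n)$, where only a $\Theta(\epsilon)$ \emph{relative} error is available --- this is precisely why $N$ must scale like $\epsilon^{-2}$ here rather than like $\epsilon^{-1}$ as for detection sets, where one only needs $|C\cap T|\ge 1$, i.e.\ failure probability $(1-\epsilon)^{N}$; components with $|C|<\epsilon n$ are trivially fine on one side and controlled by a simple upper-tail bound on the other. The only real difficulty is that there are exponentially (in $n$) many pairs $(X,C)$, so a naive union bound is hopeless.

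The key move is to reduce to a \textbf{bounded canonical family}. By Theorem~\ref{thm:importantbound} (applied to $G$, and separately to the reverse graph $G^{R}$, with source and sink sets both equal to the random set $T$), the collection of reachability sets $R(X^{*})$ of important $A$-$B$ separators of size $\le k$ over disjoint $A,B\subseteq T$ has size at most $\OO(\beta(N,N,k)) = \OO\!\bigl(4^{k}\binom{N}{\le k}\binom{N}{\le 2k}\bigr)$ --- crucially a quantity depending only on $N$ and $k$, not on $n$. Call these the \emph{canonical sets}. A Chernoff bound and a union bound over the at most $\beta(N,N,k)^{2}$ pairwise intersections of canonical sets then show that, with probability at least $\tfrac23$, \emph{every} intersection $Y_{1}\cap Y_{2}$ of two canonical sets satisfies $\bigl||Y_{1}\cap Y_{2}|-\tfrac{n}{N}|Y_{1}\cap Y_{2}\cap T|\bigr|\le\tfrac{\epsilon}{4}n$, as long as $N\ge c\,\epsilon^{-2}\log\beta(N,N,k) = \OO(\epsilon^{-2}k\log(k/\epsilon))$, which is $\OO(\epsilon^{-2}k\log(1/\epsilon))$ in the relevant regime. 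Since the canonical family depends on $T$, this is made rigorous by a standard two-phase revealing of $T$: first reveal the $\OO(k)$ terminals serving as separator endpoints (pinning down the relevant canonical sets), then reveal the rest; equivalently one also unions over which indices of $T$ play those roles, a factor already absorbed. It is here that the full strength of Theorem~\ref{thm:importantbound} is used: single-vertex sources and sinks do not suffice, because the relevant sink below is a general subset of $T$.

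The heart is the following \textbf{Key Lemma}: the sample-set condition at \emph{every} pair $(X,C)$ is implied, up to additive error $\OO(\epsilon n)$, by the sample-set condition at the canonical sets and their pairwise intersections. Fix $X$ with $|X|\le k$ and a component $C$ of $G\setminus X$; we may assume $|C|\ge\epsilon n$, and --- since a random $T$ hits every such $C$ for every such $X$ with high probability, itself a detection-type fact proved by the same union bound --- fix a terminal $t^{*}\in T\cap C$. Then $C=R^{+}\cap R^{-}$, where $R^{+}$ (resp.\ $R^{-}$) is the set of vertices reachable from (resp.\ reaching) $t^{*}$ in $G\setminus X$. Put $B^{+}:=T\setminus R^{+}$; then $X$ is a $\{t^{*}\}$-to-$B^{+}$ separator of size $\le k$, so there is an important $\{t^{*}\}$-to-$B^{+}$ separator $X^{*}_{+}$ with $R(X^{*}_{+})\supseteq R^{+}$, and since $X^{*}_{+}$ separates $t^{*}$ from $B^{+}$ we get the crucial identity $R(X^{*}_{+})\cap T=R^{+}\cap T$: the canonical superset $\bar R^{+}:=R(X^{*}_{+})$ \emph{sees exactly the same terminals as} $R^{+}$. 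Running the same construction in $G^{R}$ produces a canonical set $\bar R^{-}\supseteq R^{-}$ with $\bar R^{-}\cap T=R^{-}\cap T$. Hence $|C\cap T|=|R^{+}\cap R^{-}\cap T|=|\bar R^{+}\cap\bar R^{-}\cap T|$, which by the previous paragraph equals $\tfrac{N}{n}|\bar R^{+}\cap\bar R^{-}|\pm\tfrac{\epsilon}{4}n$.

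The remaining step --- and the \textbf{main obstacle} --- is to replace $|\bar R^{+}\cap\bar R^{-}|$ by $|C|=|R^{+}\cap R^{-}|$, since a priori $\bar R^{+}$ can be much larger than $R^{+}$. The structural fact that makes this possible is that $(\bar R^{+}\cap\bar R^{-})\setminus C\subseteq(\bar R^{+}\setminus R^{+})\cup(\bar R^{-}\setminus R^{-})$, and both difference sets are \emph{terminal-free} by the identities above, while with high probability no canonical set of size $\ge\epsilon n$ is terminal-free. Turning this into a quantitative bound --- so that the excess $|\bar R^{+}\cap\bar R^{-}|-|C|$ is genuinely $\OO(\epsilon n)$, not $\OO(k\cdot\epsilon n)$ --- is the delicate technical core, handled by generalizing the undirected detection/sample set analyses of \cite{kleinberg2004detecting,fm06} (for instance by running the argument at each scale $2^{-j}n$ and peeling off canonical terminal-free pieces one at a time), the one genuinely new ingredient being that important separators are now taken between all pairs of subsets of $T$. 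Granting this, everything assembles: each of the $\OO(\beta(N,N,k)^{2})$ fixed canonical sets and intersections deviates by more than $\tfrac{\epsilon}{4}n$ with probability $e^{-\Omega(\epsilon^{2}N)}$, the union bound costs $2^{\OO(k\log(k/\epsilon))}$, and $N=\OO(\epsilon^{-2}k\log(1/\epsilon))$ forces the total failure probability below $\tfrac13$, giving Theorem~\ref{thm:sample_set}. The detection-set bound (Theorem~\ref{thm:detection_set}) is the same argument with the Chernoff estimate replaced by the weaker $(1-\epsilon)^{N}$ bound on missing a large component, so $N=\OO(\epsilon^{-1}k\log(1/\epsilon))$ already suffices.
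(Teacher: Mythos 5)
Your proposal has a genuine gap at the step you yourself flag as the ``delicate technical core,'' and it is not a deferrable technicality --- it is the whole difficulty. Your union bound only gives concentration for the \emph{canonical} sets $\bar R^{+},\bar R^{-}$ and their intersections, so from $C\cap T=\bar R^{+}\cap\bar R^{-}\cap T$ you get $\tfrac{n}{N}|C\cap T|\approx|\bar R^{+}\cap\bar R^{-}|\ge|C|-\tfrac{\epsilon}{4}n$, i.e.\ only the lower-bound direction of the sample property. The upper-bound direction needs $|\bar R^{+}\cap\bar R^{-}|\le|C|+\OO(\epsilon n)$, and nothing you have controls this: the excess $(\bar R^{+}\cap\bar R^{-})\setminus C$ is terminal-free, but it is \emph{not} a canonical set (it is the difference of a canonical set and one of the exponentially many true reachability sets $R^{\pm}$), so ``no large canonical set is terminal-free'' does not apply to it, and a terminal-free set outside the bounded family can be large without contradicting any of your high-probability events. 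The dyadic-scale/peeling sketch is not carried out, and it is exactly where a new idea would be required; ``granting this'' is granting the theorem. (A smaller, fixable slip: Theorem~\ref{thm:importantbound} requires disjoint source and sink sets, so you cannot take both equal to $T$; you need, as in the paper, a single source $t^{*}\in T$ against sinks in $T\setminus\{t^{*}\}$ with a union over $t^{*}$.)

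The paper avoids this obstacle entirely by changing what is counted. It never compares $|C|$ with the size of a canonical superset; instead it shows that the family of SCCs (and unions of all-but-one SCC) after deleting $\le k$ vertices has VC-dimension $\OO(k)$, by arguing that every shattered pattern $P=C\cap T$ satisfies $P\setminus\{s\}=P_{1}\cap P_{2}$ with $P_{1}\in\Reach^{k}_{G}(s,T\setminus\{s\})$, $P_{2}\in\Reach^{k}_{G^{R}}(s,T\setminus\{s\})$, and each reachability profile has size at most $4^{k}\binom{|T|}{\le 2k}$ because every realizable pattern is realized by an important separator (your identity $R(X^{*}_{+})\cap T=R^{+}\cap T$ is essentially this step, so the first half of your argument parallels the paper's). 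The uniform deviation over all pairs $(X,C)$ is then delivered by the off-the-shelf $\epsilon$-sample theorem, whose double-sampling proof handles exactly the exponential multiplicity that your direct Chernoff-plus-union-bound route cannot. If you want to salvage your approach, you would in effect have to reprove that symmetrization argument by hand; the cleaner repair is to convert your pattern-counting observation into the VC-dimension bound and invoke the $\epsilon$-sample theorem.
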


While we believe that this is of independent interest, we also show a similar connection to parameterized and approximation algorithms for finding balanced cuts in directed graphs along the lines of the undirected case as in~\cite{fm06}, as discussed in the following subsections. In fact, by using a slightly more nuanced analysis, our results generalize that of~\cite{fm06} even for undirected graphs. 

Finally, we note that one can easily prove that there exists an absolute constant $c$ such that a random subset of size $c\frac{k \log n}{\epsilon^2}$ is a sample set with constant probability in a directed graph. This follows from a simple application of Chernoff bounds and a union bound noting that the number of vertex sets of size at most $k$ is at most ${n \choose k}$ (${m \choose k}$ in the case of edge sets). However, for most applications this bound is too weak. For instance, our parameterized algorithm for \DB{} has an exponential dependency in the size of the sample set, and hence to show results parameterized by $k$, it is essential that the size of the sample set does not depend on $n$.

\subsection{Directed Balanced Cuts} One of the most well-studied graph partitioning problems, both from the parameterized and approximation algorithms point of view, is the problem of \MB{}. Given an undirected graph $G$ and a parameter $k$, the goal of the \MB{} problem is to obtain a partition of the graph into two equal sized parts, such that the number of cut edges is at most $k$. \MB{} was shown to be fixed-parameter tractable by Cygan et al.~\cite{cygan2014minimum} and the current best parameterized algorithm is due to~\cite{cygan2020randomized} who show an algorithm with running time $2^{\OO(k \log k)}n^{\OO(1)}$. Räcke~\cite{racke2008optimal} gave an $\OO(\log n)$ approximation for (the optimization version of) \MB. If only an approximate bisection where both sides have $\Omega(n)$ vertices is desired, then this problem is essentially the \BS{} problem which is known to have an FPT algorithm running in time $2^{\OO(k)}(m + n)$ due to Feige and Mahdian~\cite{fm06} and an $\OO(\sqrt{\log n})$ approximation algorithm in polynomial time using the seminal result of Arora, Rao and Vazirani~\cite{arora2009expander}. Feige et al.~\cite{feige2005improved} showed that in fact this guarantee can be made $\OO(\sqrt{\log k})$, and also showed that one can compute vertex separators with the same approximation ratio. In directed graphs, the \MB{} problem has been typically studied as \DBS: Given a directed graph $G$ with even number of vertices, is it possible to partition the vertex set into two equal parts $A$ and $B$ so that the number of arcs from $A$ to $B$ is at most $k$? This question was first raised by Feige and Yahalom~\cite{feige2003complexity}. When $k = 0$, they referred to this problem as \OC{}, and showed that even this problem is NP-hard. However, given a \OC{} instance which admits a solution, if one relaxes the requirement so that the algorithm can output a partition $(A',B')$ of $V(G)$ such that there are no arcs directed from $A'$ to $B'$ and $||A'| - |B'|| \leq \epsilon n$ where $\epsilon = \Omega(\frac{1}{\log n})$, the problem now becomes tractable, and they show a polynomial time algorithm for \OC. Madathil et al.~\cite{madathil2021sub} showed that the \DBS{} problem is FPT with respect to $k$, even when one requires $|A| = |B|$ exactly, on a subclass of directed graphs called semi-complete digraphs, which are the class of directed graphs where for every pair of vertices $u$ and $v$, there is an arc from $u$ to $v$ or an arc from $v$ to $u$. In terms of approximation algorithms, Agarwal et al.~\cite{agarwal2005log} showed an $\OO(\sqrt{\log n})$ approximation for \DB{} , the directed analogue of \BS{}, while Even et al.~\cite{even1999fast} showed an $\OO(\log k)$ approximation, which is the best known approximation guarantee depending only on $k$.

However, there is no prior work on the fixed-parameter tractability of \DB{} or \DBS{} for general $k$ on general directed graphs, even when we relax the requirement of finding a bisection to that of finding an \emph{approximate bisection}, that is, find a partition $(A', B')$ with $|A'|, |B'| = \Omega(n)$ so that the number of arcs from $A'$ to $B'$ is at most $k$.

Our result makes the first progress on this problem. Before we state our results, we define the \DB{} problem formally. Since all our results work for both the vertex and edge versions, we state them together as one problem. We adapt our definition from the definition of \BS{} in~\cite{fm06}, whose results we generalize. For completeness, we recall the definition of \BS{} in~\cite{fm06}.
\defparproblem{\BS}{Undirected graph $G = (V,E)$}{$k,b$}{Is there a set of vertices (edges) $F$ with $|F| \leq k$, so that in $G \setminus F$, every connected component has size at most $bn$?}

\defparproblem{\DB}{Directed graph $G = (V,E)$}{$k,b$}{Is there a set of vertices (arcs) $F$ with $|F| \leq k$, so that in $G \setminus F$, every strongly connected component has size at most $bn$?}

For the sake of clarity and comparison, we state the main result of~\cite{fm06}. Given an undirected graph $G$, we say that a set of vertices/edges $F$ is a \emph{$b$-balanced separator} if every connected component of $G \setminus F$ has size at most $bn$.

\begin{theorem}[\cite{fm06}]\label{thm:fm}
Given an instance of \BS{} with $\frac{2}{3} \leq b \leq 1$ there is a randomized algorithm, that for any $\epsilon > 0$, runs in time $2^{\OO\left(k\log \left(\frac{1}{\epsilon}\right)/{\epsilon^2}  \right)}(m + n)$ and with constant probability outputs either (a) a set of vertices (edges) $F'$ of size at most $k$ such that every connected component of $G \setminus F'$ has size at most $(b + \epsilon)n$ or (b) concludes correctly that there is no $b$-balanced separator of size at most $k$.
\end{theorem}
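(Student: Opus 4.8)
The plan is to adapt the Feige--Mahdian strategy: use a sample set to reduce the problem to a bounded number of labelled multiway-cut instances, solve each with a standard parameterized subroutine, and then verify the size bound directly on the graph. As the first step I would take a uniformly random $T\subseteq V(G)$ with $|T| = \OO\!\bigl(\tfrac{k}{\epsilon^{2}}\log\tfrac1\epsilon\bigr)$; by the sample-set guarantee in the undirected setting this $T$ is, with probability at least $\tfrac23$, an $(\epsilon/4,k)$ sample set, meaning that for every $X\subseteq V(G)$ with $|X|\le k$ and every connected component $C$ of $G\setminus X$ one has $\bigl||C| - \tfrac{n}{|T|}|C\cap T|\bigr| \le \tfrac{\epsilon}{4}n$. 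Fix such a $T$; everything after this point is deterministic.

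Next I would reformulate the question: $G$ has a $b$-balanced separator of size $\le k$ iff $V(G)$ can be partitioned into parts of size $\le bn$ whose total boundary (cut edges in the edge version, boundary vertices in the vertex version) has size $\le k$. Merging two parts never increases this boundary, and since $b\ge\tfrac23$ any collection of parts each of size $\le \tfrac{n}{100}$ can be regrouped into at most two ``blobs'' of size $\le bn$, while there are fewer than $100$ parts of size $>\tfrac{n}{100}$. Hence we may assume the optimal partition has $r=\OO(1)$ parts $V_1^{*},\dots,V_r^{*}$, each of size $\le bn$, with total boundary $\le k$.

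Now I would enumerate all $r^{|T|} = 2^{\OO(|T|)}$ maps $\phi\colon T\to\{1,\dots,r\}$, with classes $T_1,\dots,T_r$. For each $\phi$, compute a minimum edge/vertex multiway cut $F'$ separating $T_1,\dots,T_r$ pairwise, using the important-separator-based parameterized algorithm running in $2^{\OO(k)}(m+n)$ time; discard $\phi$ if the cut exceeds $k$, and otherwise compute the components of $G\setminus F'$ and output $F'$ if every component has size $\le(b+\epsilon)n$. If no $\phi$ succeeds, output ``no $b$-balanced separator of size $\le k$''. For correctness, condition on $T$ being a good sample set. If a $b$-balanced separator of size $\le k$ exists, then for the labelling $\phi$ with $T_i = T\cap V_i^{*}$ the boundary of the partition from the previous paragraph is a valid multiway cut of size $\le k$, so the algorithm returns some $F'$ with $|F'|\le k$ separating the classes pairwise; every component $C$ of $G\setminus F'$ meets at most one class, so $|C\cap T|\le|T_i|$ (or $C$ is terminal-free), and $|T_i| = |T\cap V_i^{*}| \le \tfrac{|T|}{n}\bigl(|V_i^{*}|+\tfrac{\epsilon}{4}n\bigr)\le(b+\tfrac{\epsilon}{4})|T|$, so the sample-set bound applied to $C$ gives $|C|\le\tfrac{n}{|T|}|C\cap T| + \tfrac{\epsilon}{4}n\le(b+\tfrac{\epsilon}{2})n\le(b+\epsilon)n$ and the verification passes. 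Conversely the algorithm only ever outputs an $F'$ that literally certifies option~(a), so whenever it reports option~(b) this is correct. The running time is $2^{\OO(|T|)}\cdot 2^{\OO(k)}(m+n) = 2^{\OO(k\log(1/\epsilon)/\epsilon^{2})}(m+n)$ since $k=\OO(|T|)$.

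The main obstacle is the coupling between the reformulation and the cut that the subroutine returns. One must verify that merging small parts only decreases the boundary (so that the budget $k$ is preserved) in both the edge and the vertex versions, bound the number of parts after merging by $\OO(1)$ rather than $\OO(1/\epsilon)$ so that the labelling enumeration stays at $2^{\OO(|T|)}$ and not $(1/\epsilon)^{\OO(|T|)}$, and check that the sample-set estimate can legitimately be applied component-by-component to the cut produced by the (balance-oblivious) minimum multiway cut --- it is precisely this last point that forces the $(b+\epsilon)n$ slack rather than $bn$. The remaining ingredients --- random sampling for the sample set and the $2^{\OO(k)}(m+n)$-time multiway-cut algorithm built from important separators --- are standard.
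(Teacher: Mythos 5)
Your overall architecture (random sample set, enumeration of terminal labellings, a parameterized cut subroutine, explicit verification of the balance on $G\setminus F'$) is the right one, but there is a genuine gap in the completeness step. You bound $|T_i| = |T\cap V_i^{*}| \le \frac{|T|}{n}\bigl(|V_i^{*}|+\frac{\epsilon}{4}n\bigr)$ for every class of your merged optimal partition, yet after the merging a class $V_i^{*}$ is in general a \emph{union of many small components} of $G\setminus F$ (e.g.\ $\Omega(n)$ singletons), while the guarantee you conditioned on (\Cref{def:sample_set}) speaks only about a \emph{single} component of $G\setminus X$ for $|X|\le k$. Summing the per-component bound over the components of a blob multiplies the error by the number of components, which is unbounded, and no strengthening of the sample-set property to arbitrary unions is possible: for a star with its center deleted, every subset of leaves is a union of components, so that family has VC dimension $n-1$ and admits no $\epsilon$-sample of size $f(k,\epsilon)$. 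Since you declared everything after fixing $T$ deterministic, this inequality is unsupported as written. It is fixable in two ways: (i) your $\OO(1)$ blobs are determined by the optimal separator alone, independently of the random sample, so a direct Chernoff bound (folded back into the probabilistic phase, as an event separate from the sample-set event) gives $|T\cap V_i^{*}|\le \frac{|T|}{n}|V_i^{*}|+\frac{\epsilon}{4}|T|$ for all $i$ simultaneously; or (ii) define the classes by a stopping rule on \emph{terminal counts} rather than vertex counts, growing a group until it contains about $b|T|$ terminals and splitting off the last component as its own class, so that every class has at most $(b+\epsilon)|T|$ terminals either by construction or by the single-component sample bound --- this is exactly how the paper's directed argument (\Cref{lemma:aux}) sidesteps any estimate on unions.

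Two smaller loose ends: in the vertex version the optimal $F$ may contain terminals, so your map $\phi$ should have an extra label ``deleted'' (or you should use the deletable-terminal variant of multiway cut) for the completeness argument to produce a feasible labelling; and in the edge version the sample-set property is stated for vertex deletions, so you need the standard observation that a component of $G$ minus $k$ edges is also a component of $G$ minus at most $k$ endpoint vertices. Note also that the paper does not reprove this theorem: it quotes it from Feige--Mahdian and instead proves the directed generalization (\Cref{thm:main}), whose grouping-by-terminal-count and skew-separator reduction is precisely the device that avoids the union-of-components estimate your write-up currently relies on.
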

The following theorem, which directly generalizes the result of~\Cref{thm:fm} is our main result. Given a directed graph $G$, we say that a set of vertices/arcs $F$ is a \emph{$b$-balanced separator} if every \emph{strongly connected component} (SCC) of $G \setminus F$ has size at most $bn$.

\begin{restatable}{theorem}{balancedcut}\label{thm:main}
Given an instance of \DB{} there is a randomized algorithm, that for any $\epsilon > 0$, runs in time $2^{\OO\left(k \min\{\log\frac{1}{b}, \log k\}\log \frac{1}{\epsilon}/\epsilon^2 \right)}(m + n)$ and with constant probability outputs either (a) a set of vertices (arcs) $F'$ of size at most $k$ such that every strongly connected component of $G \setminus F'$ has size at most $(b + \epsilon)n$ or (b) concludes correctly that there is no $b$-balanced separator of size at most $k$.
\end{restatable}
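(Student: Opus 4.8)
The plan is to use the sample set of \Cref{thm:sample_set} to reduce the problem to a search that does not see $n$, and then to solve that search by peeling the instance apart in $\OO(\min\{\log\frac{1}{b},\log k\})$ rounds, branching at each round over the all-subsets important separators provided by \Cref{thm:importantbound}.

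\emph{Step 1: reduction to terminals.} Sample a set $T$ of $|T|=\OO(k\log(1/\epsilon)/\epsilon^2)$ vertices; by \Cref{thm:sample_set} it is an $(\frac{\epsilon}{3},k)$-sample set with probability at least $\frac{2}{3}$. Since the bound $||C|-\frac{n}{|T|}|C\cap T||\le\frac{\epsilon}{3} n$ then holds for every SCC $C$ of $G\setminus X$ simultaneously over all $X$ with $|X|\le k$, it is enough to search for a set $F'$, $|F'|\le k$, such that every SCC of $G\setminus F'$ contains at most $\tau:=(b+\frac{\epsilon}{3})|T|$ terminals: such an $F'$ has all SCCs of size $\le(b+\frac{2}{3}\epsilon)n\le(b+\epsilon)n$, so it is a valid output (a); and a genuine $b$-balanced separator $F^{\ast}$ of size $\le k$ has $\frac{n}{|T|}|C^{\ast}\cap T|\le|C^{\ast}|+\frac{\epsilon}{3} n\le(b+\frac{\epsilon}{3})n$ for each of its SCCs $C^{\ast}$, hence already meets this terminal bound, so if the search fails we may safely report output (b). From now on only the $\le|T|$ sampled vertices matter, which is exactly what removes the dependence on $n$.

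\emph{Step 2: the terminal search.} I would drive the per-SCC terminal count down geometrically. Maintain an ordered partition $\mathcal P$ of $T$ — a guess for how the SCCs of the sought $F'$ split $T$, equipped with a linear order consistent with the condensation DAG — together with a partial cut $F'$; start from $\mathcal P=\{T\}$, $F'=\emptyset$. In one round, guess how each part of size $>\tau$ breaks, along the order, into $\OO(1)$ sub-parts each of at most (roughly) half the size; since a part with more than $\tau$ terminals and more than one SCC can always be cut this way (each SCC has $\le\tau$ terminals) and since only $\OO(k)$ parts are ever ``active'', this is $2^{\OO(|T|)}$ guesses in total. For a fixed refinement, the additional cut must be a vertex set $Z$ so that $G\setminus(F'\cup Z)$ has no path from any later sub-part to an earlier one — a \SMC{}-type system of $A$--$B$ separation demands with $A,B\subseteq T$ — and by the usual exchange argument it is enough to take $Z$ among the \emph{important} such separators. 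By \Cref{thm:importantbound} with source and sink sets both equal to $T$, the family of all such important separators over all the demands has size at most $\beta(|T|,|T|,k)\le 4^{k}2^{\OO(|T|)}=2^{\OO(|T|)}$ and can be listed in $2^{\OO(|T|)}(m+n)$ time; I branch on each, add it to $F'$, and recurse. Because a backward path between sub-parts is now forbidden, every SCC of $G\setminus F'$ lies in a single sub-part, so a round (roughly) halves the per-SCC terminal count; after $d$ rounds every SCC has $\le\tau$ terminals and each surviving candidate $F'$ is tested directly.

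\emph{Step 3: accounting, and the main obstacle.} A round multiplies the branch count by $2^{\OO(|T|)}$ and uses $2^{\OO(|T|)}(m+n)$ time, so $d$ rounds cost $2^{\OO(|T|\cdot d)}(m+n)=2^{\OO(kd\log(1/\epsilon)/\epsilon^2)}(m+n)$. Taking $d=\OO(\log\frac{1}{b})$ — enough to bring the per-SCC terminal count from $|T|$ down to $\tau$ — already gives the stated bound with $\log\frac{1}{b}$ in the exponent; to also cap $d$ at $\OO(\log k)$ one uses that a part meeting no vertex of $F^{\ast}$ needs no further cut (its induced subgraph already has every SCC inside a target SCC, hence $\le\tau$ terminals), so at most $k$ parts are ever active, and a short count on the active parts — each of which halves every round — bounds the number of productive rounds by $\OO(\log k)$ when $b$ is small; combining, $d=\OO(\min\{\log\frac{1}{b},\log k\})$. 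Randomness enters only through $T$, so the success probability is the $\ge\frac{2}{3}$ of \Cref{thm:sample_set}, and since any output $F'$ is verified directly the algorithm is never wrong when it answers. The crux — and the reason \Cref{thm:importantbound}, rather than the classical fixed-$(A,B)$ bound used by \cite{fm06} in the undirected setting, is what is required — is the directed structure: the SCCs of $G\setminus F^{\ast}$ form a condensation DAG rather than an unordered family, so the enumerated cuts must realise \emph{one-directional} separations consistent with a guessed topological order, and the total number of such cuts over \emph{all} relevant subset pairs of $T$ must still be $2^{\OO(|T|)}$; checking that the exchange argument keeps us among important separators in this one-directional, all-subsets regime, and that the geometric decay of the terminal counts survives the replacement, is where the real work lies.
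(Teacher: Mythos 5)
Your Step 1 matches the paper's reduction via \Cref{thm:sample_set}, but your Step 2 has two genuine gaps at exactly the points you yourself flag as ``where the real work lies.'' First, the claim that ``by the usual exchange argument it is enough to take $Z$ among the important such separators'' is unsupported in this setting: the cut required in a round must simultaneously satisfy a whole one-directional (skew) system of demands between several ordered sub-parts, and such a set need not be an important $A$-$B$ separator for \emph{any single} pair $A,B\subseteq T$, so it need not appear in the list enumerated by \Cref{thm:importantbound} at all. The known way to handle skew systems (the pushing lemma behind the Chen--Liu--Lu algorithm for \SMC{}) is an adaptive branching over important separators for one specific demand with a \emph{decreasing budget}, not a selection from a precomputed all-subsets list; your proposal never supplies the replacement argument. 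Second, the budget is not controlled: each round adds a separator of size up to $k$ to $F'$, so after $d$ rounds the accumulated cut can have size up to $dk$, and nothing guarantees that when a $b$-balanced $F^{\ast}$ with $|F^{\ast}|\le k$ exists some branch accumulates a cut of total size at most $k$. Without that, the algorithm may find no valid $F'$ on a YES-instance, so reporting (b) is unsound. (The geometric-decay bookkeeping for $d=\OO(\min\{\log\frac1b,\log k\})$ is also only sketched -- parts just above $\tau$ need not split into halves -- but that is patchable; the two issues above are not, as written.)

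The paper takes a different and simpler route for Step 2 that avoids both problems: it guesses \emph{up front}, in time $2^{\OO(|T|\min\{\log\frac1b,\log|T|\})}$, how the terminals of the sample set are distributed among (intervals of) the topologically ordered SCCs of $G\setminus F^{\ast}$ (Case 1 groups SCCs into $\OO(1/b)$ intervals of roughly $b|T|$ terminals; Case 2, when $|T|<1/b$, guesses the full ordered partition), contracts each guessed group, and then makes a \emph{single} call to the FPT algorithm for \SMC{} (Theorem 8.41 of Cygan et al., extended to the vertex version), which either returns a skew separator of size at most $k$ in $\OO(4^k k^3(m+n))$ time or correctly fails for that guess. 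The sample-set property then converts the terminal balance $(b+\epsilon)|T|$ back into vertex balance $(b+\OO(\epsilon))n$. Note also that \Cref{thm:importantbound} is not needed in this search step at all -- its role in the paper is confined to proving the VC-dimension bound inside \Cref{thm:sample_set}. If you want to rescue your round-based scheme, you would essentially have to re-prove the skew-separator pushing lemma with a shared budget across rounds, at which point you have reconstructed the \SMC{} subroutine the paper simply invokes as a black box.
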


We observe that our algorithm has a running time of $2^{\OO\left(k\log \left(\frac{1}{\epsilon}\right)/{\epsilon^2}  \right)}(m + n)$ for any $b = \Omega(1)$, matching the run-time of~\Cref{thm:fm} for $\frac{2}{3} \leq b \leq 1$ while also extending to any parameter $b \in (0,1)$. We also observe that~\Cref{thm:main} implies~\Cref{thm:fm}, since given any undirected graph $G$, one can create a directed graph $H$ on the same vertex set, so that for every edge $\{u,v\} \in E(G)$, we have the two arcs $(u,v), (v,u) \in E(H)$ and we can apply~\Cref{thm:main} to obtain~\Cref{thm:fm}.

Our algorithm for~\DB{} can be used to solve (approximate)~\DBS{} as well. To see this, given a graph $G$, observe that in any bisection $(A,B)$ so that the number of arcs going from $A$ to $B$ is at most $k$, the set $F$ of these at most $k$ arcs form a $\frac{1}{2}$-balanced separator, so that in $G \setminus F$, every strongly connected component is of size at most $\frac{n}{2}$. Using~\Cref{thm:main} with $b = \frac{1}{2}$, we can find a set of arcs $F'$ with $|F'| \leq k$ that forms a $(\frac{1}{2}+\epsilon)$ balanced separator. Finally, note that the strongly connected components of the graph $G \setminus F'$ form a Directed Acyclic Graph (DAG), and each strongly connected component of $G \setminus F'$ has size at most $(\frac{1}{2} + \epsilon)n$. It follows that there is a topological ordering $\{C_1, C_2 \ldots C_{\ell}\}$ of these strongly connected components of $G \setminus F'$, such that there is no arc from $C_j$ to $C_i$ for $i, j \in [\ell]$ with $j > i$. Therefore we can pick some prefix of strongly connected components in the topological ordering of $G 
\setminus F'$ to obtain a set $A'$, so that both $|A'|, |V(G) \setminus A'| \geq (1 - 2\epsilon)\frac{n}{4}$, and in $G \setminus F'$, there are no arcs from $V(G) \setminus A'$ to $A'$. It follows that the set of arcs $F'$ forms an (approximate) directed bisection. Note that the approximation is only in the balance, not in the number of arcs cut.

Next, we show a $\OO(\sqrt{\log k})$ approximation for \DB. This improves both the $\OO(\sqrt{\log n})$ approximation  of~\cite{agarwal2005log} and the $\OO(\log k)$ approximation given by Even et al.~\cite{even1999fast} for approximating \DB{} in polynomial time.

\begin{restatable}{theorem}{approxbalsep}
There is an $\OO(\sqrt{\log k})$ approximation to \DB{} in polynomial time. Formally, given an instance of \DB{} with $b = \Omega(1)$, suppose there is a set of vertices (arcs) $F$ with $|F| \leq k$, so that every strongly connected component of $G \setminus F$ has at most $bn$ vertices. Then there is a polynomial time randomized algorithm that with constant probability finds a set of vertices (arcs) $F'$ with $|F'| \leq \OO(k\sqrt{\log k})$ so that in $G \setminus F'$, every strongly connected component has size at most $b'n$ for some $b' < 1$ depending on $b$.
\label{thm:approxbalsep}
\end{restatable}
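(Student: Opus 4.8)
The plan is to follow the strategy of Feige, Hajiaghayi and Lee~\cite{feige2005improved} for undirected balanced separators: replace the global, $n$-dependent balance constraint by a balance constraint on a small terminal set $T$, approximate the resulting problem within $\OO(\sqrt{\log|T|})$, and translate back. The one genuinely new ingredient is the directed sample set of \Cref{thm:sample_set}; everything else is a node-weighted version of the directed balanced-separator approximation of Agarwal et al.~\cite{agarwal2005log}. We may assume $b$ is a constant in $(0,1)$ (if $b\ge 1$ the empty set works). Set $\epsilon := (1-b)/10$ and $k^\ast := C_0\,k\sqrt{\log k}$ for a large absolute constant $C_0$ fixed at the end. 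First I would invoke \Cref{thm:sample_set} with parameters $k^\ast$ and $\epsilon$ to compute, in randomized $\poly(k)\cdot(m+n)$ time, an $(\epsilon,k^\ast)$ sample set $T$ with $|T| = \OO(k^\ast\epsilon^{-2}\log(1/\epsilon)) = \poly(k)$. Since the promised separator $F$ has $|F|\le k\le k^\ast$, for every SCC $C$ of $G\setminus F$ the inequality $\bigl|\,|C|-\tfrac{n}{|T|}|C\cap T|\,\bigr|\le\epsilon n$ together with $|C|\le bn$ gives $|C\cap T|\le(b+\epsilon)|T| =: \beta|T|$, $\beta<1$ a constant. So $F$ witnesses a vertex (arc) set of size $\le k$ whose removal leaves every SCC of $G$ with at most $\beta|T|$ terminals.

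Next I would approximate this intermediate problem --- find a minimum vertex (arc) set whose removal makes every SCC contain at most a constant fraction of $T$ --- which is exactly the node-weighted version of \DB{} with weight $1$ on $T$ and $0$ elsewhere. The recursive directed-sparsest-cut framework behind the $\OO(\sqrt{\log n})$ bound of~\cite{agarwal2005log} carries over: repeatedly remove directed sparsest cuts, each within an $\OO(\sqrt{\log|T|})$ factor of optimal (only the $|T|$ terminals carry weight), until every SCC holds at most $\beta'|T|$ terminals for a constant $\beta'\in(\beta,1)$, e.g.\ $\beta'=(1+\beta)/2$. Since $b=\Omega(1)$ the number of rounds is $\OO(1)$, and a standard charging argument against $F$ bounds the total size of the removed set $F'$ by $\OO(\sqrt{\log|T|})\cdot\mathrm{OPT} \le \OO(\sqrt{\log|T|})\cdot k = \OO(\sqrt{\log k})\cdot k$, using $|T|=\poly(k)$; choosing $C_0$ large makes this $\le k^\ast$ (for $k$ below an absolute threshold the statement is trivial, so the circular-looking choice of constants is harmless). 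The edge version is identical, using directed balanced edge separators.

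Finally, since $|F'|\le k^\ast$, the sample-set inequality applies to every SCC $C'$ of $G\setminus F'$: $|C'|\le\tfrac{n}{|T|}|C'\cap T|+\epsilon n\le(\beta'+\epsilon)n =: b'n$ with $b'<1$ a constant depending only on $b$. Hence $F'$ is a $b'$-balanced separator of $G$ of size $\OO(k\sqrt{\log k})$. For correctness one verifies directly (deterministically, in $\OO(m+n)$ time) that the returned $F'$ is indeed $b'$-balanced and of the claimed size, and repeats the randomized construction a constant number of times; the overall success probability is the product of the constant success probabilities of the sample-set construction and of the sparsest-cut rounding, which can be amplified to any constant.

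I expect the main obstacle to be the middle step: checking that the $\OO(\sqrt{\log n})$-approximation of~\cite{agarwal2005log} for directed balanced separators degrades only to $\OO(\sqrt{\log|T|})$ in the node-weighted setting --- i.e.\ that the relevant directed sparsest-cut SDP and its rounding are controlled by the total weight $|T|$ rather than by $n$ --- and that the recursive peeling terminates with only a constant loss in balance after removing $\OO(\sqrt{\log|T|})\cdot\mathrm{OPT}$ vertices. These are the directed counterparts of node-weighted statements used (implicitly) in~\cite{feige2005improved,fm06}; the remainder is the bookkeeping with $\epsilon,\beta,\beta',C_0$ described above.
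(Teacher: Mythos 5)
Your overall architecture coincides with the paper's: compute a directed sample set via \Cref{thm:sample_set}, reduce to finding a separator that is balanced with respect to a terminal set $T$ of size $\poly(k)$, approximate that terminal-weighted problem within $\OO(\sqrt{\log|T|})$, and transfer back through the sample-set property (with the vertex version handled by the standard split-vertex reduction). Your bookkeeping with $k^\ast=C_0k\sqrt{\log k}$ is in fact slightly more careful than the paper's, which invokes the $(\epsilon,k)$ sample-set property on a returned set $F'$ of size $\OO(k\sqrt{\log k})$; taking the sample set for the larger parameter, as you do, fixes this at no asymptotic cost since $\sqrt{\log|T|}$ remains $\OO(\sqrt{\log k})$.

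The problem is that the step carrying all the technical content --- the $\OO(\sqrt{\log|T|})$ guarantee for the terminal-weighted directed balanced-separator problem --- is precisely the step you leave unproved, and your proposed route is not the paper's. The paper uses no recursive sparsest-cut peeling: it writes the balanced-separator SDP of~\cite{agarwal2005log} with the spreading constraint imposed only on the terminal vectors, applies the ARV structure theorem~\cite{arora2009expander} to just those $|T|$ vectors to obtain $\Delta=\Omega(1/\sqrt{\log|T|})$-separated sets $L,R\subseteq T$, and then a single threshold/region-growing step followed by one directed min-cut yields a cut of size $\OO(\mathrm{SDP}/\Delta)\le\OO(k\sqrt{\log|T|})$ that is $b^\ast$-balanced with respect to $T$ (\Cref{lemma:approx}), the bound $\mathrm{SDP}\le k$ coming from the canonical integral solution. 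Your substitute --- iteratively removing terminal-weighted directed sparsest cuts, each within $\OO(\sqrt{\log|T|})$ of optimal --- rests on two facts you only assert: (i) a directed sparsest-cut approximation whose ratio depends on the number of weighted vertices rather than on $n$, which is not stated in~\cite{agarwal2005log} and itself requires the same ``apply ARV only to the terminal vectors'' argument that the paper formalizes; and (ii) a terminal-weighted charging/peeling analysis in the directed setting. Moreover, your claim that the recursion runs for $\OO(1)$ rounds because $b=\Omega(1)$ is incorrect as stated (sparsest-cut peeling may repeatedly shave off small pieces); the usual repair is to charge each round against the sparsity of the optimal separator inside the current large piece and sum over the removed terminal weight, not to bound the number of rounds. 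So the skeleton is sound and matches the paper, but the middle step remains a genuine gap in your write-up rather than a verified alternative; the paper's one-shot SDP rounding is both simpler and fully carried out.
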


Note that for this theorem, we do not optimize $b'$, unlike our FPT result where we were able to show that $b' \leq b + \epsilon$ for some suitably chosen $\epsilon$. Still, we remark that is not a limitation of our framework, but inherent in~\cite{agarwal2005log} due to the use of the ARV separation theorem~\cite{arora2009expander}.

\subsection{Vertex Cut Sparsifiers}
Vertex sparsification is a fundamental problem in various settings. Broadly, given a graph and a terminal set $T$, a vertex sparsifier is a smaller graph (with size typically depending only on $|T|$) that preserves some (cut based) property of the terminals $T$. Moitra~\cite{moitra2009approximation} first introduced a version of vertex sparsification in undirected graphs. Chuzhoy~\cite{chuzhoy2012vertex} generalized this notion by allowing Steiner nodes in the sparsifier. Both these notions preserve edge cuts between bi-partitions of the terminal set.

For our setting, we focus on directed graphs and vertex cuts. Given a directed graph $G$ and a set of terminals $T$ along with an integer $c$, a $(c,T)$ vertex cut sparsifier for $G$ is another graph $G'$ with $T \subseteq V(G')$, so that for every partition $\{A, B\}$ of $T$ (so that $A \cup B = T$ and $A \cap B = \emptyset$), if the size of an $A$-$B$ vertex min-cut is at most $c$ in $G$, then the size of an $A$-$B$ vertex min-cut is the same in both in $G$ and $G'$. In other words, the goal is to find a vertex sparsifier that preserves vertex min-cuts of size at most $c$ between sets of terminals. Here we allow the deletion of terminals as well.

Kratsch and \Wahlstrom~\cite{kratsch2012representative,kratsch2020representative} first studied this notion of vertex sparsification for vertex cuts (without the parameter $c$) and showed that given $G,T$,  there is a randomized polynomial time algorithm that obtains a $(|T|, T)$ vertex sparsifier $G'$ for $G$ with $|V(G')| \leq \OO(|T|^3)$. This bound was improved to $\OO(|T|^2)$ by~\cite{he2021near} for the special case of directed acyclic graphs. Their algorithm runs in linear time for fixed $|T|$, but is still randomized as it needs to compute a representation of gammoids.   

One can then ask the question as to what is known about deterministic algorithms. Recently, Misra et al.~\cite{misra2020linear} showed that a representation of a gammoid with rank $r$ over a ground set of $m$ elements can be constructed in time $\OO({m \choose r} m^{\OO(1)})$ deterministic time. The technique of~\cite{kratsch2012representative} needs to compute a representation of a gammoid whose ground set has size $n$, the number of vertices and rank $|T|$, the number of terminals. We therefore obtain the following result.

\begin{theorem}[\cite{kratsch2012representative},~\cite{misra2020linear}]
Given a $n$-vertex graph $G$ and a terminal set $T$, there is a deterministic algorithm that runs in time $\OO({n \choose |T|} n^{\OO(1)})$ and computes a $(|T|, T)$ vertex cut sparsifier for $G$ of size at most $\OO(|T|^3)$.
\label{thm:detsparsifier}
\end{theorem}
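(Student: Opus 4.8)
The plan is to obtain this statement by combining the representative-sets framework of Kratsch and \Wahlstrom{}~\cite{kratsch2012representative} for vertex cut sparsification with the deterministic gammoid-representation algorithm of Misra et al.~\cite{misra2020linear}. The key point is that the only randomized ingredient in the Kratsch--\Wahlstrom{} construction is the computation of a linear representation of an appropriate gammoid over a suitable field; every other step of their construction is deterministic and runs in time $n^{\OO(1)}$. So the strategy is: (i) recall the Kratsch--\Wahlstrom{} pipeline and isolate its randomized step; (ii) replace that step with the deterministic construction of Misra et al.; (iii) do the bookkeeping of running time and output size.

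Concretely, the pipeline is as follows. Associated to $G$ and $T$ one defines a gammoid $M$ on the ground set $V(G)$ (possibly after a standard minor reduction to model vertex cuts) of rank $|T|$ --- namely the gammoid in which a set $X$ is independent iff $T$ can be linked to $X$ by vertex-disjoint paths. The combinatorial heart of the construction (the ``cut-covering set'' lemma) is that a representative family, in $M$, of the family of closest minimum $A$--$B$ vertex separators taken over all bipartitions $A \cup B = T$ yields a single set $Z \subseteq V(G)$ with $T \subseteq Z$ and $|Z| = \OO(|T|^3)$ such that for every bipartition $(A,B)$ of $T$ some minimum $A$--$B$ vertex separator of $G$ is contained in $Z$; the cubic bound comes from the fact that a minimum $A$--$B$ separator has size at most $|T|$. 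The sparsifier is then $G' := \mathrm{torso}(G, Z)$: put $V(G') = Z$ and an arc $(u,v) \in E(G')$ whenever $G$ has a $u$--$v$ path all of whose internal vertices lie outside $Z$. Since every $G'$-path lifts to a $G$-path, the torso never decreases an $A$--$B$ min-cut; and since $Z$ contains a minimum separator for every bipartition and that separator is still a separator in $G'$, no min-cut increases either. Hence $G'$ preserves all $A$--$B$ vertex min-cut values (in particular all of size at most $|T|$), so it is a $(|T|, T)$ vertex cut sparsifier of size $\OO(|T|^3)$.

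Given a linear representation of $M$, the representative family can be computed deterministically in time $n^{\OO(1)}$ by Gaussian elimination together with the Lov\'asz--Marx bound on the size of representative sets, and the torso is computed by $\OO(|Z|^2)$ reachability queries in $G$ minus $Z$; thus everything except the representation of $M$ is deterministic and polynomial. By Misra et al.~\cite{misra2020linear}, a gammoid of rank $r$ on $m$ elements admits a deterministically computable linear representation in time $\OO\!\left({m \choose r} m^{\OO(1)}\right)$; applying this with $m = \OO(n)$ and $r = |T|$ gives a representation in time $\OO\!\left({n \choose |T|} n^{\OO(1)}\right)$, which dominates the running time of the rest of the pipeline. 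This yields the claimed bound.

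The part needing the most care will be verifying that (a) the Kratsch--\Wahlstrom{} construction really has no other randomized component --- in particular that the cut-covering set $Z$ is produced purely from representative-family computations and not from any random sampling --- and (b) that the gammoid variant used and the torso operation preserve exactly vertex min-cuts between all bipartitions of $T$. Both are established in~\cite{kratsch2012representative}; the only genuinely new step here is substituting the deterministic representation of~\cite{misra2020linear} for the randomized one, after which the running-time bound follows by bookkeeping.
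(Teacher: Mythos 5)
Your proposal is correct and follows essentially the same route as the paper, which justifies this theorem in exactly this way: take the Kratsch--\Wahlstrom{} pipeline, observe that its only randomized ingredient is the gammoid representation (ground set of size $n$, rank $|T|$), and replace it with the deterministic construction of Misra et al., giving the ${n \choose |T|} n^{\OO(1)}$ running time and the $\OO(|T|^3)$ sparsifier size. Your write-up is in fact more detailed than the paper's brief justification, but the argument is the same.
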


However, this algorithm (which is an XP algorithm in the notation of parameterized complexity) can be easily improved to a (still deterministic) FPT algorithm. The reason for this is simple: there are at most $2^{|T|}$ partitions of the terminal set $T$. For each partition $A \cup B$, we can compute an $(A,B)$ directed min-vertex cut $M$. Note that this min-cut is of size at most $|T|$, since one can simply delete all the terminals to obtain a cut. This gives a set $X$ of $2^{|T|} \cdot |T|$ vertices. It can now be shown that we can apply the closure operation to the set $V(G) \setminus X$, where we simply delete all vertices of $V(G) \setminus X$, and for each vertex pair $(u,w)$ such that there are vertices $(v_1, v_2)$ with $v_1, v_2 \in V(G) \setminus X$ with $(u,v_1), (v_2, w) \in E(G)$ we add an arc $(u,w)$. This results in a sparsifier of size $\OO(|T|2^{|T|})$. One can now set $n = \OO(|T|2^{|T|})$ in~\Cref{thm:detsparsifier} to obtain a sparsifer of size $2^{\OO(|T|^2)}$ in time $2^{\OO(|T|^2)}$. The total running time is $2^{\OO(T)}(m + n) + 2^{\OO(|T|^2)}$.

A similar line of research considers edge and vertex cuts in undirected graphs. For edge cuts in undirected graphs, Chalermsook et al.~\cite{chalermsook2021vertex} showed an upper bound of $\OO(|T|c^4)$ which was later improved to $\OO(|T|c^3)$ by Liu~\cite{liu2023vertex}. Both these algorithms are randomized. Saranurak and Yingchareonthawornchai~\cite{saranurak2022deterministic} considered the vertex version in undirected graphs, and gave a deterministic algorithm to compute a sparsifier of size $\OO(|T|2^{\OO(c^2)})$ in time $\OO(m^{1 + o(1)}2^{\OO(c^2)})$. 

However, no improvement on the FPT algorithm with running time $2^{O(|T|)}(m + n) + 2^{O(|T|^2)}$ based on \Cref{thm:detsparsifier} is known for deterministic algorithms for vertex cut sparsifiers in directed graphs. We show the following result, which shows that if one is interested in preserving cuts of size at most $c$, then a better deterministic algorithm is possible.

\begin{restatable}{theorem}{mimicking}\label{thm:mimicking}
Given a directed graph $G$, a set of terminals $T$ and integer $c$, there is a deterministic algorithm that runs in time $\OO(\psi(|T|,c) \cdot (m + n))$ and computes a $(c,T)$ vertex sparsifier $G'$ for $G$ of size at most $\psi(|T|,c)$, where $\psi(|T|,c) = {|T| \choose \leq 3c} 2^{\OO(c)}$. Additionally, $G'$ satisfies $T \subseteq V(G') \subseteq V(G)$, with the property that for every partition $A \cup B$ of $T$ and every subset $X \subseteq V(G)$ with $|X| \leq c$, $X$ is an $A$-$B$ important separator in $G$ if and only if $X \subseteq V(G')$ and $X$ is an $A$-$B$ important separator in $G'$. 
\end{restatable}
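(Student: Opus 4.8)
The plan is to build $G'$ by restricting attention to the set $U$ of vertices that lie on some $A$-$B$ important separator of size at most $c$, over all bipartitions $A\cup B$ of $T$, and then contracting everything outside $U\cup T$ via the "closure" operation used already in the excerpt for the trivial FPT bound. By Theorem~\ref{thm:importantbound} (applied with $S=T$, so that every bipartition $A\cup B$ of $T$ is also a choice of disjoint subsets $A\subseteq T$, $B\subseteq T$ — note an $A$-$B$ separator need not contain $A$ or $B$, so nothing is lost), the total number of such important separators is at most $\beta(|T|,|T|,c)=4^c\binom{|T|}{\le c}\binom{|T|}{\le 2c}$, each of size at most $c$, so $|U|\le c\cdot\beta(|T|,|T|,c)=\binom{|T|}{\le c}\binom{|T|}{\le 2c}2^{O(c)}$. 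This already almost meets the target bound $\psi(|T|,c)=\binom{|T|}{\le 3c}2^{O(c)}$ after absorbing $\binom{|T|}{\le c}\binom{|T|}{\le 2c}\le\binom{|T|}{\le 3c}\cdot 2^{O(c)}$ (a Vandermonde-type estimate); and the enumeration time bound from Theorem~\ref{thm:importantbound} gives the claimed running time. Define $W=U\cup T$ and let $G'$ be the closure of $G$ onto $W$: delete $V(G)\setminus W$, and for every ordered pair $u,w\in W$ add the arc $(u,w)$ whenever there is a path from $u$ to $w$ in $G$ whose internal vertices all lie in $V(G)\setminus W$. Then $|V(G')|=|W|\le\psi(|T|,c)$, and $T\subseteq V(G')\subseteq V(G)$ as required.

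\textbf{Correctness of the closure.} The key structural fact to establish is: for any two vertices $x,y\in W$ and any $Z\subseteq V(G)$ with $Z\subseteq W$, there is an $x$-$y$ path in $G\setminus Z$ if and only if there is one in $G'\setminus Z$. One direction is immediate since each arc of $G'$ corresponds to a path in $G$ avoiding $W\supseteq Z$; the other direction follows because any $x$-$y$ path in $G\setminus Z$ can be decomposed at its visits to $W$, and each maximal subpath between consecutive $W$-vertices either is a single arc of $G$ still present in $G'$ or becomes a shortcut arc in $G'$ (and avoids $Z$ since its endpoints are in $W\setminus Z$ and its interior avoids $W$). This shows that for every bipartition $A\cup B$ of $T$ and every $X\subseteq W$, $X$ is an $A$-$B$ separator in $G$ iff $X$ is an $A$-$B$ separator in $G'$, and moreover the reachability set $R(X)\cap W$ from $A$ is the same in both graphs (both equal $\{w\in W: \exists\,A\text{-to-}w \text{ path avoiding }X\}$). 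The point of having chosen $U$ large enough is that every candidate important separator $X$ of size $\le c$ is a subset of $W$: either $X$ is already one of the enumerated important separators (so $X\subseteq U$), or we need the stronger observation that \emph{every} minimal $A$-$B$ separator of size $\le c$ is "dominated" by some important one in a way that keeps the relevant vertices inside $U$ — this needs a short argument and is where care is required.

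\textbf{Preserving min-cut values and important separators.} Given the reachability-equivalence above, fix a bipartition $A\cup B$. The minimum $A$-$B$ vertex cut in $G$, call it $\mu\le|T|$; if $\mu\le c$ then a minimum cut can be taken to be an important separator (the minimum cut maximizing $R$ is important and has size $\mu$), so it lies in $U\subseteq W$ and hence is still an $A$-$B$ cut of size $\mu$ in $G'$; conversely any $A$-$B$ cut in $G'$ of size $<\mu$ would, by the equivalence, give an $A$-$B$ cut of the same size in $G$ (after replacing it by its restriction to $W$, which remains a cut and is no larger), a contradiction — so the min-cut value is exactly $\mu$ in $G'$ whenever $\mu\le c$, giving the $(c,T)$-vertex-sparsifier property. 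For the "important separator" clause: if $X\subseteq V(G)$, $|X|\le c$, is an $A$-$B$ important separator in $G$, then $X\subseteq U\subseteq V(G')$ by construction, $X$ is an $A$-$B$ separator in $G'$, it is inclusion-minimal there (removing any vertex un-separates in $G$, hence in $G'$ by equivalence), and it is not dominated: if some $A$-$B$ separator $Y$ in $G'$ with $|Y|\le|X|$ had $R_{G'}(X)\subsetneq R_{G'}(Y)$, then the restriction $Y'=Y\cap W$ (still a separator in $G'$, size $\le|X|$, same $R$) corresponds to a separator in $G$ with $R_G(X)\cap W\subsetneq R_G(Y')\cap W$; extending this to a contradiction with importance of $X$ in $G$ requires checking that $Y'$ can be taken as an actual $A$-$B$ separator in $G$ with $R_G(X)\subsetneq R_G(Y')$, which follows since $R_G(\cdot)$ for separators contained in $W$ is determined by its trace on $W$. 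The converse direction is symmetric.

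\textbf{Main obstacle.} The technical heart — and the step I expect to be the real obstacle — is verifying that the \emph{set} $U$ of vertices appearing in enumerated important separators is genuinely closed enough that the closure operation never "loses" an important separator: concretely, that restricting an arbitrary size-$\le c$ separator $Y$ in $G'$ to $W$ and then relating $R_{G'}(Y\cap W)$ back to reachability in $G$ does not create a spurious separator that dominates an important one, and that importance in $G'$ pulls back to importance in $G$ (the "only if" and "if" must both go through). This is essentially a bookkeeping argument about how $R(\cdot)$ behaves under the closure, but it has to be done carefully because $Y$ in $G'$ may contain vertices of $U\setminus$(any single important separator), and one must argue these extra vertices can be "pushed" toward $A$ without increasing size, using the exchange argument that underlies the standard $4^c$ bound. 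Everything else — the counting via Theorem~\ref{thm:importantbound}, the binomial estimate $\binom{|T|}{\le c}\binom{|T|}{\le 2c}=\binom{|T|}{\le 3c}2^{O(c)}$, the running time, and the basic path-decomposition lemma for the closure — is routine.
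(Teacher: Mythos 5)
Your construction is the same as the paper's (keep the terminals plus the union $U$ of all small important separators over bipartitions of $T$, close everything else at once, and use the reachability equivalence of the closure), and your counting, running time, min-cut preservation, and the direction ``$X$ important in $G$ $\Rightarrow$ important in $G'$'' are all essentially fine. But there is a genuine gap exactly where you flag your ``main obstacle,'' and your earlier claim that ``the converse direction is symmetric'' is false: to show that an important separator $X$ of $G'$ is still important in $G$, you must rule out a dominating $A$-$B$ separator $Y$ in $G$ with $|Y|\le|X|\le c$ and $R_G(Y)\supsetneq R_G(X)$, and such a $Y$ may contain vertices of $V(G)\setminus V(G')$, so your path/reachability equivalence (which needs the deleted set to lie inside $W=V(G')$) simply does not apply to it. The idea you gesture at --- an exchange argument ``pushing these vertices toward $A$'' --- is not the right mechanism (important separators are pushed \emph{away} from $A$, toward $B$), and nothing in your write-up supplies the missing step.

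The missing idea, which is how the paper closes this direction (\Cref{lemma:mimicking1}), is the standard domination fact: every $A$-$B$ separator of size at most $c$ is dominated by an $A$-$B$ \emph{important} separator of no larger size. Replacing the hypothetical $Y$ by such an important separator $Y'$, one gets $|Y'|\le|X|\le c$ and $R_G(X)\subsetneq R_G(Y)\subseteq R_G(Y')$, and --- crucially --- $Y'\subseteq U\subseteq V(G')$ by the very definition of $U$ (it is an important separator of size at most $c$ for the bipartition $(A,B)$). Now your equivalence lemma applies to $Y'$: it is an $A$-$B$ separator in $G'$, $R_{G'}(X)\subseteq R_{G'}(Y')$, and strictness follows from minimality of $X$ in $G'$ together with $Y'\ne X$, contradicting importance of $X$ in $G'$. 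With this inserted, your argument matches the paper's (\Cref{lemma:mimicking1,lemma:mimicking2}, proved there vertex-by-vertex via \Cref{lemma:closurereach}). Two smaller points: you cannot literally invoke \Cref{thm:importantbound} ``with $S=T$'' since the theorem requires disjoint source and sink sets --- you should instead enumerate, as the paper does, important separators for all pairs of disjoint subsets $A'\subseteq T$, $|A'|\le c$, $B'\subseteq T$, $|B'|\le 2c$, which suffices by the structural fact proved inside \Cref{thm:importantbound}; and your binomial absorption $\binom{|T|}{\le c}\binom{|T|}{\le 2c}\le\binom{|T|}{\le 3c}2^{O(c)}$ is correct but should be checked, since it is what delivers the stated $\psi(|T|,c)$.
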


While the dependence on $c$ in the exponent for the size of the sparsifier seems undesirable,~\Cref{thm:mimicking} shows that the vertex sparsifier $G'$ constructed by our algorithm is more powerful: it in-fact ``preserves'' all $A$-$B$ important separators for any partition $(A,B)$ of $T$. Additionally, if one wants only a vertex sparsifier, we note that simply running the algorithm in~\Cref{thm:detsparsifier} after applying our result in~\Cref{thm:mimicking} gives a vertex sparsifier of size $\OO(|T|^3)$, but now the running time is $\left(\frac{|T|}{c}\right )^{\OO(c|T|)}$. Thus, together, we obtain a vertex cut sparsifier of size $\OO(|T|^3)$ in time ${|T| \choose \leq 3c} 2^{\OO(c)}(m + n) + \left(\frac{|T|}{c}\right)^{\OO(c|T|)}$. 

\begin{corollary}
Given  a graph $G$ and a terminal set $T \subseteq V(G)$, there is a deterministic algorithm that runs in time ${|T| \choose \leq 3c} 2^{\OO(c)}(m + n) + \left(\frac{|T|}{c}\right)^{\OO(c|T|)}$ and obtains a $(c,T)$ vertex cut sparisifer for $G$ with $\OO(|T|^3)$ vertices.
\end{corollary}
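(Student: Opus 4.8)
The idea is simply to compose the two sparsification procedures already established: first invoke \Cref{thm:mimicking} to shrink $G$ to an instance whose size depends only on $|T|$ and $c$, and then run the deterministic gammoid-based construction of \Cref{thm:detsparsifier} on that instance to push the vertex count down to $\OO(|T|^3)$. Concretely, I would first apply the algorithm of \Cref{thm:mimicking} to $(G,T,c)$ to obtain a $(c,T)$ vertex cut sparsifier $G'$ with $T \subseteq V(G') \subseteq V(G)$ and $|V(G')| \le \psi(|T|,c) = {|T| \choose \le 3c} 2^{\OO(c)}$, in time $\OO(\psi(|T|,c)(m+n))$. Then I would feed $G'$, together with the same terminal set $T$ (which is contained in $V(G')$), into the deterministic construction of \Cref{thm:detsparsifier}; since $G'$ has $n' := |V(G')| \le \psi(|T|,c)$ vertices, this produces, in time $\OO\!\left({n' \choose |T|}(n')^{\OO(1)}\right)$, a graph $G''$ with $T \subseteq V(G'')$, $|V(G'')| = \OO(|T|^3)$, that is a $(|T|,T)$ vertex cut sparsifier of $G'$.

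Correctness follows from transitivity of the defining property. Fix a bipartition $A \cup B = T$ with $A \cap B = \emptyset$ and suppose the $A$-$B$ vertex min-cut in $G$ has size at most $c$. By the guarantee of $G'$ this value is preserved in $G'$ (and is in particular $\le c \le |T|$), and then by the guarantee of $G''$ — a $(|T|,T)$ sparsifier, hence one that preserves every $A$-$B$ vertex min-cut of size $\le |T|$ — this value is preserved again in $G''$. Hence the $A$-$B$ vertex min-cut sizes in $G$ and $G''$ agree whenever the former is $\le c$, i.e.\ $G''$ is a $(c,T)$ vertex cut sparsifier for $G$. (In the degenerate regime $c \ge |T|$ every $A$-$B$ vertex cut already has size $\le |T|$, since deleting all of $T$ is a cut, so a $(|T|,T)$ sparsifier is automatically a $(c,T)$ sparsifier; this is the regime where the stated running-time expression is no longer meaningful, and there one simply applies \Cref{thm:detsparsifier} directly.)

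It then remains to simplify the running time and note the obstacle, which is minor. Writing $N := {|T| \choose \le 3c} 2^{\OO(c)}$, we have ${N \choose |T|} N^{\OO(1)} \le N^{\OO(|T|)}$; using the standard estimate ${|T| \choose \le 3c} \le \left(\frac{e|T|}{c}\right)^{\OO(c)}$ this gives $N^{\OO(|T|)} \le \left(\frac{|T|}{c}\right)^{\OO(c|T|)}$, the $2^{\OO(c|T|)}$ contribution of the $2^{\OO(c)}$ factor being absorbed into the exponent. Adding the cost of the first step yields the claimed bound ${|T| \choose \le 3c} 2^{\OO(c)}(m+n) + \left(\frac{|T|}{c}\right)^{\OO(c|T|)}$, and $G''$ has $\OO(|T|^3)$ vertices as required. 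The only place asking for a little care — not a genuine obstacle — is the transitivity step: one must verify that composing a $(c,T)$ sparsifier with a strictly stronger $(|T|,T)$ sparsifier again yields a $(c,T)$ sparsifier, and that the terminal set is retained throughout, which holds because $T \subseteq V(G')$ by \Cref{thm:mimicking} and $T \subseteq V(G'')$ by the very definition of a vertex cut sparsifier in \Cref{thm:detsparsifier}. Everything else is routine bookkeeping plus the binomial arithmetic above.
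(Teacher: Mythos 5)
Your proposal is correct and follows exactly the paper's route: compose the sparsifier of \Cref{thm:mimicking} with the deterministic gammoid-based construction of \Cref{thm:detsparsifier} applied to the shrunken instance, and bound ${N \choose |T|}N^{\OO(1)}$ by $\left(\frac{|T|}{c}\right)^{\OO(c|T|)}$. Your explicit transitivity check (and the degenerate case $c \ge |T|$) is a harmless elaboration of what the paper leaves implicit.
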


\subsection{Techniques and Overview}

In this section we give a high-level overview of our techniques. For the sake of simplicity and clarity we sometimes omit small details in this section.

\paragraph{All-Subset Important separators:} Our result on counting all-subset important separators is obtained using the connection between important separators and \emph{closest sets}. For simplicity, in this overview, suppose that we are given a single source $s$ and a set of target vertices $T$. Our goal in this simplified setting shall be to bound the number of $(s,B)$ important separators of size exactly $k$ across all $B \subseteq T$, and we will show a weaker bound of $4^k{|T| \choose \leq {(k +1)^2}}$. Before we describe the main idea, we remark that our eventual goal will be to show that every $(s,B)$ important separator of size $k$ for some $B \subseteq T$ is an $(s,B')$ important separator for some $B' \subseteq B$ with $|B'| \leq (k +1)^2$. It is easy to see that this suffices, since there are only ${|T| \choose \leq (k + 1)^2}$ choices for $B'$, and there are at most $4^k$ such $(s,B')$ important separators of size $k$ for a fixed choice of $B'$.

Fix $B \subseteq T$, and consider an $(s,B)$ important separator $X$ of size $k$. Consider the (vertex) min-cut $X'$ between $X$ and $B$\footnote{Throughout this paper, we allow deletion of source and sink vertices in vertex cuts}. If this min-cut $X'$ has size strictly less than $|X| = k$, $X$ cannot be important: $X'$ would allow more vertices to be reachable from $s$ while having a size strictly less than $k$. This means that $X$ by itself must be a $(X,B)$ min-cut. Using simple cut-flow duality, it must be the case that there are $|X| = k$ vertex disjoint paths from $X$ to (some) vertices in $B$.

However, this flow property can be strengthened: it is easy to see that we can in fact assume that every such important separator $X$ is the $(X,B)$ min-cut ``closest'' to $B$. (We say $X$ is closest to $B$ if $X$ is the unique $(X, B)$ min-cut.)
But it can be shown that closest sets have a stronger flow property: we can now conclude that for every vertex $v \in X$, there are $k + 1$ paths from $X$ to $B$, which are vertex disjoint except that two of these paths both start at $v$ (\Cref{lemma:closest}). 

Why does this help? Fix a $v \in X$, and fix the $k + 1$ paths from $X$ to $B$. Suppose the $k + 1$ endpoints of these paths are $B^*_v \subseteq B$. Consider any set of vertices $X'$ that is a $(s,B^*_v)$ separator of size $\leq k$ that is ``closer'' to $B^*_v$ than $X$. (Here ``closer'' means that the set of vertices reachable from $s$ in $G \setminus X'$ is a superset of that in $G \setminus X$). Then $X'$ must delete $v$! For if not, since there are $k + 1$ paths from $X$ to $B^*_v$ which are vertex disjoint except $2$ paths which begin at $v$, and $|X'| \leq k$, $X'$ cannot be a $(s,B^*_v)$ separator. 

Applying the above argument for every $v \in X$ and letting $B' := \cup_{v \in X} B^*_v$ allows us to conclude that there is no $(s, B')$ separator $X'$ that is closer to $B'$ than $X$.
Thus, in fact, among all the sets of size at most $k$ separating $s$ from $B'$, $X$ is a set which is closest to $B'$. This means $X$ must be an important $(s,B')$ separator. But $|B'| = |\cup_{v \in X} B^*_v| \leq k(k + 1) \leq (k +1)^2$, and hence we are done. Our actual result uses a slightly more subtle argument using augmenting paths to obtain such a $B'$ with $|B'| \leq 2k$.

Here we remark that Lemma 4.10 of Lokshtanov et al.~\cite{lokshtanov2025wannabe} showed a similar result: it claims that there is a set $B' \subseteq B$ with $|B'| \leq k + 1$ such that any $(s,B)$ important separator is an $(s,B')$ important separator. However, there is a gap in the proof which leads to an unavoidable factor of $2$ loss. Indeed, as we show in~\Cref{lemma:impseppreserve}, there is a graph $G$, source vertex $s \in V(G)$ and $B \subseteq V(G)$ for which $|B'| = 2k$ is necessary, and hence our result is tight. Our high-level idea is flow-based, similar to~\cite{lokshtanov2025wannabe}. But as described above, the direct application of this flow-based idea gives a bound of $k(k + 1)$, which we refine using an augmenting path based argument to obtain the tight bound of $2k$.

\paragraph{Detection sets and Sample sets for Directed Graphs:}
To obtain detection sets and sample sets for directed graphs, on a high level, we follow the approach of Feige and Mahdian~\cite{fm06}.
We focus on sample sets in this brief overview. First, we consider the family of sets that arise as a strongly connected component after deleting an arbitrary $k$ vertices in the given digraph. Our key contribution is to show that this family has VC-dimension $\OO(k)$. Using the standard results on $\epsilon$-samples, similar to that in~\cite{fm06}, it then follows that a random set of $\OO(\frac{k}{\epsilon^2} \log \frac{1}{\epsilon})$ vertices is a sample set with constant probability. Thus the question becomes: how can we bound the VC-dimension of the family of sets $\SS$ formed by strongly connected components after deleting some $k$ vertices? We note that the techniques in~\cite{fm06} to bound the VC-dimension do not extend to directed graphs, and hence a different approach is needed. We accomplish this by showing a connection to our result on all-subset important separators. 

Recall the definition of VC-dimension: given a set family $\SS$ on a universe $U$, the VC-dimension is the size of the largest set $U' \subseteq U$ shattered by $\SS$. Here $U'$ is shattered by $\SS$ if for every subset $U'' \subseteq U'$ there exists an $S \in \SS$ with $|U' \cap S| = U''$. Thus to prove a bound of $\OO(k)$ for VC-dimension for our case, it is enough to show that there exists some constant $c$ so that any set of terminals $T \subseteq V(G)$ with $|T| \geq ck$ cannot be shattered by the set family $\SS$ which consists of the strongly connected components $C$ in $G \setminus F$ across all sets $F \subseteq V(G)$ with $|F| \leq k$. Fix any $T$ of size $\geq ck$ for large enough $c$ which we choose later. For the sake of contradiction, assume that the set $T$ can be shattered. Fix a ``pattern'' $P \subseteq T$, $P \neq \emptyset$. Since $T$ can be shattered by $\SS$, there exists a set $F$ of at most $k$ vertices, so that in $G \setminus F$, there is a strongly connected component $C$ that satisfies $C \cap T = P$. Fix some terminal $t \in P$. Then it must be the case that $P$ is the exact set of terminals that can both reach $t$ and can be reached from $t$ in $G \setminus F$. Equivalently, it is the exact set of terminals that can be reached from $t$ in both $G \setminus F$ and $G^R \setminus F$, where $G^R$ is obtained by reversing every arc of $G$. This motivates us to define $\Reach(t)$ for each $t \in T$, 
the collection of subsets of terminals that can be reached from $t$ after deleting a set $F$ of size at most $k$. Formally,
\begin{align*}
\Reach(t) = \{Q \subseteq T \mid\, & \text{there exists $F \subseteq V(G)$, $|F| \leq k$ such that $Q$ is reachable from $t$} \\&\text{while $T \setminus Q$ is unreachable from $t$ in $G \setminus F$}\}
\end{align*}
If we can bound $|\Reach(t)|$ for any graph by a function $B(|T|,k)$, then applying this result twice, once on $G$ and once on $G^R$, will bound the number of such patterns $P$ with $t \in P$ by $(B(|T|,k))^2$. Then a simple union bound over all $t \in T$ will give that the number of patterns $P \subseteq T$ is at most $|T|B(|T|,k))^2$. If this is less than $2^{|T|}$ whenever $|T| \geq ck$, then we have a contradiction to the fact that $T$ can be shattered. Thus the key question is to obtain $B(|T|,k)$, a good bound on $|\Reach(t)|$.

We do this as follows. Suppose $T' \in 
\Reach(t)$ some $T' \subseteq T$. Then there exists a set $F$ of at most $k$ vertices after whose deletion the set of reachable terminals from $t$ is exactly $T'$. We then show that there is a $(t, T \setminus T')$ \emph{important separator} $X$ of size at most $k$ whose deletion gives the same reachability set $T'$ as that after deleting $F$. But now using our bound on important separators, we can show that $B(|T|,k) \leq 4^k {|T| \choose \leq 2k}$. It is then easy to see that $|T|B(|T|,k)^2 < 2^{|T|}$ whenever $|T| \geq ck$ for some large enough (absolute) constant $c$.

\paragraph{FPT algorithm for Directed Balanced Separators:}
To obtain the algorithm for \DB{}, we follow the approach of~\cite{fm06} in the undirected case. For simplicity, in this overview, we work with vertex separators, assume that $b = \frac{1}{2}$, and we will only return a $b' = (\frac{3}{4} + 2\epsilon)$-balanced separator (as opposed to the $(b + \epsilon)$ balance guaranteed in our result). We are given that there is a set of vertices $F$ with $|F| \leq k$, so that every strongly connected component of $G \setminus F$ has at most $\frac{1}{2}n$ vertices. We first compute an $(\epsilon, k)$ sample set $T$ of size $\OO\left ( \frac{k \log \frac{1}{\epsilon}}{\epsilon^2}\right)$ using our result on sample sets,~\Cref{thm:sample_set}. Since $T$ is a sample set, each SCC of $G \setminus F$ has at most $(\frac{1}{2} + \epsilon)|T|$ terminals. We will try to find a set of vertices $F'$ such that $|F'| \leq k$ and every SCC of $G \setminus F'$ has at most $(\frac{3}{4} + \epsilon)|T|$ terminals. The property of sample sets will then again imply that $F'$ is a $(\frac{3}{4} + 2\epsilon)$-balanced separator.

Consider the toplogical ordering $C_1, C_2 \ldots C_{\ell}$ of the SCC's of $G \setminus F$, so that there is no arc from $C_j$ to $C_i$ for $i,j \in [\ell]$ with $j > i$. Consider the smallest index $i^*$ so that the union $L = \bigcup_{i = 1}^{i^*} C_i$ contains at least $(\frac{1}{4} - \epsilon)|T|$ terminals. Since no component contains more than $(\frac{1}{2} + \epsilon)|T|$ terminals, it follows that $L$ contains at most $\frac{3}{4}|T|$ terminals. Also, by definition, $V(G) \setminus L$ contains at most $(\frac{3}{4} + \epsilon)|T|$ terminals.

The algorithm proceeds as follows. First, guess $T \cap L$ and $T \cap (V(G) \setminus L)$. This can be done in time $2^{\OO\left(\frac{k \log \frac{1}{\epsilon}}{\epsilon^2}\right)}$ since $|T| = \OO\left(\frac{k\log \frac{1}{\epsilon}}{\epsilon^2}\right)$. Next, we compute a directed min-vertex cut between $T \cap (V(G) \setminus L)$ and $T \cap L$. Since $F$ is a vertex cut between these sets of size at most $k$, this min-cut $F'$ must be of size at most $k$ as well. But then the set of vertices $F'$ is a $(\frac{3}{4} + \epsilon)$-balanced separator with respect to the set $T$. Using the property of sample sets, it follows that $F'$ is a $(\frac{3}{4} + 2\epsilon)$-balanced separator in $G$.
 
 In order to make the loss in the balance factor only an additive $\OO(\epsilon)$ we need a slightly more sophisticated argument. We accomplish this using a reduction to the \SMC{} problem, where we are given pairs $(s_i, t_i), i \in [\ell]$, and the goal is to separate $s_i$ from all $t_j$, $j \leq i$, for each $i$, by deleting minimum number of vertices. This problem, which is a special case of {\textsc{Directed Multicut}}{}, was first defined by~\cite{chen2008fixed} in their FPT algorithm for \DFAS.

\paragraph{Approximation algorithm for Directed Balanced Separators:} Our $\OO(\sqrt{\log k})$ approximation algorithm proceeds similar to the FPT algorithm. The only key technical difference is that to compute a balanced separator with respect to the terminal set, we cannot guess where the terminals of the sample set are (this requires FPT time); instead we use the algorithm of~\cite{agarwal2005log} to obtain a balanced separator with respect to the sample set. While the algorithm of~\cite{agarwal2005log} has an approximation ratio of $\OO(\sqrt{\log n})$, when we need a balanced separator with respect to a terminal set $T$, this algorithm can be made to work with an approximation ratio of $\OO(\sqrt{\log |T|})$. On a high level, the reason is quite simple: the structure theorem of Arora, Rao and Vazirani \cite{arora2009expander} is a statement about vectors. As long as we apply the structure theorem to only $|T|$ vectors, we get an approximation ratio of $\OO(\sqrt{\log |T|})$.

\paragraph{Vertex Sparsifiers:} Our result on vertex sparsifiers is similar in spirit to the result by Kratsch and \Wahlstrom~\cite{kratsch2012representative} and proceeds by identifying \emph{irrelevant vertices}. However instead of using techniques based on matroids, we use our result on all-subsets important separators. Given a terminal set $T$, we are interested in preserving cuts across partitions of the terminal set of size up to $c$. We will define an irrelevant vertex as a vertex that is (a) not in $T$ and (b) not in any $A$-$B$ important separator of size $\leq c$ for any partition $A \cup B$ of $T$. By our result on all-subsets important separators, we can bound the number of relevant (non-irrelevant) vertices by ${|T| \choose  \leq 3c} 2^{\OO(c)}$, and identify this set in time ${|T| \choose \leq 3c} 2^{\OO(c)} (m + n)$. Thus at least $|V(G)| - |T| -  {|T| \choose \leq 3c} 2^{\OO(c)}$ vertices are irrelevant. Let $I$ be the set of irrelevant vertices. We now apply the standard operation of ``closing'' the irrelevant vertices $I$, which is to delete each $v \in I$, and for every pair $(w,y)$ where $w$ is an in-neighbour of $v_1$ and $y$ is an out-neighbour of $v_2$ for some $v_1,v_2 \in I$, to add the arc $(w,y)$. This operation is the vertex equivalent of edge contraction. The key observation here is that the closure operation \emph{can be applied to the entire set $I$ at once} as opposed to applying it to one vertex of $I$ and restarting the whole algorithm. This helps us achieve a linear-time algorithm. Thus we obtain another graph $G'$ with $V(G') \subseteq V(G)$, where $|V(G')| \leq |V(G) \setminus I| \leq  {|T| \choose \leq 3c} 2^{\OO(c)}$.

\section{All-Subsets Important Separators}
The goal of this section is to prove our result on all-subsets important separators.
\impsep*

As described in the introduction, our proof proceeds by capturing the relationship between important separators and closest sets. We then use well-known connectivity properties of closest sets to obtain our result.

\begin{definition}[Closest set]
For sets of vertices $X,T\subseteq V(G)$, $X$ is closest to $T$ if $X$ is the unique vertex mincut between $X$ and $T$.
\end{definition}

We use the following well-known fact about closest cuts.

\begin{lemma}[Lemma 18 of~\cite{he2021near}]\label{lemma:closest}
$X$ is closest to $T$ if and only if for each vertex $v\in X$, there are $|X|+1$ paths from $X$ to $T$ that are vertex-disjoint except at $v$, which appears in exactly two of these paths.
\end{lemma}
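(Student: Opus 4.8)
The plan is to derive both directions from Menger's theorem for vertex connectivity --- in the version that permits deleting endpoints, matching the paper's cut convention --- together with its capacitated (max-flow/min-cut) refinement. It is convenient to record first that $X$ is itself an $X$-$T$ separator of size $|X|$, so the minimum $X$-$T$ vertex separator has size at most $|X|$, and that ``$X$ is closest to $T$'' is equivalent to the conjunction of (i) every $X$-$T$ separator has size at least $|X|$, and (ii) $X$ is the \emph{only} $X$-$T$ separator of size exactly $|X|$.

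For the backward direction, assume the path condition holds for every $v \in X$. Picking any one $v$ and discarding one of the two paths through it from its collection leaves $|X|$ pairwise vertex-disjoint $X$-$T$ paths, so by Menger the minimum separator has size at least $|X|$, hence exactly $|X|$; this gives (i). For (ii), let $Y$ be an $X$-$T$ separator with $|Y| = |X|$ and $Y \neq X$. Then $X \setminus Y \neq \emptyset$, and any $v \in X \setminus Y$ satisfies $v \notin T$ (since $X \cap T$ lies inside every $X$-$T$ separator). Apply the path condition to this $v$: $Y$ must meet each of the $|X|+1$ paths; it meets the two paths through $v$ at two vertices other than $v$ that are distinct because those paths meet only at $v$, and it meets the remaining $|X|-1$ paths at $|X|-1$ further vertices, all distinct from each other and from the previous two because the family is vertex-disjoint off $v$. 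This forces $|Y| \geq |X|+1$, a contradiction, so (ii) holds and $X$ is closest to $T$.

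For the forward direction, assume $X$ is closest to $T$ and fix $v \in X$. Build the standard network encoding vertex-capacitated $X$-$T$ flow: split each vertex $u$ into $u_{\mathrm{in}} \to u_{\mathrm{out}}$ with capacity $1$ for $u \neq v$ and capacity $2$ for $u = v$, give every original arc infinite capacity, and attach an infinite-capacity super-source into all of $X$ and an infinite-capacity super-sink out of all of $T$. A minimum cut here is finite (the set $X$ gives a cut of cost $|X| + [\,v \in X\,] = |X|+1$) and hence corresponds to an $X$-$T$ vertex separator $Z$ of cost $|Z| + [\,v \in Z\,]$. If $v \notin Z$ then $|Z| \geq |X|+1$, since the only separator of size $|X|$ is $X$ and it contains $v$; if $v \in Z$ then $|Z| \geq |X|$ makes the cost at least $|X|+1$. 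Thus the minimum cut, and therefore the maximum flow, equals exactly $|X|+1$. Take an integral acyclic maximum flow and decompose it into paths: since every split edge other than $v_{\mathrm{in}} \to v_{\mathrm{out}}$ has capacity $1$, at most one decomposition path passes through any vertex $\neq v$ and at most two pass through $v$; there cannot be at most one through $v$, for then all $|X|+1$ paths would be pairwise vertex-disjoint, contradicting that the minimum $X$-$T$ separator has size $|X|$. Hence exactly two decomposition paths run through $v$, no other vertex is shared by any two of them, and they translate back to $|X|+1$ $X$-$T$ paths with precisely the claimed overlap structure. (When $v$ happens to lie in $T$ the decomposition could carry two units along the single-vertex path at $v$; this degenerate case, and similar ones with $X \cap T \neq \emptyset$, are handled directly by using the trivial path $(v)$ twice.)

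The main obstacle is the lower bound in the forward direction: showing that doubling the capacity of $v$ pushes the minimum $X$-$T$ cut up from $|X|$ to $|X|+1$. This is exactly where one must invoke that $X$ is the \emph{unique} mincut rather than merely a mincut --- concretely, that every $X$-$T$ separator of size $|X|$ contains the chosen vertex $v \in X$, so no such separator survives once we forbid deleting $v$ cheaply. The remaining ingredients (Menger's theorem, acyclic integral flow decomposition, and the pigeonhole count in the backward direction) are routine, and the only additional care needed is for the degenerate cases noted above.
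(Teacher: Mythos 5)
Your proof is correct. The paper does not prove this statement itself --- it imports it verbatim as Lemma 18 of the cited work --- so there is no in-paper argument to compare against; your argument (the vertex-splitting network with capacity $2$ at $v$, the lower bound of $|X|+1$ on the cut from uniqueness of the mincut, and the pigeonhole count of $|Y|\ge|X|+1$ for the converse) is the standard route and establishes both directions soundly. Your handling of the degenerate situation $X\cap T\neq\emptyset$ via trivial one-vertex paths is consistent with the paper's convention that vertex cuts may delete source and sink vertices.
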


\begin{definition}[Minimal separators]
Given $X, S, T \subseteq V(G)$ such that $S \cap T = \emptyset$, $X$ is called an $S$-$T$ (inclusion-wise) minimal separator if in $G \setminus X$, there is no path from any $s \in S$ to any $t \in T$, and further, for every $v \in X$, there is a path between some $s \in S$ and $t \in T$ in $G \setminus (X \setminus \{v\})$.
\end{definition}

\begin{definition}
Given $X, S \subseteq V(G)$, the reachability set of the source vertices $S$ after removing $X$, that is, the set of all vertices $v$ reachable from some vertex of $S$ via a directed path after removing $X$, is denoted by $R^G_S(X)$.%
\end{definition}

\begin{definition}
Given $X, S, T \subseteq V(G)$ with $S \cap T = \emptyset$, a minimal $S$-$T$ separator $X$ is called an $S$-$T$ important separator if for every $S$-$T$ separator $X' \subseteq V(G)$ with $|X'| \leq |X|$, $R^G_S(X') \supset R^G_S(X)$ does not hold.
\end{definition}

We will drop the subscript or the superscript in $R^G_S(X)$ when the graph or the set of source vertices is clear from the context.

Informally, the idea of important separators is similar to that of a closest set: they are separators which cannot be pushed ``closer'' to the sink, without increasing the cut-size. The next lemma captures this relationship formally.

\begin{lemma}\label{lemma:equivalence}
Given $X, S, T \subseteq V(G)$, $X$ is an important $S$-$T$ separator if and only if (a) $X$ is a  minimal $S$-$T$ separator and (b) $X$ is closest to $T$.
\end{lemma}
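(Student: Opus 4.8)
The plan is to prove \Cref{lemma:equivalence}, the equivalence: $X$ is an important $S$-$T$ separator if and only if (a) $X$ is a minimal $S$-$T$ separator and (b) $X$ is closest to $T$. One direction is essentially by definition, so the content is in the other direction. I would organize the argument around the following observation: whether $X$ is ``closest to $T$'' is a statement about the mincut between $X$ and $T$ (i.e.\ reachability cannot be pushed further toward $T$ without increasing size), while being an ``important separator'' is a statement about being undominated among \emph{all} $S$-$T$ separators of size $\le |X|$ with respect to the reachability order $R_S(\cdot)$. The bridge is that a minimal separator $X$ decomposes the question: any $S$-$T$ separator $X'$ with $R_S(X') \supseteq R_S(X)$ must ``lie between $X$ and $T$'', so comparing $X$ against all $S$-$T$ separators reduces to comparing $X$ against $X$-$T$ separators.

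First I would handle the easy direction: suppose $X$ is an important $S$-$T$ separator. Minimality is immediate since every important separator is inclusion-wise minimal by definition (the definition of minimal separator in the excerpt; note the paper's important-separator definition already requires minimality, so (a) is free). For (b), suppose for contradiction that $X$ is not closest to $T$; then there is a distinct vertex mincut $X'$ between $X$ and $T$ with $|X'| \le |X|$ and $X' \ne X$ that is ``closer to $T$'', i.e.\ $R_X(X') \supsetneq R_X(X)$ (choosing $X'$ to be the unique closest mincut between $X$ and $T$). I would argue $X'$ is then also an $S$-$T$ separator — since every $S$-to-$T$ path must pass through $X$ (as $X$ is an $S$-$T$ separator) hence reach the ``$X$-side'' and then cross $X'$ to get to $T$ — and that $R^G_S(X') \supsetneq R^G_S(X)$, contradicting importance. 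After intersecting $X'$ with the relevant reachable region one can also ensure $|X'| \le |X|$; the minimality of $X$ is what guarantees $R^G_S(X)$ actually ``touches'' all of $X$ so the strict containment at the $X$-$T$ level lifts to a strict containment at the $S$-$T$ level.

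For the converse, suppose $X$ is a minimal $S$-$T$ separator and closest to $T$. Minimality gives (a) of the important-separator definition. For the domination condition, take any $S$-$T$ separator $Y$ with $|Y| \le |X|$ and suppose $R^G_S(Y) \supseteq R^G_S(X)$; I must show equality. Consider $Y' := Y \cap \overline{R^G_S(X)}$ — actually, better: restrict attention to $Y$ and use that $R^G_S(X)$ being a ``$S$-side'' region means the part of $Y$ reachable within $R_S(Y)$, call it $Y^\star$, is an $X$-$T$ separator of size $\le |Y| \le |X|$ with $R_X(Y^\star) \supseteq R_X(X)$. Since $X$ is the \emph{unique} mincut between $X$ and $T$, any $X$-$T$ separator of size $\le |X|$ with reachability containing $R_X(X)$ must equal $X$ (this is exactly the ``unique vertex mincut'' property: uniqueness forbids a strictly larger reachable region at the same size, and if size were smaller we'd contradict that $X$ is a \emph{mincut}; combining forces $Y^\star = X$). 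From $Y^\star = X$ and $R^G_S(Y) \supseteq R^G_S(X)$ I would then conclude $R^G_S(Y) = R^G_S(X)$, which is what importance requires.

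The main obstacle I anticipate is making the ``restriction'' arguments fully rigorous — precisely relating $R^G_S(Y)$, $R^G_S(X)$, the induced subgraph on $\overline{R^G_S(X)}$ or on $R_S(Y)$, and the notion of ``$X$-$T$ separator'' — because one must be careful that deleting vertices of $Y$ not reachable from $S$ does not change $R^G_S(Y)$, and that a mincut between $X$ and $T$ is computed in $G$ whereas the natural object here lives in a subgraph. The clean way around this is to observe $R^G_S(X)$ equals the set reachable from $S$ in $G \setminus X$, let $R := R^G_S(X)$, and work with the cut $N^+(R)$-type boundary: $X \supseteq$ the out-boundary of $R$, and closeness to $T$ says this boundary cannot be ``advanced''; then phrase everything in terms of reachable sets rather than induced subgraphs to avoid bookkeeping. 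I'd expect the proof to be about one page once these reductions are set up carefully, with \Cref{lemma:closest} not strictly needed for this lemma (it's the flow characterization used later) but the \emph{definition} of closest set (uniqueness of the $X$-$T$ mincut) doing all the work.
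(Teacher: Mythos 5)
Your proposal is correct in substance, and the two directions compare differently against the paper's proof. For the forward direction (important $\Rightarrow$ closest) you follow the same route as the paper: take a different $X$-$T$ mincut $X'$, observe it is an $S$-$T$ separator, and show it dominates $X$. The one step you leave as an assertion, $R^G_S(X) \subseteq R^G_S(X')$, is exactly where the paper does real work: it uses the mincut property of $X'$ to get, for each $y \in X'$, a path from $y$ to $T$ with no internal vertex of $X$ (otherwise some vertex of $X$ could reach $T$ in $G \setminus X'$), so that any $S$-to-$v$ path in $G \setminus X$ that hit $X'$ would extend to an $S$-$T$ path in $G \setminus X$, a contradiction; your appeal to minimality of $X$ only supplies the strictness of the containment (some vertex of $X \setminus X'$ lands in $R^G_S(X')$), not the containment itself, so that argument still needs to be filled in. For the converse direction your route is genuinely different from, and more elementary than, the paper's: the paper invokes the flow characterization of closest sets (\Cref{lemma:closest}, the $|X|+1$ paths disjoint except at $v$) to exhibit a surviving $X$-to-$T$ path, whereas you use only the definitional uniqueness of the $X$-$T$ mincut. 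Your argument is sound once cleaned up: if $Y$ is an $S$-$T$ separator with $|Y| \le |X|$ and $R^G_S(Y) \supseteq R^G_S(X)$, then minimality of $X$ gives, for each $x \in X \setminus Y$, an $S$-to-$x$ path whose vertices other than $x$ lie in $R^G_S(X) \subseteq R^G_S(Y)$, so $x \in R^G_S(Y)$ and any $x$-to-$T$ path avoiding $Y$ would contradict that $Y$ separates $S$ from $T$; hence $Y$ is an $X$-$T$ separator, so $|Y| \ge |X|$, so $Y$ is an $X$-$T$ mincut and, by closestness, $Y = X$, ruling out strict domination. Note that the whole set $Y$ already does the job, so the restriction to $Y^\star$ is unnecessary, and the quantity $R_X(X)$ you invoke is $\emptyset$ under the paper's conventions, so that bookkeeping should be dropped. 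What each approach buys: yours makes \Cref{lemma:equivalence} self-contained (no flow lemma), while the paper's choice costs nothing globally since \Cref{lemma:closest} is needed anyway in the proof of \Cref{thm:importantbound}.
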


\begin{proof}
Notice that by definition, any important $S$-$T$ separator must be a minimal $S$-$T$ separator. First, suppose to the contrary that $X$ is an important $S$-$T$ separator but $X$ is not closest to $T$. Let $Y$ be a vertex min-cut between $X$ and $T$ with $Y \neq X$. Since $X$ is not closest to $T$, such a $Y$ indeed does exist. Clearly, $|Y| \leq |X|$. We claim that $R(X) \subset R(Y)$, which will contradict that $X$ is an important separator. First, let us show that $R(X) \subseteq R(Y)$. Suppose for some vertex $v$, $v \in R(X)$ but $v \notin R(Y)$. Since $Y$ is a min-cut, there are vertex disjoint paths between each vertex $y \in Y$ and some vertex $t(y) \in T$ such that no internal vertex of these paths is from $X$. Consider a path $P$ from some vertex $s \in S$ to $v$ in $G \setminus X$. There must be such a path, since $v \in R(X)$. Let $y$ be a vertex of $Y$ on this path: there has to be such a vertex $y$ since $v \notin R(Y)$. Then $y$ is reachable from $s$ in $G \setminus X$. Also, there is a path from $y$ to $t(y)$ which does not contain any vertex of $X$. It follows that there is a path from $s$ to $t(y)$ in $G \setminus X$, which is a contradiction since $X$ is an $S$-$T$ separator. Since $X$ is a minimal $S$-$T$ separator and $Y \neq X$, it follows that there exists a $v \in X \cap R(Y)$, and hence $R(X) \subset R(Y)$.

Next, suppose that $X$ is both closest to $T$ and is also a minimal $S$-$T$ separator. Assume for the sake of contradiction that $X$ is not important. Since $X$ is not important, there is another separator $X'$ with $|X'| \leq |X|$ which is a minimal $S-T$ separator, and further $R(X) \subset R(X')$. Since $X' 
\neq X$, there exists a $v \in X \cap (V(G) \setminus X')$. By~\Cref{lemma:closest}, for every vertex $v \in X$ there exists a collection of $|X| + 1$ paths from $X$ to $T$ which are vertex disjoint except at $v$, which appears twice. Since $X'$ does not delete $v$, it follows that there is at least one such path $P$ from some vertex $x \in X$ to some vertex $t \in T$ which survives in $G \setminus X'$. We next show that there must be a path from some vertex of $S$ to $x$ in $G \setminus X'$ and this will imply a path from $S$ to $t$ in $G \setminus X'$, a contradiction.

Consider any path from some vertex $s \in S$ to $x$, all of whose internal vertices are not in $X$. Such a path must exist, for otherwise $X$ will not be a minimal $S$-$T$ separator since $X \setminus \{x\}$ will also be an $S$-$T$ separator. Every internal vertex on this path is reachable from $s$ in  $G \setminus X$. But $R(X) \subseteq R(X')$, which means every internal vertex on this path is reachable from $s$ in $G \setminus X'$ as well. This in turn means that there is a path from $s$ to $x$ in $G \setminus X'$, a contradiction.
\end{proof}

\begin{lemma}[Theorem 8.51 of~\cite{cygan2015parameterized}]
Given a digraph $G$ and two disjoint sets $A,B \subseteq V(G)$, there are at most $4^k$ $A$-$B$ important separators of size $\leq k$.
\end{lemma}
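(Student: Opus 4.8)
The plan is to prove this by the classical ``branch on a minimum‑cut vertex'' argument, organized around a potential that strictly decreases in every branch. Write $\lambda=\lambda(A,B)$ for the minimum size of an $A$–$B$ separator. If $\lambda>k$ there is no $A$–$B$ separator of size $\le k$ at all and the bound $0\le 4^k$ is vacuous, so I would assume $\lambda\le k$; the quantity $\mu:=2k-\lambda$ is then a non‑negative integer bounded by $2k$, and I will prove the slightly stronger statement that the number of $A$–$B$ important separators of size $\le k$ is at most $2^{\mu}=2^{2k-\lambda}\le 4^k$, by induction on $\mu$ (the base case $\mu=0$, which forces $k=0$, having the single important separator $\emptyset$).

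The first ingredient is an uncrossing fact about minimum cuts. Let $Z^*$ be the minimum $A$–$B$ separator whose reachability set $R(Z^*)=R^G_A(Z^*)$ is inclusion‑wise maximal. Using submodularity of the out‑neighbourhood set function $U\mapsto|\partial^+(U)|$ on vertex sets that contain $A$ and avoid $B$ (together with the identity $|\partial^+(R(Y))|=|Y|$ for a minimal separator $Y$), one gets that for any two minimum $A$–$B$ separators the set $\partial^+\!\big(R(Z_1)\cup R(Z_2)\big)$ is again a minimum $A$–$B$ separator, which forces $Z^*$ to be unique. The same computation shows that \emph{every} $A$–$B$ important separator $X$ with $|X|\le k$ satisfies $R(Z^*)\subseteq R(X)$: otherwise $\partial^+\!\big(R(X)\cup R(Z^*)\big)$ would be an $A$–$B$ separator of size at most $|X|+\lambda-\lambda=|X|\le k$ (using that $\partial^+\!\big(R(X)\cap R(Z^*)\big)$ has size $\ge\lambda$) but with reachability set strictly containing $R(X)$, and a minimal sub‑separator of it would contradict the importance of $X$.

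Now fix any vertex $z\in Z^*$; by minimality of $Z^*$ it has an in‑arc from a vertex of $R(Z^*)$. Every important separator $X$ of size $\le k$ falls into exactly one of two classes. If $z\in X$, then by \Cref{lemma:equivalence} one verifies that $X\setminus\{z\}$ is an $A$–$B$ important separator of size $\le k-1$ in $G-z$, the map $X\mapsto X\setminus\{z\}$ is injective, and since $z$ lies in a minimum $A$–$B$ separator we have $\lambda(G-z)=\lambda-1$, so the recursive instance has potential $2(k-1)-(\lambda-1)=\mu-1$. If $z\notin X$, then $R(Z^*)\subseteq R(X)$ and the in‑arc into $z$ force $z\in R(X)$; hence $z$ cannot reach $B$ in $G\setminus X$, and (again via \Cref{lemma:equivalence}, using that ``closest to $B$'' is independent of the source side) $X$ is an important separator for the instance $(G,\,A\cup\{z\},\,B,\,k)$ in which $z$ has been promoted to a source terminal, which is the instance the second branch recurses on. By induction the two branches together give at most $2^{\mu-1}+2^{\mu-1}=2^{\mu}$ important separators, provided each recursive instance has potential $\le\mu-1$. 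For the first branch this is the displayed computation; for the second branch it reduces to the claim that promoting $z$ to a source \emph{strictly increases} the minimum separator size (or renders the branch infeasible), and this claim — which one proves by exploiting that $Z^*$ was chosen with inclusion‑maximal reachability set and that $z$ lies on it — is the technical heart of the proof and the step I would carry out most carefully, following Chapter~8 of~\cite{cygan2015parameterized}. The same recursion, carried out constructively with the relevant minimum cuts computed by augmenting paths at each of the $\le 2^{2k}$ leaves, also yields the folklore enumeration algorithm, although only the counting bound is needed here.
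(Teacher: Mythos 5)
The paper itself offers no proof of this lemma: it is imported verbatim as Theorem~8.51 of Cygan et al., so the relevant comparison is with the textbook argument, and your proposal reconstructs exactly that argument — the potential $\mu=2k-\lambda$, the minimum $A$--$B$ separator $Z^*$ with inclusion-maximal reachable set obtained by uncrossing/submodularity, the containment $R(Z^*)\subseteq R(X)$ for every important separator $X$ of size $\le k$, and the two-way branch on a vertex $z\in Z^*$. The parts you actually write out (uniqueness of $Z^*$, the containment, the branch $z\in X$ with $\lambda(G-z)=\lambda-1$, and the transfer of minimality and of ``closest to $B$'' to the instance where $z$ becomes a source) are sound.

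There is, however, a genuine gap: the whole $4^k$ bound hinges on the claim that in the second branch the minimum separator size strictly increases, and you explicitly leave this unproven, deferring it to the textbook (``the technical heart \ldots the step I would carry out most carefully''). Moreover, under the convention used in this paper, where separators are allowed to contain terminal vertices, the claim as you state it is simply false: $Z^*$ contains $z$, and deleting $z$ vacuously cuts $z$ from $B$, so $Z^*$ is already an $(A\cup\{z\})$--$B$ separator of size $\lambda$ and $\lambda(A\cup\{z\},B)=\lambda$; the potential then does not decrease and the induction collapses. To close the gap one must first adopt the textbook convention that separators avoid the terminal sets (or, equivalently, restrict the branch-2 recursion to separators avoiding $z$), and then actually prove the strict increase, e.g.\ by one more uncrossing: a minimum $(A\cup\{z\})$--$B$ separator $Z'$ of size $\lambda$ is a minimum $A$--$B$ separator; uncrossing it with $Z^*$ and invoking maximality of $R(Z^*)$ forces $R_A(Z')\subseteq R(Z^*)$ and hence $Z'\subseteq R(Z^*)\cup Z^*$; but among $\lambda$ vertex-disjoint $A$--$B$ paths, the suffix from $z$ to $B$ of the path through $z$ meets $R(Z^*)\cup(Z^*\setminus\{z\})$ nowhere and therefore avoids $Z'$, contradicting that $Z'$ separates $z$ from $B$. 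Without this (or an equivalent) argument, and without addressing the terminal-deletion convention, the proposal does not yet establish the bound.
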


Equipped with these results, we are now ready to prove our main theorem in this section which bounds the number of important separators between across all subsets $A \subseteq S$ and $B \subseteq T$.

\impsep*

\begin{proof}
Let $X$ be an important $A$-$B$ separator for some $A \subseteq S$ and $B \subseteq T$ with $|X| \leq k$. We will show that $S$ is also an $A^*$-$B^*$ important separator for some $A^* \subseteq S$ and $B^* \subseteq T$ with $|A^*| \leq k$ and $|B^*| \leq 2k$.

By \Cref{lemma:equivalence}, it must be the case that $X$ is closest to $B$. Also, there are vertex disjoint paths from each vertex $x$ of $X$ to some vertex $t(x)$ of $B$. Also, we know that for every $v$ there exists $|X| + 1$ disjoint paths from $X$ to $B$ which are vertex disjoint except at $v$ which appears twice.

Consider the vertex-flow network $H$ obtained by adding a super-source $s^*$ and a super-sink $t^*$. We add outgoing arcs from $s^*$ to each vertex of $X$, and incoming arcs to $t^*$ from each vertex of $B$. The capacity of each vertex is $1$. Let $f$ be the (maximum) flow corresponding to any set of $|X|$ vertex disjoint paths between $X$ and $B$, and let the endpoints of these paths in $B$ be the set $Y$. Now we do the following operation for each vertex $v$. Consider the modified flow network $H_v$ obtained from $H$ where the underlying graph is same, but the vertex capacity of $v$ is raised to $2$. $f$ is not a maximum flow in $H_v$, since there exists $|X| + 1$ paths from $X$ to $B$ which are vertex disjoint except at $v$ which appears in two of these paths. Thus, there is an augmenting path for $f$ in $H_v$. We augment the flow $f$ along this augmenting path to obtain a new flow $f_v$. The new flow has one additional endpoint in $B$, so that the endpoints of the paths of this new flow $f_v$ are either in $Y$ or at the newly added vertex $u \in B \setminus Y$. Let us denote by $Y_v$ the set of the new endpoints. 

Let $B^* = \bigcup_{v \in X} Y_v$. Notice that $|B^*| \leq 2k$, since each $Y_v = Y \cup \{u\}$ for some vertex $u$ for each $v \in X$. Also since $X$ is minimal, there must exist paths $P_v$ for each $v \in X$ from some vertex $s(v)$ of $A$ to $v$ which do not contain any other vertex of $X$. Let $A^* = \bigcup_{v \in X} s(v)$. We now claim that $X$ is also an important $A^*-B^*$ separator.

First, observe that $X$ must be a minimal $A^*$-$B^*$ separator: there exists a path from some vertex of $A^*$ to each vertex $v \in X$ containing no other vertex of $X$, and similarly there exists a path from each vertex $v$ of $X$ to some vertex in $Y \subseteq B^*$ containing no other vertex of $X$. Now suppose that $X$ is not an $A^*$-$B^*$ important separator. Then there exists another $A^*$-$B^*$ minimal separator $X'$ with $|X'| \leq |X|$ with $R(X') \supset R(X)$. Since $X' \neq X$, there exists a vertex $v \in X \cap (V(G) \setminus X')$. Consider the set $Y_v$ and the set of $|X| + 1$ paths from $X$ to $Y_v$ which are vertex disjoint except at $v$, which appears twice. Since $X'$ does not delete $v$, at least one of these paths from some vertex $x \in X$ to $t \in Y_v$ survives in $G \setminus X'$. By construction, there is a path $P$ from some vertex $s \in A^*$ to $x \in X$, whose every internal vertex is not in $X$. Since every internal vertex of $P$ is in $R(X)$, it must also be in $R(X')$. It follows that there is a path from $s$ to $x$ in $G \setminus X'$. Appending this to the path from $x$ to $t$ we obtain a path from $s$ to $t$ in $G \setminus X'$, contradicting the fact that $X'$ is an $A^*$-$B^*$ separator.

Finally, it is clear that the number of $A^*$-$B^*$ important separators where $A^* \subseteq S$ and $B^* \subseteq T$ with $|A^*| \leq k$ and $|B^*| \leq 2k$ is at most $4^k$ times the number of 
ways of choosing $A^*$ and $B^*$,
and hence is at most $4^k {|S| \choose \leq k}{|T| \choose \leq 2k}$.
Furthermore, once we fix $A^*$ and $B^*$, all the important $A^*$-$B^*$ separators of size $\leq k$ can be enumerated in time $\mathcal{O}(4^k \cdot k^2 \cdot (m+n))$, see Theorem 8.51 of~\cite{cygan2015parameterized}. The set of all subsets of $A$ of size at most $k$ and the set of all subsets of $B$ of size at most $k$ can be enumerated in time $\OO(k {|S| \choose k})$ and $\OO(k {|T| \choose k})$ respectively. This completes the proof.
\end{proof}

\section{Detection Sets and Sample Sets in Directed Graphs}
The goal of this section is to prove our result on detection sets and sample sets in directed graphs. The following theorems are our main results in this section.
\detectionset*
\sampleset*

We start by defining a notion of $(\epsilon,k)$ nets and recalling the definitions for $(\epsilon, k)$ detection sets and  $(\epsilon,k)$ sample sets from the overview. These definitions naturally extend those in~\cite{fm06} for undirected graphs.

\begin{definition}[Net]
Given a directed graph $G$, an $(\epsilon, k)$ net is a set of terminals $T \subseteq V(G)$ satisfying the following property: for every set of vertices $F$ with $|F| \leq k$, the following two conditions are met:
\begin{enumerate}
\item For every SCC $C$ of $G \setminus F$ with $|C| \geq \epsilon n$ we have 
$$|C \cap T| \geq 1.$$
\item Let $C^*$ be the union of all except one SCC in $G \setminus F$ such that $|C^*| \geq \epsilon n$, then we have
$$|C^* \cap T| \geq 1.$$
\end{enumerate}
\end{definition}

\begin{definition}[Detection Set]\label{def:detection_set}
Given a directed graph $G$, an $(\epsilon, k)$ detection set is a set of terminals $T \subseteq V(G)$ satisfying the following property: for every set of vertices $F$ with $|F| \leq k$ such that $V(G \setminus F)$ can be partitioned into $(A,B)$ with $|A|, |B| \geq \epsilon n$ and there are no arcs from $B$ to $A$, there exists $t_1, t_2 \in T$ such that there is no $t_1$-$t_2$ path in $G \setminus F$.
\end{definition}

\begin{definition}[Sample Set]\label{def:sample_set}
Given a directed graph $G$, an $(\epsilon, k)$ sample set is a set of terminals $T \subseteq V(G)$ satisfying the following property: for every set of vertices $F$ with $|F| \leq k$, and every SCC $C$ of $G \setminus F$, we have 
$$\left |\frac{|C \cap T|}{|T|} - \frac{|C|}{n}\right | \leq \epsilon.$$

\end{definition}
We note that these definitions are almost identical to the one in~\cite{fm06}, with the only difference being that we deal with directed graphs and strongly connected components in lieu of undirected graphs and connected components. Before proving the main results of this section, we show that $(\epsilon,k)$-nets are also $(\epsilon,k)$-detection sets. The proof is similar to that in undirected graphs.

\begin{lemma}
Every $(\epsilon, k)$-net is also an $(\epsilon,k)$ detection set.

\end{lemma}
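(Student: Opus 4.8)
The plan is to show the contrapositive-style statement directly: assume $T$ is an $(\epsilon,k)$-net and that $F$ is a set of at most $k$ vertices for which $V(G\setminus F)$ partitions into $(A,B)$ with $|A|,|B|\ge\epsilon n$ and no arcs from $B$ to $A$; I then need to produce two terminals $t_1,t_2\in T$ with no $t_1$-$t_2$ path in $G\setminus F$. The key structural observation is that because there are no arcs from $B$ to $A$, no SCC of $G\setminus F$ can meet both $A$ and $B$ — each SCC lies entirely inside $A$ or entirely inside $B$. Consequently the SCCs of $G\setminus F$ split into those contained in $A$ (whose union is all of $A$) and those contained in $B$ (whose union is all of $B$).

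First I would pick the terminal on the $A$-side. Let $C$ be the SCC of $G\setminus F$ contained in $A$ that is contained in some prefix of the topological order, or more simply: among the SCCs inside $A$, their union equals $A$ which has size $\ge\epsilon n$, so by condition (2) of the net (applied to the family of SCCs inside $A$, which is all SCCs except those inside $B$, and we can further drop one SCC of $B$ to stay in the ``all except one'' form — here is where a small amount of care is needed) we get a terminal $t_1$ in $A$. Actually the cleanest route is: either some single SCC inside $A$ has size $\ge\epsilon n$, in which case condition (1) gives a terminal $t_1$ in it; or every SCC inside $A$ has size $<\epsilon n$, in which case $A$ is a union of at least two SCCs, so $A$ itself is ``the union of all except one SCC'' after we throw away all of $B$'s SCCs and one more — but that does not literally match the definition. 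The robust fix is to note $|A|\ge\epsilon n$ and apply condition (2) with $C^*=A$: $A$ is the union of all SCCs of $G\setminus F$ except the ones inside $B$; if $B$ is a single SCC this is literally ``all except one,'' and if $B$ is several SCCs we still have $|C^*\cap T|\ge 1$ by monotonicity since $A$ contains the union of all-but-one SCC of size $\ge\epsilon n$ as a subset only when... I would instead argue: the union of all SCCs except one (chosen to lie in $B$) contains $A$, and if that union has size $\ge\epsilon n$ then condition (2) gives a terminal in it, but that terminal might land in $B$, not $A$. So the genuinely clean argument is the dichotomy above: if some SCC $C\subseteq A$ has $|C|\ge\epsilon n$, use (1) to get $t_1\in C\subseteq A$; otherwise all SCCs of $G\setminus F$ contained in $A$ are small, hence the union of all SCCs of $G\setminus F$ except one SCC $D\subseteq B$ still has size $\ge |A|\ge \epsilon n$, and moreover by shrinking we can choose that union to be exactly the SCCs inside $A$ together with all-but-one inside $B$; applying (2) repeatedly / to the right choice of omitted SCC pins a terminal into $A$. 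I will phrase this carefully in the writeup.

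Symmetrically, running the same argument on $B$ (using that $B$ has no arcs leaving it toward $A$, i.e. $B$ is a union of SCCs that appears as a suffix in the condensation's topological order) produces a terminal $t_2\in B$. Then $t_1\in A$, $t_2\in B$, and since there is no arc from $B$ to $A$ in $G\setminus F$ — and in fact no path from $B$ to $A$ either, because any such path would use an arc from $B$ to $A$ somewhere along its length — there is no $t_1$-$t_2$ path... wait, I need no $t_1\to t_2$ or no $t_2\to t_1$? The definition asks for some ordered pair $(t_1,t_2)$ with no directed path $t_1\to t_2$. Since there is no path from $B$ to $A$, taking $t_1:=t_2\in B$ as the source and $t_2:=t_1\in A$ as the target gives no directed path, so the pair $(t_2,t_1)$ (source in $B$, sink in $A$) works. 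That completes the proof.

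\medskip

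\emph{Main obstacle.} The only subtlety — and the place I expect to spend the most care — is matching condition (2) of the net definition (``union of all except one SCC'') to the set $A$ (or $B$), which is a union of possibly many SCCs rather than all-but-one. The resolution is the size dichotomy on the SCCs inside $A$: either one of them is already $\ge\epsilon n$ (use condition (1) directly), or all are small and then the union $A$ can be reached from the ``all-but-one'' family by a monotonicity argument choosing the omitted SCC inside $B$. Everything else — no SCC straddles $A$ and $B$, no $B\to A$ path exists — is immediate from the hypothesis that there are no arcs from $B$ to $A$.
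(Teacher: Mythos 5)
Your setup is fine (no SCC of $G\setminus F$ meets both $A$ and $B$, and there is no path from $B$ to $A$ in $G\setminus F$), but the step you yourself flag as the main obstacle is a genuine gap, not a matter of careful phrasing: the net conditions do not let you pin a terminal into $A$ (or into $B$). Condition (2) of the net only ever excludes a \emph{single} SCC, so the set in which it guarantees a terminal is $A\cup(B\setminus D)$ for one SCC $D\subseteq B$, and that terminal may always land in $B\setminus D$. Neither proposed fix works: the second branch of your dichotomy would need to exclude \emph{all} SCCs inside $B$ at once, which (2) does not permit unless $B$ happens to be a single SCC, and ``applying (2) repeatedly'' does not help either. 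Concretely, suppose every SCC inside $A$ is a singleton, $B$ splits into two SCCs $B_1,B_2$ each of size less than $\epsilon n$ (with $|B|\ge \epsilon n$), and the only terminals are one in $B_1$ and one in $B_2$: then condition (1) is vacuous for this $F$ and condition (2) holds (whichever SCC you omit, the union of the rest still contains $B_1$ or $B_2$), yet $A\cap T=\emptyset$. So no argument using only the net conditions at this $F$ can produce $t_1\in A$, and your plan to take $t_1\in A$, $t_2\in B$ fails.

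The repair is that you are aiming for more than the detection property requires: you do not need terminals on both sides of the partition, only two terminals in \emph{distinct} SCCs of $G\setminus F$; then at least one direction has no path, and you order the pair accordingly (the point you already noted at the end). This is exactly the paper's argument: pick any SCC $C$; since $C$ lies entirely inside $A$ or entirely inside $B$, the union of all other SCCs contains the opposite side and hence has size at least $\epsilon n$, so condition (2) yields a terminal $t_1$; now omit the SCC containing $t_1$ and apply (2) once more to get $t_2$ in a different SCC. In the scenario above this gives two terminals both inside $B$ but in different SCCs, which suffices. With that change the proof goes through; as proposed, it does not.
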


\begin{proof}
Let $T$ be an $(\epsilon,k)$-net for $G$.
    Consider a set of vertices $F$ of size at most $k$, so that $V(G \setminus F)$ admits a partition  $A,B$ with $|A|,|B| \geq \epsilon n$ such that there are no arcs from $B$ to $A$. Consider an SCC $C$ of $G \setminus F$, and let $C'$ be the union of all SCC's of $G \setminus F$ except $C$. Then clearly $|C'| \geq \epsilon n$, hence there exists a terminal $t_1 \in C' \cap T$. Let $C_{t_1}$ be the SCC of $G \setminus F$ containing $t_1$. Now consider $C''$ which is the union of all SCC's of $G \setminus F$ except $C_{t_1}$. Again $|C''| \geq \epsilon n$, therefore there is a terminal $t_2 \in  C'' \cap T$. But clearly $t_1$ and $t_2$ lie in different SCC's of $G \setminus F$, which means there is either no $t_1$-$t_2$ path or no $t_2$-$t_1$ path in $G \setminus F$, hence satisfying our detection set property.
\end{proof}

Thus we henceforth focus on obtaining $(\epsilon,k)$-nets and samples.

First, we observe that showing a VC-dimension bound directly implies $(\epsilon,k)$ samples. This is similar to the approach in~\cite{fm06}. We start with a few definitions.

\begin{definition}[Shattering a set of elements] Suppose we are given a set system $(\SS, U)$ consisting of a family of sets $\SS$, where each set in $\SS$ consists of elements from a universe $U$. We say that a subset $W \subseteq U$ is \emph{shattered} by $\SS$, if for every subset $Y \subseteq W$, there exists a set $S \in \SS$ so that $S \cap W = Y$.

\begin{definition}[VC-dimension]
Given a set system $(\SS, U)$, its VC-dimension is the size of the largest subset $W \subseteq U$ that can be shattered by $\SS$.    
\end{definition}

\end{definition}

\begin{theorem}[$\epsilon$-net theorem, see~\cite{fm06}]\label{thm:eps_net}
Let $(\SS, U)$ be a set system with VC-dimension $d$, and universe size $|U| = n$. Then for every $\epsilon > 0$, there exists an absolute constant $c$ such that a random subset $T \subseteq U$ of size $c{d}\frac{1}{\epsilon} \log \frac{1}{\epsilon}$ is  an $\epsilon$-net with probability at least $\frac{2}{3}$.
Concretely, $T$ satisfies $|S \cap T| \geq 1$ for every $S \in \SS$ satisfying $|S| \geq \epsilon n$ with probability at least $\frac{2}{3}$.
\end{theorem}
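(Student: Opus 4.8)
The plan is to run the classical double-sampling (symmetrization) argument of Haussler--Welzl. Fix $\epsilon\in(0,1)$ and set $m=cd\tfrac1\epsilon\log\tfrac1\epsilon$ (the sample size), where $c$ is an absolute constant chosen at the end. Call $S\in\SS$ \emph{heavy} if $|S|\ge \epsilon n$, and let $E_1$ be the bad event that the random sample $T$ of size $m$ misses some heavy set, i.e.\ $S\cap T=\emptyset$ for some heavy $S$. The goal is $\Pr[E_1]\le \tfrac13$. I would first reduce this to a two-sample statement: draw a second independent sample $T'$ of size $m$ and let $E_2$ be the event that there is a heavy $S$ with $S\cap T=\emptyset$ but $|S\cap T'|\ge \epsilon m/2$. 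Conditioning on $E_1$ and fixing a witness heavy set $S$ \emph{as a function of $T$ alone}, independence of $T'$ gives $\mathbb{E}[|S\cap T'|]\ge \epsilon m$, so by a Chebyshev (or Chernoff) bound, for $m\ge 8/\epsilon$ we have $|S\cap T'|\ge \epsilon m/2$ with probability at least $\tfrac12$; hence $\Pr[E_2]\ge \tfrac12\Pr[E_1]$, and it suffices to prove $\Pr[E_2]\le \tfrac16$.

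Next comes symmetrization. Realize the pair $(T,T')$ by drawing $2m$ i.i.d.\ uniform elements $z_1,\dots,z_{2m}$ of $U$ and then choosing a uniformly random half of the indices to form $T$, the rest forming $T'$ (modulo the standard with/without-replacement coupling this has the right law). Condition on the multiset $Z=\{z_1,\dots,z_{2m}\}$. By the Sauer--Shelah lemma, the number of distinct traces $\{S\cap Z:S\in\SS\}$ is at most $\Phi_d(2m):=\sum_{i=0}^{d}\binom{2m}{i}=O((2m)^d)$. For $E_2$ to occur there must be a trace $W=S\cap Z$ with $\ell:=|W|\ge |S\cap T'|\ge \epsilon m/2$ all of whose elements land in $T'$, i.e.\ $W\cap T=\emptyset$. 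For a fixed $\ell$-element $W\subseteq Z$ the probability over the random split that $W$ avoids $T$ is $\binom{2m-\ell}{m}\big/\binom{2m}{m}=\prod_{i=0}^{\ell-1}\frac{m-i}{2m-i}\le 2^{-\ell}\le 2^{-\epsilon m/2}$. A union bound over the at most $\Phi_d(2m)$ traces gives $\Pr[E_2\mid Z]\le \Phi_d(2m)\,2^{-\epsilon m/2}$, and since this holds for every $Z$, also $\Pr[E_2]\le \Phi_d(2m)\,2^{-\epsilon m/2}$.

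Finally I would pick the constant. Taking base-$2$ logarithms, $\log_2\Phi_d(2m)=O(d\log m)=O\!\big(d\log(\tfrac{d}{\epsilon}\log\tfrac1\epsilon)\big)$ while $\epsilon m/2=\tfrac{c}{2}d\log\tfrac1\epsilon$; for $c$ a large enough absolute constant the latter exceeds the former by at least $\log_2 6$, so $\Pr[E_2]\le\tfrac16$ and therefore $\Pr[E_1]\le 2\Pr[E_2]\le\tfrac13$, i.e.\ with probability at least $\tfrac23$ the sample $T$ hits every heavy set. The side condition $m\ge 8/\epsilon$ is absorbed into $c$ (using that one may assume $\epsilon$ bounded away from $1$, or replace $\log\tfrac1\epsilon$ by $\max\{1,\log\tfrac1\epsilon\}$). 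I expect the main obstacle to be the symmetrization inequality $\Pr[E_2]\ge\tfrac12\Pr[E_1]$: one must choose the witness set measurably from $T$ only, so that the conditional law of $T'$ remains the unconditioned sampling measure, and then verify the concentration bound that turns ``expected overlap $\ge\epsilon m$'' into ``overlap $\ge\epsilon m/2$ with probability $\ge\tfrac12$''.
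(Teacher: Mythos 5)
The paper never proves \Cref{thm:eps_net}: it is imported as a classical black-box result (the Haussler--Welzl $\epsilon$-net theorem, cited via~\cite{fm06}), so your write-up is being compared against the standard double-sampling proof rather than against anything in the paper. Your proposal is exactly that standard argument, and its core steps are sound: the reduction from $E_1$ to the two-sample event $E_2$ with a witness chosen measurably from $T$ alone (Chebyshev gives $\Pr[E_2]\ge\tfrac12\Pr[E_1]$ once $m\ge 8/\epsilon$), the symmetrization over a random split of $2m$ points, the bound $\binom{2m-\ell}{m}/\binom{2m}{m}\le 2^{-\ell}$ for a fixed trace of size $\ell\ge\epsilon m/2$, and the union bound over at most $\Phi_d(2m)$ traces.

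The genuine gap is in the final accounting. You bound $\log_2\Phi_d(2m)=O(d\log m)=O\bigl(d\log(\tfrac d\epsilon\log\tfrac1\epsilon)\bigr)$, which contains a $d\log d$ term, and then assert that $\epsilon m/2=\tfrac c2\,d\log\tfrac1\epsilon$ exceeds it once the absolute constant $c$ is large. That comparison fails when $d$ is large relative to $1/\epsilon$: fix, say, $\epsilon=\tfrac14$ and let $d\to\infty$; then $\epsilon m/2$ grows linearly in $d$ (for any fixed $c$) while $d\log m\ge d\log d$ grows strictly faster, so no absolute constant $c$ works. Used this way, the crude estimate $\Phi_d(2m)=O((2m)^d)$ only recovers the weaker Haussler--Welzl sample size $O(\tfrac d\epsilon\log\tfrac d\epsilon)$, not the claimed $c\,\tfrac d\epsilon\log\tfrac1\epsilon$. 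The fix is standard and local: use $\Phi_d(2m)\le(2em/d)^d$, so that $\log\Phi_d(2m)=O\bigl(d\log(m/d)\bigr)=O\bigl(d(\log\tfrac1\epsilon+\log\log\tfrac1\epsilon+\log c)\bigr)$ --- the key point being that $m/d=\tfrac c\epsilon\log\tfrac1\epsilon$ contains no $d$ --- after which $\epsilon m/2=\tfrac c2\,d\log\tfrac1\epsilon$ does dominate for a suitable absolute constant $c$, using $\log\tfrac1\epsilon\ge\log 2$ after the harmless reduction to $\epsilon\le\tfrac12$ that you already mention. With that single change your proof is complete; alternatively, the theorem can simply be cited, as the paper does.
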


\begin{theorem}[$\epsilon$-sample theorem, see~\cite{fm06}]\label{thm:eps_sample}
Let $(\SS, U)$ be a set system with VC-dimension $d$, and universe size $|U| = n$. Then for every $\epsilon > 0$, there exists an absolute constant $c$ such that a random subset $T \subseteq U$ of size ${cd}\frac{1}{\epsilon^2} \log \frac{1}{\epsilon}$ is an $\epsilon$-sample with probability at least $\frac{2}{3}$. Concretely, $T$ satisfies $\left|\frac{|S|}{n} - \frac{|S \cap T|}{|T|}\right| \leq \epsilon$ for every $S \in \SS$ with probability at least $\frac{2}{3}$.
\end{theorem}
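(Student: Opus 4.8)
The plan is to prove this by the classical double-sampling (symmetrization) argument combined with the Sauer--Shelah lemma; this is the standard uniform-convergence theorem for set systems of bounded VC-dimension. Fix the sample size $m = cd\epsilon^{-2}\log\tfrac{1}{\epsilon}$ with $c$ an absolute constant to be pinned down at the end, and let $T$ be a uniformly random size-$m$ subset of $U$. Call the \emph{bad event} $\mathcal{B}$ the event that $T$ fails to be an $\epsilon$-sample, i.e.\ that some \emph{witness} $S\in\SS$ satisfies $\bigl|\,|S\cap T|/|T| - |S|/n\,\bigr| > \epsilon$. First I would introduce an independent \emph{ghost sample} $T'$ of the same size $m$. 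For a single fixed $S$, the empirical frequency $|S\cap T'|/m$ has variance $O(1/m)$, so by Chebyshev it lies within $\epsilon/2$ of $|S|/n$ with probability at least $1/2$ whenever $m\geq 2\epsilon^{-2}$ (which our choice of $m$ ensures). Hence, conditioned on $\mathcal{B}$ and on a witness $S$, with probability at least $1/2$ we also have $\bigl|\,|S\cap T|/m - |S\cap T'|/m\,\bigr| > \epsilon/2$, and therefore $\Pr[\mathcal{B}]\leq 2\Pr[\mathcal{B}']$, where $\mathcal{B}'$ is the event that \emph{some} $S\in\SS$ has $\bigl|\,|S\cap T|/m - |S\cap T'|/m\,\bigr| > \epsilon/2$.

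Next I would bound $\Pr[\mathcal{B}']$ by conditioning on the combined multiset $Z$ of the $2m$ sampled elements; conditioned on $Z$, the pair $(T,T')$ is a uniformly random partition of $Z$ into two halves of size $m$. For a fixed $S\in\SS$ only the \emph{trace} $S\cap Z$ matters, and by Sauer--Shelah the number of distinct traces that the sets of $\SS$ realize on $Z$ is at most $\Pi_{\SS}(2m)\leq(2em/d)^d$ since $\SS$ has VC-dimension $d$. For each fixed trace $W$, the discrepancy $\bigl|\,|W\cap T| - |W\cap T'|\,\bigr|$ is governed by a hypergeometric tail (a uniformly random split of $W$ between the two halves), for which Hoeffding's inequality still applies, giving $\Pr\bigl[\,\bigl|\,|W\cap T|/m - |W\cap T'|/m\,\bigr| > \epsilon/2\,\bigr]\leq 2\exp(-\epsilon^2 m/8)$. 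A union bound over the at most $(2em/d)^d$ traces gives $\Pr[\mathcal{B}']\leq 2(2em/d)^d\exp(-\epsilon^2 m/8)$, and hence $\Pr[\mathcal{B}]\leq 4(2em/d)^d\exp(-\epsilon^2 m/8)$.

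Finally I would check that with $m = cd\epsilon^{-2}\log\tfrac{1}{\epsilon}$ and $c$ a large enough absolute constant, this last bound is below $1/3$. Taking logarithms, it suffices that $\epsilon^2 m/8\geq d\log(2em/d)+\log 12$, i.e.\ $\tfrac{c}{8}d\log\tfrac{1}{\epsilon}\geq d\log\!\bigl(2ec\,\epsilon^{-2}\log\tfrac{1}{\epsilon}\bigr)+O(1)$; since the right-hand side is $O\bigl(d(\log\tfrac{1}{\epsilon}+\log c)\bigr)$, this holds for all $\epsilon\in(0,1)$ once $c$ is sufficiently large. The companion $\epsilon$-net statement, \Cref{thm:eps_net}, follows from the same scheme with a one-sided version of the symmetrization step and the smaller sample size $m = cd\epsilon^{-1}\log\tfrac{1}{\epsilon}$. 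I expect the only delicate point to be the bookkeeping in this last paragraph --- matching the $\log\tfrac{1}{\epsilon}$ that must appear in the sample-size bound against the $\log m$ produced by the Sauer--Shelah union bound; the three ingredients (symmetrization, Sauer--Shelah, the hypergeometric tail bound) are each standard and plug together modularly.
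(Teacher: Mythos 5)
The paper does not prove \Cref{thm:eps_sample} at all: it is imported as a black box from Feige--Mahdian \cite{fm06}, who in turn rely on the classical Vapnik--Chervonenkis uniform-convergence theorem. So there is no ``paper proof'' to match; what you have written is the standard textbook proof of that classical theorem, and it is essentially sound. The three ingredients are assembled correctly: the ghost-sample symmetrization with Chebyshev (valid since the hypergeometric count $|S\cap T'|$ has variance at most $m/4$, so $m\ge 2\epsilon^{-2}$ suffices for the $1/2$ bound), the reduction to traces on the combined sample $Z$ with the Sauer--Shelah bound $(2em/d)^d$, and a concentration bound for the random balanced split. Two small points deserve care. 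First, since $T,T'$ are random \emph{subsets} (sampling without replacement) rather than i.i.d.\ draws, the conditional distribution given $Z$ is: elements of multiplicity two are forced into both halves (and cancel in the discrepancy), while the multiplicity-one elements are split uniformly among balanced splits; this is exactly what makes your hypergeometric/Hoeffding step legitimate, and it is worth saying explicitly --- also, depending on which inequality you invoke, the exponent comes out as $\exp(-\Omega(\epsilon^2 m))$ with a constant that may be $1/16$ rather than $1/8$, which is immaterial. Second, your closing claim that the final inequality ``holds for all $\epsilon\in(0,1)$'' is not literally true in the degenerate regime $\epsilon\to 1$, where $\log\frac{1}{\epsilon}\to 0$ and the prescribed sample size itself collapses; there the union bound $4(2em/d)^d e^{-\epsilon^2 m/8}$ is not below $1/3$ for any absolute $c$ (take $\log\frac{1}{\epsilon}\approx 8/c$, so $m\approx 8d$). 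This is an artifact of the theorem's statement rather than of your argument --- the standard convention is to assume $\epsilon\le 1/2$ (all uses in the paper have small $\epsilon$), or to state the size as $O\bigl(\frac{d}{\epsilon^2}\log\frac{d}{\epsilon}\bigr)$; with that caveat recorded, your proof is complete and is the expected one.
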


Let $\SS$ be the family of sets consisting of all possible sets $C$ which are either (a) a strongly connected component after deleting some vertex set $F$ of size at most $k$ or (b) the union of all but one strongly connected components after deleting some vertex set $F$ of size at most $k$. Then it suffices to prove a VC-dimension bound of $\OO(k)$ for this family of sets with the universe as the vertex set - this would then directly imply~\Cref{thm:detection_set,thm:sample_set} using~\Cref{thm:eps_net,thm:eps_sample}.
Thus we will henceforth focus on upper-bounding the VC-dimension of the set system $\SS$.

Towards this, we consider a slightly different problem of reachability from a single source after deleting $k$ vertices.

\begin{definition}[Single source reachability profile]
Given a graph $G$, a source vertex $s$ and a set of sink vertices $T'$ with $s \notin T'$, we define the single source reachability profile of $s$ as $\Reach^k(s,T') = \{ R^G_{ \{ s \} } (F) \cap T' \mid F \subseteq V(G), |F| \leq k \}$ 
In other words, $\Reach^k(s,T')$ is the collection of subsets of $T'$ reachable from $s$ after deleting some set of at most $k$ vertices from $G$.

\end{definition}

The next theorem bounds the size of $\Reach^k(s, T')$. 

\begin{theorem}\label{thm:reachability}
Given a graph $G$, source vertex $s$ and sink vertices $T'$, $|\Reach^k(s, T')| \leq 4^k {|T'| \choose \leq 2k}$.
\end{theorem}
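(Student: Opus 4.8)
The plan is to reduce each reachable set in $\Reach^k(s,T')$ to the reachability set of a small important separator, and then invoke \Cref{thm:importantbound} (or rather its single-source specialization). Concretely, fix some $Q \in \Reach^k(s,T')$, so there is a set $F$ with $|F| \le k$ and $R^G_{\{s\}}(F) \cap T' = Q$. Let $T'' = T' \setminus Q$; by definition $F$ is an $\{s\}$-$T''$ separator, and we may shrink $F$ to an inclusion-wise minimal $\{s\}$-$T''$ separator $F_0 \subseteq F$ without shrinking the reachable set, so $R(F_0) \supseteq R(F)$ and still $R(F_0) \cap T' \subseteq T'$ with $T''$ unreachable; in fact since no vertex of $Q$ is separated from $s$ by going to a smaller cut, $R(F_0) \cap T' = Q$ still. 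Now among all $\{s\}$-$T''$ separators of size $\le |F_0| \le k$, pick one, call it $X$, that is ``pushed as far as possible'', i.e. with $R(X)$ inclusion-maximal (ties broken arbitrarily); such an $X$ exists and is by definition an important $\{s\}$-$T''$ separator, and $R(X) \supseteq R(F_0)$, so $R(X) \cap T' \supseteq Q$. But $R(X)$ cannot reach any vertex of $T''$ (as $X$ separates $s$ from $T''$), so $R(X) \cap T' = Q$ exactly. Hence every $Q \in \Reach^k(s,T')$ equals $R(X) \cap T'$ for some important $\{s\}$-$T''$ separator $X$ of size $\le k$, where $T'' = T' \setminus Q \subseteq T'$.

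The counting step is then immediate: the map $Q \mapsto X$ (choosing one such $X$ per $Q$) is injective because $Q$ is recovered from $X$ as $R(X) \cap T'$. So $|\Reach^k(s,T')|$ is at most the number of important $\{s\}$-$B$ separators of size $\le k$ over all $B \subseteq T'$. Applying \Cref{thm:importantbound} with $S = \{s\}$ and $T = T'$ gives the bound $4^k \binom{|S|}{\le k}\binom{|T'|}{\le 2k} = 4^k \cdot 1 \cdot \binom{|T'|}{\le 2k}$ (note $\binom{1}{\le k} = 1$), which is exactly the claimed bound $4^k\binom{|T'|}{\le 2k}$.

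\textbf{Main obstacle.} The step that needs the most care is verifying that when we replace $F$ first by a minimal separator $F_0$ and then by a ``farthest'' (important) separator $X$, the trace on $T'$ stays exactly $Q$ rather than merely containing it. Containment $R(X) \cap T' \supseteq Q$ is easy from monotonicity of the reachable set, and the reverse containment follows because $X$ (being an $\{s\}$-$T''$ separator) makes all of $T'' = T' \setminus Q$ unreachable, so $R(X) \cap T'$ cannot spill over into $T''$. One subtlety is that passing from $F$ to a minimal subseparator $F_0$ could in principle change which terminals of $T'$ are reachable, but it can only make more vertices reachable, and all of $T''$ stays unreachable since $F_0$ is still an $\{s\}$-$T''$ separator; combined with $R(F_0)\cap T'\subseteq T'$ this pins it to $Q$. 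A second minor point is that an important separator is only defined for disjoint source/sink sets, so we should note $s \notin T''$, which holds since $s \notin T' \supseteq T''$. With these checks, the argument is complete.
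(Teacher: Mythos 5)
Your proof is correct and shares the paper's overall skeleton: map each pattern $Q \in \Reach^k(s,T')$ to an important $\{s\}$-$(T'\setminus Q)$ separator of size at most $k$ whose trace on $T'$ is exactly $Q$, note the map is injective because $Q$ is recovered as $R(X)\cap T'$, and apply \Cref{thm:importantbound} with $S=\{s\}$ (so ${|S| \choose \leq k}=1$). Where you genuinely differ is in how the important separator is produced. The paper isolates this as \Cref{lemma:wlgclosest}: starting from a minimal separator it takes the vertex mincut between that separator and $T'\setminus P$ that is \emph{closest} to $T'\setminus P$, and verifies exact trace preservation and importance via the characterization of important separators as minimal separators closest to the sink (\Cref{lemma:equivalence}). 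You instead use the generic ``push to an important separator'' (domination) argument: among $\{s\}$-$T''$ separators of size at most $|F_0|$, pick one whose reachability set is inclusion-maximal. This is shorter and sidesteps the closest-mincut machinery, but as written it needs two small repairs: first, the maximization must be restricted to separators $X$ with $R(X)\supseteq R(F_0)$ (an arbitrary inclusion-maximal element of the whole family need not dominate $F_0$, which is what your claim $R(X)\supseteq R(F_0)$ uses); second, the paper's definition of importance also requires inclusion-wise minimality, so you should replace the chosen $X$ by a minimal subseparator — this can only enlarge the reachability set, hence by maximality leaves it unchanged, and then both conditions of importance hold. With these fixes, your exact-trace argument (containment by monotonicity, no spill-over into $T''$ because $X$ still separates $s$ from $T''$) delivers precisely what \Cref{lemma:wlgclosest} provides, and the counting step is identical to the paper's.
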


We prove~\Cref{thm:reachability} using our result on all-subset important separators,~\Cref{thm:importantbound}. We do this as follows. Given $s$ and $T'$, suppose there exists a set $X$ of $k$ vertices so that $P \subseteq T'$ is the subset of $T'$ reachable from $s$ in $G \setminus X$. Then in the following lemma, \Cref{lemma:wlgclosest}, we show that there is in fact an $s$-$(T' \setminus P)$ \emph{important separator} $X'$ of size $\leq k$, so that the subset of $T'$ reachable from $s$ in $G \setminus X$ remains $P$. The bound on all-subset important separators will then imply the bound on the reachability profile.  

\begin{restatable}{lemma}{wlgclosest}\label{lemma:wlgclosest}
Suppose there exists a set $X$ of $\leq k$ vertices so that $P \subseteq T'$ is the subset of $T'$ reachable from $s$ in $G \setminus X$. Then there is an $s$-$(T' \setminus P)$ important separator $X'$ of size $\leq k$, so that the subset of $T'$ reachable from $s$ in $G \setminus X$ remains $P$.
\end{restatable}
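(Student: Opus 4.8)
The plan is to take the set $X$ with reachability set $P = R^G_{\{s\}}(X) \cap T'$ and massage it into an important separator without changing the reachable subset of $T'$. First I would replace $X$ by an inclusion-wise minimal $s$-$(T' \setminus P)$ separator $X_0 \subseteq X$: since $X$ is an $s$-$(T' \setminus P)$ separator (by definition of $P$, no vertex of $T' \setminus P$ is reachable from $s$ in $G \setminus X$), I can repeatedly delete vertices from $X$ that are not needed for separation. Removing vertices only enlarges the reachability set, so after this step $R^G_{\{s\}}(X_0) \supseteq R^G_{\{s\}}(X)$; but no vertex of $T' \setminus P$ can become reachable (else $X_0$ would fail to separate), so $R^G_{\{s\}}(X_0) \cap T' = P$ still. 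Note $|X_0| \le |X| \le k$.

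Next I would push $X_0$ to be closest to $T' \setminus P$. Let $X'$ be the vertex min-cut between $X_0$ and $T' \setminus P$ that is closest to $T' \setminus P$ (i.e.\ the unique such min-cut, obtained by the standard "take the min-cut maximizing the reachable side" construction). Then $|X'| \le |X_0| \le k$. I would argue, exactly as in the proof of \Cref{lemma:equivalence}, that $R^G_{\{s\}}(X') \supseteq R^G_{\{s\}}(X_0)$: any vertex reachable from $s$ avoiding $X_0$ is reachable from $s$ avoiding $X'$, because $X'$ lies "beyond" $X_0$ on the way to $T' \setminus P$. Again, since $X'$ is itself an $s$-$(T'\setminus P)$ separator, no new terminal of $T' \setminus P$ becomes reachable, so $R^G_{\{s\}}(X') \cap T' = P$. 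The one subtlety is that after taking the min-cut, $X'$ might not be inclusion-wise minimal as an $s$-$(T' \setminus P)$ separator; but a min-cut that is closest to the sink is automatically minimal (any removable vertex would give a strictly smaller separator, contradicting minimality of the cut). By \Cref{lemma:equivalence}, being a minimal $s$-$(T' \setminus P)$ separator and closest to $T' \setminus P$ means $X'$ is an $s$-$(T' \setminus P)$ important separator, which is what we want.

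The main obstacle I anticipate is being careful about the directedness and about the distinction between "reachable from $s$" and "the separator itself." Specifically, when I replace $X_0$ by the closest min-cut $X'$, I must check that $X' \cap P = \emptyset$ is \emph{not} needed — important separators here are allowed to delete sink vertices, and $P \subseteq T'$ consists of reachable terminals, so there is genuinely no claim that $X'$ avoids $P$; what matters is only that the reachable subset of $T'$ after deleting $X'$ equals $P$. The cleanest way to handle this is to work throughout with the separator $X_0$ (resp.\ $X'$) separating $s$ from $T' \setminus P$ only, and verify the two invariants ($|X'| \le k$ and $R^G_{\{s\}}(X') \cap T' = P$) at each step, borrowing the reachability-monotonicity arguments verbatim from the proof of \Cref{lemma:equivalence}. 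A short remark noting that all arguments there go through in digraphs closes the gap.
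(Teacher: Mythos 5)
Your overall route is the same as the paper's (pass to a minimal separator $X_0\subseteq X$, take the min-cut $X'$ between $X_0$ and $T'\setminus P$ closest to $T'\setminus P$, show the reachable subset of $T'$ stays $P$, and conclude importance via \Cref{lemma:equivalence}), and the reachability-preservation part of your argument is fine: the containment $R^G_{\{s\}}(X_0)\subseteq R^G_{\{s\}}(X')$ does follow from the first half of the proof of \Cref{lemma:equivalence}, and $X'$ is clearly an $s$-$(T'\setminus P)$ separator, so no vertex of $T'\setminus P$ becomes reachable. The genuine gap is your treatment of inclusion-wise minimality of $X'$. Your justification --- ``a min-cut that is closest to the sink is automatically minimal (any removable vertex would give a strictly smaller separator, contradicting minimality of the cut)'' --- conflates two different cut problems. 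The minimality of $X'$ as a \emph{cut} is with respect to separating $X_0$ from $T'\setminus P$, whereas removability of a vertex $v\in X'$ concerns separating $s$ from $T'\setminus P$: if $X'\setminus\{v\}$ still separates $s$ from $T'\setminus P$, it need not separate $X_0$ from $T'\setminus P$, so no contradiction with the min-cut property arises. That your one-liner cannot be a complete argument is visible from the fact that it never uses minimality of $X_0$, while the claim is false without it: take $s\to a\to t_1$ and $z\to b\to t_2$ with $z,b$ unreachable from $s$, $T'\setminus P=\{t_1,t_2\}$, and $X_0=\{a,z\}$ (a non-minimal separator); the closest min-cut between $X_0$ and $\{t_1,t_2\}$ is $\{t_1,t_2\}$, which is not a minimal $s$-$(T'\setminus P)$ separator since $\{t_1\}$ already separates.

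What is actually needed (and what the paper spends its final paragraph on) is an argument that does use the minimality of $X_0$: for each $x'\in X'$, Menger's theorem applied to the min-cut $X'$ gives a path $Q$ from some $x\in X_0$ through $x'$ to $T'\setminus P$ containing no other vertex of $X'$ (and whose portion after $x'$ avoids $X_0$); minimality of $X_0$ gives a path $P'$ from $s$ to $x$ avoiding $X_0\setminus\{x\}$; and one must then \emph{prove} that $P'$ avoids $X'$ --- if it hit some $y'\in X'$, then $y'$ would be reachable from $s$ in $G\setminus X_0$ and, by the min-cut property, $y'$ has a path to $T'\setminus P$ avoiding $X_0$, contradicting that $X_0$ separates $s$ from $T'\setminus P$. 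Concatenating $P'$ and $Q$ then witnesses that $X'\setminus\{x'\}$ is not a separator, establishing minimality, after which \Cref{lemma:equivalence} applies as you intended. So your proposal is on the right track but is missing this step, which is a substantive part of the proof rather than an automatic consequence of the min-cut construction.
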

\lv{
\begin{proof}
Without loss of generality, we assume that $X$ is a minimal $s-(T'\setminus P)$ separator (otherwise we consider the subset of $X$ that is minimal, this still has the same reachability set $P$). Let $X'$ be the vertex mincut between $X$ and $T'\setminus P$ that is closest to $T'\setminus P$. We have $|X'|\leq |X|$ since $X$ itself is a vertex cut between $X$ and $T'\setminus P$.

We show that the reachability profile after removing $X'$ is also $P$. In other words, a vertex $t\in T'$ is reachable from $s$ after removing $X$ iff it is reachable after removing $X'$.

Consider some sink vertex $t \in T' \setminus P$, so that $t$ is not reachable after removing $X$. Then, every $s$--$t$ path contains a vertex in $X$. Since $X'$ is a cut between $X$ and $T' \setminus P$,
this path must also contain a vertex in $X'$. This proves one direction.

Suppose that some vertex $t \in T'$ is not reachable after removing $X'$. Then, every $s$--$t$ path contains a vertex in $X'$. Suppose for contradiction that some $s$--$t$ path does not contain a vertex in $X$. Consider the prefix of the path leading to a vertex $v$ in $X'$. Append to it a path from $v$ to $T' \setminus P$ that only contains one vertex in $X'$ (at the start) and no vertex of $X$. Such a path must exist since $X'$ is a min-cut between $X$ and $T' \setminus P$. The result is a path from $s$ to $T\setminus P$ that does not contain a vertex in $X$. This contradicts the set $P$ being the reachability profile after removing $X$.

Finally, we show that $X'$ is an important $s$-$(T' \setminus P)$ separator. By~\Cref{lemma:equivalence} it is enough to show that $X'$ is closest to $T' \setminus P$ and that $X'$ is a minimal $s$-$(T' \setminus P)$ separator. It follows from the construction of $X'$ that $X'$ must be closest to $T' \setminus P$. Thus it is enough to show that $X'$ is a minimal $s$-$(T' \setminus P)$ separator. Since $X'$ is a min-cut between $X$ and $T' \setminus P$, for every vertex $x' \in X'$, there exists a path $Q$ from some vertex $x \in X$ to some vertex $t \in T' \setminus P$ through $x'$ that does not contain any other vertex of $X'$. Since $X$ is a minimal $s$-$(T'\setminus P)$ separator, there is a path $P'$ from $s$ to each vertex $x \in X$ that does not include any other vertex from $X$. Further $P'$ does not contain any vertex from $X'$: if it contains some vertex $y' \in X'$, it must be the case that $y' \notin X$. Then it follows that there is a path from $s$ to $y'$ in $G \setminus X$. Since $X'$ is a min-cut between $X$ and $T' \setminus P$, there is a path from each vertex of $X'$ to some vertex of $T'\setminus P$ that does not have any internal vertex from $X$.   
In particular, there is a path from $y'$ to some vertex of $T' \setminus P$ which does not include any vertex of $X$. This path,  when appended to the path from $s$ to $y'$ in $G \setminus X$, results in a path from $s$ to some vertex of $T' \setminus P$ which does not contain any vertex of $X$, which in turn contradicts the fact that $X$ is a cut between $s$ and $T' \setminus P$. Finally, appending $P'$ with $Q$ gives a path from $s$ to $t \in T' \setminus P$ that contains $x'$, while containing no other vertex of $X'$. Since such a path exists for any $x' \in X'$, it follows that $X'$ is minimal, and hence $X'$ is an important $s$-$T' \setminus P$ separator.
\end{proof}
}

\begin{proof}[Proof of~\Cref{thm:reachability}]
Suppose $P \in \Reach^k(s,T')$.~\Cref{lemma:wlgclosest} implies that there is an important $s$-$(T'\setminus P)$ separator $X$ so that the reachability set of $s$ in $G \setminus X$ is exactly $P \subseteq T'$. Using~\Cref{thm:importantbound} the number of such important separators is at most $4^k{|T'| \choose \leq 2k}$, and hence the number of such $P$ is at most $4^k{|T'| \choose \leq 2k}$.
\end{proof}

Next, we show why~\Cref{thm:reachability} implies a VC-dimension bound.

\begin{theorem}
Let $\SS$ be the family of sets over consisting of all possible sets $C$ which are either (a) a strongly connected component after deleting some vertex set $F$ of size at most $k$ or (b) the union of all but one strongly connected components after deleting some vertex set $F$ of size at most $k$. Then the VC-dimension of $\SS$ is $\OO(k)$.
\end{theorem}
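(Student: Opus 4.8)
The plan is to bound the number of distinct ``patterns'' $S \cap W$ that can arise as $S$ ranges over $\SS$, for any fixed candidate shattered set $W \subseteq V(G)$ with $|W| = t$. If this count is strictly smaller than $2^t$ whenever $t \geq ck$ for a suitable absolute constant $c$, then no such $W$ can be shattered, giving the VC-dimension bound $\OO(k)$. So I would first reduce ``counting patterns'' to ``counting reachability profiles'': for a set $F$ with $|F| \le k$ and an SCC $C$ of $G \setminus F$ (or the union of all but one SCC), the pattern $C \cap W$ is determined by picking a witness terminal $t \in C \cap W$ (if $C \cap W = \emptyset$ the pattern is the empty set, contributing just one extra pattern) and observing that $C \cap W$ is exactly the set of $w \in W$ that are both reachable from $t$ and can reach $t$ in $G \setminus F$. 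Equivalently, using $G^R$ (the reverse graph), $C \cap W = \big(R^G_{\{t\}}(F) \cap W\big) \cap \big(R^{G^R}_{\{t\}}(F) \cap W\big)$.

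Next I would invoke \Cref{thm:reachability}. Fix $t \in W$ and let $T' = W \setminus \{t\}$. Every set of the form $R^G_{\{t\}}(F) \cap W$ with $|F| \le k$ equals $\{t\}$ together with an element of $\Reach^k(t, T')$, of which there are at most $4^k \binom{|T'|}{\le 2k} \le 4^k \binom{t}{\le 2k}$; the same bound holds in $G^R$. Hence the number of patterns $C \cap W$ with a given witness $t$ is at most $\big(4^k \binom{t}{\le 2k}\big)^2 = 16^k \binom{t}{\le 2k}^2$. Summing over the $t$ choices of witness and adding the empty pattern, the total number of patterns realized by $\SS$ on $W$ is at most $1 + t \cdot 16^k \binom{t}{\le 2k}^2$. (Case (b) sets, unions of all but one SCC, are handled identically: a witness terminal in such a union still has the property that $W$ restricted to it is the intersection of a forward- and a backward-reachability profile after deleting $F$, since the union of all-but-one SCC is itself a union of SCCs and a terminal's SCC-within-the-union is still characterized by mutual reachability.)

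Finally I would do the elementary asymptotic check that $1 + t \cdot 16^k \binom{t}{\le 2k}^2 < 2^t$ once $t \ge ck$ for a large enough absolute constant $c$. Using $\binom{t}{\le 2k} \le \big(\tfrac{et}{2k}\big)^{2k}$ (valid for $t \ge 2k$), the left-hand side is at most $t \cdot 16^k \big(\tfrac{et}{2k}\big)^{4k} \cdot 2$, whose logarithm is $\OO(k \log(t/k) + \log t)$, while $\log(2^t) = t$; for $t \ge ck$ with $c$ a sufficiently large constant the right-hand side dominates. Therefore $W$ cannot be shattered once $|W| \ge ck$, so $\mathrm{VC}(\SS) = \OO(k)$.

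The main obstacle is getting the ``witness terminal'' reduction exactly right, in particular confirming that for a set $S \in \SS$ of type (b) — the union of all but one SCC after deleting $F$ — and a witness $t \in S \cap W$, the pattern $S \cap W$ is still expressible via forward/backward single-source reachability profiles of $t$ after deleting $F$ (rather than after deleting $F$ plus something accounting for the excluded SCC). The clean way around this is to note that excluding one SCC only removes vertices that are mutually reachable from $t$'s SCC or not; since $t$'s SCC is contained in $S$, a terminal $w \in W$ lies in $S$ iff $w$ lies in some SCC of $G \setminus F$ other than the excluded one, which for the purpose of the bound can be over-counted by ``$w$ is reachable from $t$ after deleting $F$'' — and one must be slightly careful that this over-count still only blows up the pattern count by the already-accounted factor from \Cref{thm:reachability}. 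Once this bookkeeping is pinned down, the rest is routine.
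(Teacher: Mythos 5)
Your case (a) argument is exactly the paper's: pick a witness terminal $t$ in the SCC, write the pattern as the intersection of the forward and backward single-source reachability profiles of $t$ after deleting $F$, bound the number of profiles by \Cref{thm:reachability}, and compare $|W|\bigl(4^k{\textstyle{|W| \choose \le 2k}}\bigr)^2$ against $2^{|W|}$. The genuine gap is in case (b). For a set $S$ that is the union of all but one SCC of $G\setminus F$, a witness $t\in S\cap W$ does \emph{not} satisfy $S\cap W=\bigl(R^G_{\{t\}}(F)\cap W\bigr)\cap\bigl(R^{G^R}_{\{t\}}(F)\cap W\bigr)$: mutual reachability from $t$ recovers only $t$'s own SCC, which is in general a proper subset of $S$, so case (b) is not ``handled identically.'' Your proposed repair --- over-counting membership in $S$ by ``$w$ is reachable from $t$ in $G\setminus F$'' --- also fails: forward reachability from $t$ neither contains nor is contained in $S$ (SCCs preceding $t$'s SCC in the topological order lie in $S$ but are unreachable from $t$, while the excluded SCC $C^*$ may well be reachable from $t$), so the pattern $S\cap W$ is simply not determined by the witness plus the profiles you count, and no concrete bounded blow-up factor is exhibited.

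The paper's resolution is to move the witness to the \emph{excluded} SCC: when $C^*\cap W\neq\emptyset$, pick $s'\in C^*\cap W$ and count the complement pattern instead, since $W\setminus S$ is --- up to the at most $k$ terminals of $W\cap F$ --- exactly $C^*\cap W$, a case-(a) pattern characterized by mutual reachability from $s'$ in $G\setminus F$ and $G^R\setminus F$. This bounds the case-(b) patterns by the same quantity $|W|\bigl(4^k{\textstyle{|W|\choose \le 2k}}\bigr)^2$, with at most an extra ${|W|\choose \le k}$ factor to account for $W\cap F$, and the degenerate situation $C^*\cap W=\emptyset$ contributes at most ${|W|\choose \le k}$ further patterns of the form $W\setminus F$. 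These extra factors are harmless in your final comparison with $2^{|W|}$, so with this substitution for case (b) your argument goes through; as written, however, the case-(b) step is incorrect rather than mere bookkeeping.
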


\begin{proof}
From the definition of VC-dimension, it is enough to show that there no set of size $\geq dk$ that can be shattered, for some constant $d$. Suppose a set of terminals $T$ can be shattered. We will show that $|T| = \OO(k)$. 

Consider some $P \subseteq T$, with $P \neq \emptyset$, and suppose that after deleting some subset $F$ of vertices of size at most $k$, some strongly connected component $C$ of $G \setminus F$ satisfies $C \cap T = P$. Fix any $s \in C \cap T$, and let $T' = T \setminus \{s\}$. Let $G^R$ denote the graph $G$ with arcs reversed. Then by definition of an SCC, $P \setminus s$ is the exact subset of $T'$ reachable from $s$ in both $G \setminus F$ and $G^R \setminus F$. But this means $P \setminus s = P_1 \cap P_2$ for some $P_1 \in \Reach_G^k(s,T')$ and $P_2 \in \Reach_{G^R}^k(s, T')$.

Together, we obtain that if $P \subseteq T$ is $C \cap T$ for some SCC $C$ after deleting some $k$ vertices, there must exist a vertex $s \in P$ such that $P \setminus \{s\} = P_1 \cap P_2$ for some $P_1 \in \Reach_G^k(s,T \setminus \{s\})$ and $P_2 \in \Reach_{G^R}^k(s, T \setminus \{s\})$. But then the number of choices for $P$ is at most $|T|(4^k {|T| \choose \leq 2k})^2$.

Alternatively, suppose that $P = C' \cap T$ where $C'$ is the union of all but one strongly connected component, say $C^*$, after deleting some set $F$ of $k$ vertices. Suppose $P \neq T$, so that $C^* \cap T \neq \emptyset$. Consider some $s' \in C^* \cap T$. Then it follows that $(T \setminus P) \setminus \{s'\} = Q_1 \cap Q_2$ where $Q_1 \in \Reach^k_{G}(s', T \setminus \{s'\})$ and $Q_2 \in \Reach_{G^R}^k(s', T \setminus \{s'\})$. But then the number of choices for $T \setminus P$ is at most $|T|(4^k {|T| \choose \leq 2k})^2$, and hence the number of choices for $P$ is at most $|T|(4^k {|T| \choose \leq 2k})^2$.

Thus the total number of choices for $P$ across both cases is at most $2|T|(4^k {|T| \choose \leq 2k})^2 + 2$ (accounting for $P = \emptyset$ in the first case, and $P = T$ in the second)
which is not less than $2^{|T|}$ only when $|T| = \OO(k)$, finishing the proof.
\end{proof}

\lv
{
\section{FPT Algorithm for Finding Directed Balanced Separators}

The goal of this section is to prove our result on \DB.

\balancedcut*

We will crucially exploit our results on sample sets,~\Cref{thm:sample_set}. We will also need the following result, which is about computing skew separators. The \SMC{} problem is a special case of {\textsc{Directed Multicut}}.

\begin{definition}[\SMC{}~\cite{chen2008fixed}] Given a directed graph $G$ and a set of $\ell$ terminal pairs $\{ (s_i, t_i) \}_{i \in [\ell]}$ and an integer $k$, find a set of vertices $F$ of size at most $k$, so that $G \setminus F$ has no $s_i - t_j$ path for $i \geq j$, $i,j \in [\ell]$.\footnote{Again, we note that we can delete terminals as well to form the solution.}   
\end{definition}
The analagous edge version, \SMCE{} can be defined similarly.

\begin{theorem}[Extension of Theorem 8.41 of~\cite{cygan2015parameterized}]\label{thm:reduction}
\SMC{} and \SMCE{} admit FPT algorithms running in time $\OO(4^k \cdot k^3 \cdot (n + m))$.
\end{theorem}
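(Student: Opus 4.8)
The plan is to prove the vertex version \SMC{} by reducing it to the edge version \SMCE{}, and to obtain \SMCE{} itself as an instance of the classical recursive important–separator branching, i.e.\ Theorem~8.41 of~\cite{cygan2015parameterized}, with the running time being a matter of careful bookkeeping. The reduction I would use from \SMC{} to \SMCE{} is the standard bidirected vertex–splitting gadget: replace every vertex $v$ by two vertices $v^-,v^+$ together with one \emph{cuttable} arc $(v^-,v^+)$; replace every original arc $(u,v)$ by $k+1$ internally vertex–disjoint length–two directed paths from $u^+$ to $v^-$; and take the $i$-th source to be $s_i^-$ and the $i$-th sink to be $t_i^+$, preserving the index ordering. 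A path from $s_i$ to $t_j$ survives deletion of a vertex set $F$ in $G$ exactly when a path from $s_i^-$ to $t_j^+$ survives deletion of the split arcs $\{(v^-,v^+):v\in F\}$ in the gadget, so every size-$\le k$ skew vertex separator of $G$ maps to a size-$\le k$ skew edge separator of the gadget; conversely, since no solution of size $\le k$ can destroy one of the $(k+1)$-path bundles, every size-$\le k$ skew edge separator may be assumed to consist only of split arcs, hence maps back to a skew vertex separator of no larger size. (The symmetric gadget, subdividing each arc by one cuttable vertex and protecting the old vertices by the same $(k+1)$-fold blow-up, reduces \SMCE{} to \SMC{}, so the two problems are equivalent.) Each reduction multiplies $m+n$ by $O(k)$, so it suffices to solve \SMCE{}.

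For \SMCE{} I would run exactly the recursive branching of Theorem~8.41 of~\cite{cygan2015parameterized}: if no violating path (a path from some $s_i$ to some $t_j$ with $i\ge j$) exists, output $\emptyset$; if the budget is exhausted while a violating path remains, report infeasibility; otherwise isolate the first still–violated pair, compute the set $R$ of vertices reachable from the relevant (merged) source, pick one arc on the frontier of $R$, and branch on whether it lies in the solution. In the ``in'' branch the budget drops by one; in the ``out'' branch the maximum flow from the source side to the current sink strictly increases, so the potential $\mu=2k-\lambda$, where $\lambda$ is that maximum flow, strictly decreases in both branches, and once the active pair becomes satisfied one advances to the next pair without branching. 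The recursion tree therefore has at most $2^{\mu}\le 4^{k}$ leaves. Correctness of this branching rests on the exchange (``pushing'') lemma of~\cite{cygan2015parameterized}: some optimum solution contains an important separator for the processed pair, and fixing it leaves a genuine skew–multicut subinstance of smaller potential; I would invoke this lemma rather than reprove it.

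For the running time: after the reduction the working graph has $O(k(m+n))$ vertices and arcs, and at each of the $O(4^{k})$ recursion nodes we recompute one reachability set and augment a max flow of value at most $k$, which costs $O(k)\cdot O(k(m+n))=O(k^{2}(m+n))$; maintaining the frontier and merge data across the at most $k$ pair–advances on a root–to–leaf path adds another $O(k)$ factor, for $O(k^{3}(m+n))$ per node and $O(4^{k}\cdot k^{3}\cdot(m+n))$ overall, and the same bound carries to \SMC{} through the reduction. The step I expect to be the only genuinely new work — everything else being cited — is verifying (i) that the vertex–splitting gadget faithfully preserves the nested source/sink ordering and induces the claimed size-preserving bijection between solutions, and (ii) that the $O(k)$-factor blow-up of the gadget, combined with recomputing reachability and flows at every node, really stays within the $O(k^{3}(m+n))$ budget rather than introducing a hidden dependence on $n$. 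Neither is difficult, but both must be checked carefully, since the applications in this paper rely on the algorithm being truly linear in $m+n$.
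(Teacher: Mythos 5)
Your proposal is correct and follows the same overall route as the paper: solve \SMCE{} by citing Theorem~8.41 of~\cite{cygan2015parameterized} (or its branching, which you re-sketch), and handle \SMC{} by a vertex-splitting reduction to \SMCE{} with sources $s_i^-$ and sinks $t_i^+$. The one genuine difference is in the gadget. You make original arcs uncuttable by replacing each arc $(u,v)$ with $k+1$ parallel internally-disjoint two-edge paths from $u^+$ to $v^-$; the paper instead keeps a single arc $(u_{out},v_{in})$ per original arc and argues by exchange that any edge solution using such an arc can be replaced by the split arc of one of its endpoints without increasing its size. The paper's variant keeps the reduced instance at $\OO(m+n)$ size, so the cited $\OO(4^k k^3 (n+m))$ bound for \SMCE{} transfers verbatim; your blow-up makes the instance of size $\OO(k(m+n))$, so invoking the theorem as a black box would only give $\OO(4^k k^4 (m+n))$, and you are forced to reopen the running-time analysis of the branching (per-node reachability plus at most $k$ flow augmentations) to recover the $k^3$ factor — which is plausible but depends on internals of the textbook algorithm rather than its stated bound. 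Your correctness argument for the blow-up (rerouting a violating path through an uncut parallel copy, so the solution may be assumed to consist of split arcs only) is sound, and your reverse reduction from \SMCE{} to \SMC{} is unnecessary but harmless. In short: same approach, a heavier gadget that buys nothing here and slightly complicates the bookkeeping; the paper's single-arc reduction with the exchange argument is the cleaner way to land exactly on the claimed $\OO(4^k \cdot k^3 \cdot (n+m))$ bound.
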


\begin{proof}
Theorem 8.41 of~\cite{cygan2015parameterized} is for the edge version \SMCE. For the sake of completeness, we reduce the vertex version to the edge version. The reduction is standard and as follows.

Given an instance of \SMC{}, we create an instance of \SMCE{}. Define graph $G'$ as follows. For each vertex $v \in V(G)$, create two vertices $v_{in}$ and $v_{out}$ in $V(G')$. For every edge $\{u,v\} \in E(G)$, add an edge between $u_{out}$ and $v_{in}$ in $E(G
')$. Finally, add an edge $u_{in}$ to $u_{out}$ to $E(G')$ for each $u \in V(G)$. The instance of \SMCE{} is then the graph $G'$, the integer $k$ and the set of pairs $(s_{i_{in}}, t_{i_{out}})$ for each $i \in [\ell]$.

We claim that there for any integer $k'$, there is a solution to \SMC{} of size at most $k'$ in $G$ iff there is a solution to \SMCE{} of size at most $k'$ in $G'$. First we prove the forward direction. Let $X$ be a set of vertices that forms a solution to the \SMC{} instance. Then clearly by our construction, the set of edges $(x_{in}, x_{out})$ for $x \in X$ forms a solution to the $\SMCE{}$ instance.

For the backward direction, let $X'$ be a set of edges that forms a solution in $G'$. If an edge in $X'$ is of the form $(y_{out}, x_{in})$, we replace it by the edge $(y_{out}, y_{in})$, so that now every edge in $X'$ is of the form $(y_{out}, y_{in})$. Then let $Y = \{y \mid (y_{in}, y_{out}) \in X'\}$. Clearly, $|Y| \leq k$, and $Y$ must form a solution in $G$ for the \SMC{} instance. 
\end{proof}

\begin{lemma}\label{lemma:aux}
Let $T$ be an $(\epsilon,k)$ sample set. Suppose that the \DB{} instance is a YES-instance, so that there exists a set of vertices (arcs) $F$ with $|F| \leq k$ whose deletion leaves every strongly connected component with at most $bn$ vertices. 
Then one can in $\OO(2^{\OO(|T|\min\{\log \frac{1}{b}, \log |T|\})}m)$ time, find a set of vertices (arcs) $F'$ with $|F'| \leq k$ so that there is a partition $(X,Y)$ of $G \setminus F'$ such that every SCC of $G \setminus F'$ has size at most $(b + \OO(\epsilon))n$.
\end{lemma}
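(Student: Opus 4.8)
The goal is to reduce the problem, after guessing a bounded amount of information about the strongly connected component (SCC) structure of $G\setminus F$, to a bounded number of \SMC{} instances, each of which is solved in linear FPT time by \Cref{thm:reduction}. First I would record the one consequence of $T$ being an $(\epsilon,k)$ sample set that we need: since $|F|\le k$ and every SCC of $G\setminus F$ has at most $bn$ vertices, the definition of a sample set (\Cref{def:sample_set}) gives $|C\cap T|\le (|C|/n+\epsilon)|T|\le (b+\epsilon)|T|$ for every SCC $C$ of $G\setminus F$. So it suffices to produce a set $F'\subseteq V(G)$ with $|F'|\le k$ such that every SCC of $G\setminus F'$ contains at most $(b+\epsilon)|T|$ terminals: applying \Cref{def:sample_set} a second time (to $F'$, which again has size $\le k$) then upgrades this to ``every SCC of $G\setminus F'$ has at most $(b+2\epsilon)n=(b+\OO(\epsilon))n$ vertices'', and we may take as $(X,Y)$ any topological prefix of the SCC-DAG of $G\setminus F'$ together with its complement (so that there is no arc from $Y$ to $X$).

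Next I would fix a good block decomposition of $G\setminus F$ (which the algorithm does not know and must guess). Let $C_1,\dots,C_\ell$ be the SCCs of $G\setminus F$ in topological order, so there is no arc from $C_i$ to $C_j$ when $i>j$. Call $C_i$ \emph{heavy} if $|C_i\cap T|>(b/2)|T|$; there are fewer than $2/b$ heavy SCCs. Make each heavy SCC its own block, and pack each maximal run of consecutive light SCCs greedily into further blocks, closing a block just before adding a light SCC would make its terminal count exceed $(b+\epsilon)|T|$. Since a light SCC has at most $(b/2)|T|$ terminals, every block has at most $(b+\epsilon)|T|$ terminals, and every block that is not the last in its run has more than $(b/2)|T|$ terminals; together these bound the number of blocks by $\OO(1/b)$. (When $1/b$ is large relative to $|T|$, take instead the blocks to be the at most $|T|$ individual SCCs containing a terminal.) In all cases we get $r\le\min\{\OO(1/b),|T|\}$ terminal-containing blocks, consecutive in the topological order, each with at most $(b+\epsilon)|T|$ terminals. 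The algorithm enumerates, for every $r'\le\min\{\OO(1/b),|T|\}$, all functions $g:T\to[r']$; the total number of guesses is $\sum_{r'}(r')^{|T|}=2^{\OO(|T|\min\{\log(1/b),\log|T|\})}$.

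For a fixed guess $g$ I would build the following \SMC{} instance: add to $G$ a ``super-source'' $s_i$ with an arc to every vertex of $g^{-1}(i)$ and a ``super-sink'' $q_i$ with an arc from every vertex of $g^{-1}(i)$, for each $i$, made undeletable by a standard gadget (e.g.\ $k+1$ parallel copies), and ask to destroy every directed $s_i$-to-$q_j$ path with $i>j$; this is an \SMC{} instance after the index shift using the pairs $(s_{a+1},q_a)$, so that the ``$\ge$'' condition of \SMC{} becomes exactly the backward condition $i>j$. By \Cref{thm:reduction} this is solved in time $\OO(4^kk^3(m+n))$. For the guess $g$ that agrees with the true block structure, $F$ itself is a solution — a directed path from $g^{-1}(i)$ to $g^{-1}(j)$ with $i>j$ would be a path from a later block to an earlier block in $G\setminus F$, which does not exist — so the instance has a solution of size $\le k$ and \Cref{thm:reduction} returns some $F'\subseteq V(G)$ with $|F'|\le k$. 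For any such $F'$, every SCC of $G\setminus F'$ meets at most one set $g^{-1}(i)$: two terminals $t\in g^{-1}(i)$ and $t'\in g^{-1}(j)$ with $i>j$ in a common SCC would give a directed $t$-to-$t'$ path in $G\setminus F'$, hence an $s_i$-to-$q_j$ path, contradicting that $F'$ solves the instance. Thus every SCC of $G\setminus F'$ has at most $(b+\epsilon)|T|$ terminals, hence at most $(b+2\epsilon)n$ vertices. The algorithm runs all guesses, and for each returned $F'$ checks (in linear time, by computing SCCs of $G\setminus F'$) whether every SCC has at most $(b+2\epsilon)n$ vertices, outputting the first $F'$ that passes; the correct guess guarantees one does. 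The running time is the number of guesses times the per-guess linear work, i.e.\ $2^{\OO(|T|\min\{\log(1/b),\log|T|\})}(m+n)$, the $\OO(4^kk^3)$ factor being absorbed since $|T|=\Omega(k)$ for the sample sets we use.

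\textbf{Main obstacle.} The delicate step is the block decomposition: a naive greedy grouping of consecutive SCCs only guarantees blocks of terminal weight up to about $2b|T|$ — a block of weight just under $b|T|$ followed by an SCC of weight near $b|T|$ — which would worsen the balance to $b'\approx 2b$ rather than $b+\OO(\epsilon)$; isolating heavy SCCs into singleton blocks is exactly what keeps both the number of blocks $\OO(1/b)$ and the per-block terminal weight at most $(b+\epsilon)|T|$. A secondary subtlety is that the \SMC{} reduction must forbid only \emph{backward} block-to-block paths (forward ones must remain allowed, since $F$ does not destroy them), which is what the index shift handles. The edge version of the statement is entirely analogous, using the edge variant \SMCE{} from \Cref{thm:reduction} together with the edge version of sample sets.
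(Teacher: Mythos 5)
Your proposal follows essentially the same route as the paper: use the sample-set property to pass from vertex-balance to terminal-balance and back, guess the assignment of terminals to $\OO(\min\{1/b,|T|\})$ consecutive blocks of the topological order of SCCs of $G\setminus F$, solve one \SMC{} instance per guess via \Cref{thm:reduction}, and verify the returned cut. Your block decomposition (heavy SCCs as singleton blocks, light SCCs packed greedily up to $(b+\epsilon)|T|$) and the paper's device (greedy intervals of weight $\geq b|T|$, each split into the interval minus its last SCC, $T_{i1}$, and that last SCC, $T_{i2}$) are interchangeable ways of handling the same obstacle you identify — a single SCC carrying close to $b|T|$ terminals — and your super-source/sink gadgets with an index shift play the role of the paper's contraction of terminal groups.

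One point you overlook: $F$ may delete terminals, so ``the guess $g$ that agrees with the true block structure'' is not defined on $F\cap T$, and your enumeration $g:T\to[r']$ has no option for such terminals. The paper handles this by explicitly guessing, for each terminal, whether it lies in $F$ (and attaching no constraint to those terminals). In your scheme one can instead assign the terminals of $F\cap T$ arbitrarily — $F$ still solves the resulting \SMC{} instance, since any gadget path through a deleted terminal is destroyed — but then a block $g^{-1}(i)$ may contain up to $(b+\epsilon)|T|+|F\cap T|$ terminals, so the clean bound ``every SCC of $G\setminus F'$ has at most $(b+\epsilon)|T|$ terminals'' degrades by an additive $k$, which is only $\OO(\epsilon)|T|$ when $|T|=\Omega(k/\epsilon)$ (true for the sample sets used later, but not implied by the lemma statement, just like your absorption of the $4^k k^3$ factor). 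The fix is immediate and costless: enlarge the guess to $g:T\to[r']\cup\{\bot\}$, where $\bot$ marks membership in $F$, exactly as the paper does; with that amendment your argument goes through.
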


\begin{proof}
Broadly, our goal is to obtain a set of vertices (arcs) $F'$ with $|F'| \leq k$, so that in $G \setminus F'$, every strongly connected component does not have more than $(b + \epsilon)|T|$ terminals. The sample set property would then imply that every strongly connected component of $G \setminus F'$ would have size at most $(b + 2\epsilon)n$.

We proceed as follows. Let $C$ be a strongly connected component of $G \setminus F$. Then by the sample set property, we must have $|\frac{n}{|T|}|C \cap T| - |C|| \leq \epsilon n$, which in turn means that $|C \cap T| \leq (b + \epsilon)|T|$. Consider the strongly connected components of $G \setminus F$. They form a DAG. We sort this DAG in the topological order - let $(C_1, C_2 \ldots C_l)$ be the ordering of strongly connected components such that there is no arc from $C_j$ to $C_i$ whenever $j > i$, for $i, j \in [\ell]$.

\begin{itemize}

\item \textbf{Case 1:} We assume that $\log \frac{1}{b} \leq \log |T|$. Partition the indices $\{1, 2 \ldots \ell\}$ into $t \leq \frac{1}{b}$ intervals $[l_i, r_i]^t_{i = 1}$ as follows. Let $l_1 := 1$. For each $i$, $r_i$ is defined as the smallest integer greater than or equal to $l_i$ for which $\bigcup_{j = l_i}^{r_i} C_j$ 
contains at least $b|T|$ terminals. If no such integer exists, then $r_i := \ell$ and we set $t:= i$. If $r_i \neq \ell$ we set $l_{i+1} := r_i + 1$ and continue.  

For each terminal $t \in T$, we guess if $t \in F$, and if not, we guess $i$ such that $t \in \bigcup_{j = l_i}^{r_i} C_j$. We also guess if $t \in C_{r_i}$. All these guesses together take time $2^{\OO(|T|\log\frac{1}{b})}$.

We now create a \SMC{} (\SMCE{}) instance as follows. Let $T_{i1} = ((\bigcup_{j = l_i}^{r_i} C_j) \setminus C_{r_i}) \cap T$ and $T_{i2} = C_{r_i} \cap T$ for each $i \in [\ell]$. Observe that by construction, for any $i \in [\ell]$, both $T_{i1}$ and $T_{i2}$ have at most $(b + \epsilon)|T|$ terminals. Next, contract  $T_{i1}$ into a single vertex $t_{i1}$, and contract  $T_{i2}$ into a single vertex $t_{i2}$. Let $G^C$ denote the contracted graph. The instance of \SMC{} is the graph $G^C$ and the pairs $\{(t_{11}, t_{12}), (t_{12}, t_{21}), (t_{21}, t_{22}) \ldots (t_{\ell 1}, t_{\ell 2})\}$. (If some $T_{i1} = \emptyset$, we replace the pairs $(t_{{i-1},2}, t_{i1}), (t_{i1}, t_{i2})$ by the single pair $(t_{{i-1},2}, t_{i2})$, we omit this detail.)

\SMC{} (\SMCE{}) then returns a set of vertices (arcs) $F'$ with $|F'| \leq k$, so that in $G \setminus F'$, there is no path from $t_{c_1d_1} \rightarrow t_{c_2d_2}$ whenever $c_2 \leq c_1$ or when $c_2 = c_1$ and $d_2 \leq d_1$, for $c_1,c_2 \in [\ell]$ and $d_1,d_2 \in \{1,2\}$. In particular, this means that no strongly connected component of $G \setminus F'$ has more than $(b + \epsilon)|T|$ terminals, since for any $c \in [\ell]$ and $d \in \{1,2\}$, $|T_{cd}| \leq (b + \epsilon)|T|$.

\item \textbf{Case 2:} We now consider the case $\log \frac{1}{b} \geq \log |T|$, so that $|T| < \frac{1}{b}$. Observe that in the above analysis, the number of indices $i$ satisfying $C_i \cap T \neq 0$ is at most $|T|$. We guess the ordered partition $(W_1, W_2 \ldots W_{p})$ of $T \setminus F'$, where $p \leq |T|$, such that for each $j \in [p]$, $W_j = C_i \cap T$ where $C_i$ is the $j^{th}$ component in $(C_1, C_2 \ldots C_{\ell})$ satisfying $C_i \cap T \neq \emptyset$. In other words, we guess the ordered
partition of the terminal set induced by the strongly connected components $(C_1, C_2 \ldots C_{\ell})$ of $G \setminus F$. This can be accomplished in time $|T|^{|T|} = 2^{|T| \log |T|}$. 

We now create a \SMC{} (\SMCE{}) instance as follows. For each $j \in [p]$, contract  $W_{j}$ into a single vertex $w_{j}$. Let $H^C$ denote the contracted graph. The instance of \SMC{} is the graph $H^C$ and the pairs $\{(w_{1}, w_{2}), (w_{2}, w_{3}), (w_{3}, w_{4}) \ldots (w_{p-1}, w_{p})\}$. 

\SMC{} (\SMCE{}) then returns a set of vertices (arcs) $F'$ with $|F'| \leq k$, so that in $G \setminus F'$, there is no path from $w_{c_1} \rightarrow w_{c_2}$ whenever $c_2 \leq c_1$. In particular, this means that no strongly connected component of $G \setminus F'$ has more than $(b + \epsilon)|T|$ terminals, since for any $c \in [p]$, $|W_{c}| \leq (b + \epsilon)|T|$. 

\end{itemize}
Finally, by the property of $(\epsilon, k)$ sample sets (\Cref{def:sample_set}), in either case, it follows that since any strongly connected component of $G \setminus F'$ has at most $(b + \epsilon)|T|$ terminals, it must have at most $(b + 2\epsilon)n$ vertices. Note that though the statement of~\Cref{def:sample_set} 
 only deals with the case when $F'$ is a set of vertices, we have the simple observation that if $F'$ is a set of arcs and $C$ is a strongly connected component after deleting $F'$, it is also a strongly connected component after deleting the endpoints of these arcs not in $C$. This completes the proof.

\end{proof}

The proof of~\Cref{thm:main} now follows immediately from~\Cref{thm:sample_set} and~\Cref{lemma:aux}.

\section{Approximation Algorithm for Directed Balanced Separators}

In this section, the goal is to prove the following result, which shows a $\OO(\sqrt{\log k})$ approximation for \DB{}. 
\approxbalsep

Our result is obtained by essentially following the algorithm of~\cite{agarwal2005log} together with our theorem on sample sets,~\Cref{thm:eps_sample}. On a high level, the reason we get $\OO(\sqrt{\log k})$ approximation comes down to the fact that one can find a balanced separator with respect to a sample set $T$, which will automatically be a balanced separator for the entire graph as well due to the property guaranteed by sample sets. Once we have this equivalence, in order to obtain a balanced separator with respect to the set $T$, we now need to use the ARV structure theorem~\cite{arora2009expander} only on the sample set vectors, which are at most $|T|$ in number. By~\Cref{thm:sample_set} there is such a sample set $T$ of size at most $\OO(k)$. This helps replace the $\OO(\sqrt{\log n})$ factor by $\OO(\sqrt{\log k})$. However, for the sake of completeness and clarity we give the full algorithm by mostly following the algorithm of~\cite{agarwal2005log} with a few modifications/simplifications.

We will restrict our attention to edge cuts, as again the standard reduction in~\Cref{thm:reduction} can easily reduce the vertex version to the edge version.

Notice that if a separator $F$ is $b$-balanced for some $b = \Omega(1)$, then every strongly connected component of $G \setminus F$ has size at most $bn$. This in turn means that using a prefix of the topological order of the strongly connected components in $G \setminus F$, we can obtain sets $A,B$ with $|A|, |B| \geq c'n$ for  $c' = \frac{1-b}{2}$, such that there are no arcs from $A$ to $B$ in $G \setminus F$. Thus, upto constant factors in the balance, one can equivalently think of the~\DB{} problem as that of finding a set of vertices (arcs) $F$ with $|F| \leq k$ and a partition $A,B$ of $G \setminus F$, so that $|A|, |B| \geq c'n$ and there are no arcs from $A$ to $B$ in $G \setminus F$. 

Now we consider the terminal version of the problem, to which we will reduce via sample sets later. Given a terminal set $T$, our revised goal is to solve the following problem: given a parameter $c = \Omega(1)$, find a set of edges $F$ with $|F| \leq k$ so that there is a partition of $G \setminus F$ into $A, B$ such that $|A \cap T|, |B \cap T| \geq c|T|$ and there are no arcs from $A$ to $B$ in $G \setminus F$. In what follows, we will mostly follow the algorithm of~\cite{agarwal2005log} with small modifications which we state as and when required.

We assume that the vertices are labelled $\{1,2, \ldots n\}$, and add a special vertex $0$, which will be a non-terminal and will be the reference vertex for the 
``$A$- side'' of the cut. We start with the SDP relaxation (extended in our case to the terminal version) in~\cite{agarwal2005log} and follow their algorithm.

\begin{center}
\noindent\fbox{

  \begin{minipage}{0.95\textwidth}
  $$\min \frac{1}{8} \sum_{e=\{i,j\} \in E(G)} d(i,j)$$
  $$\|v_i\|^2 = 1 \;\;\forall i \in [n] \cup \{0\}$$
  $$\|v_i - v_j\|^2 + \|v_j - v_k\|^2 \geq \|v_i - v_k\|^2 \;\;\forall i,j,k \in [n] \cup \{0\}$$
  $$\sum_{i < j, i,j \in T} \|v_i - v_j\|^2 \geq 4c(1-c)|T|^2$$

  \end{minipage}
 
      }
\end{center}
Here the ``directed distance'' $d(i,j) = \|v_i - v_j\|^2 +\|v_j - v_0\|^2 - \|v_i - v_0\|^2$ is as defined in~\cite{agarwal2005log}. The cannonical solution for this SDP is obtained by setting $v_i = v_0$ for each $i \in A$ and $v_i = -v_0$ for each $i \in B$.

\begin{theorem}[ARV separation theorem~\cite{arora2009expander}, as stated in~\cite{agarwal2005log}] Given a set of $\ell_2^2$ unit vectors $v_i$, $i \in [n]$, such that $\sum_{i < j} \|v_i -v_j\|^2 \geq 4c(1-c)n^2$, there exists a polynomial time algorithm that finds disjoint sets $L,R$ with $|L|, |R| = \Omega(n)$ such that for any $i \in L, j \in R$, we have $\|v_i - v_j\|^2 \geq \Omega({1}/{\sqrt{\log n}})$.
\label{thm:structure}

\end{theorem}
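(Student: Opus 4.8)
The plan is not to reprove this from scratch but to invoke it as the structure (``well-separated sets'') theorem of Arora, Rao and Vazirani~\cite{arora2009expander}, in the packaged form used by~\cite{agarwal2005log}; what follows is a sketch of why it holds. The algorithm is hyperplane rounding for negative-type metrics: since the $v_i$ span at most $n$ dimensions we may take them in $\mathbb{R}^n$, draw a standard Gaussian $g\sim N(0,I_n)$, set $x_i=\langle g,v_i\rangle$, and then --- using only the sorted list of the $x_i$ together with the pairwise squared distances --- carve out two large sets that are simultaneously far apart along the $x$-direction and pairwise $\Omega(1/\sqrt{\log n})$-far in $\ell_2^2$. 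All of this is polynomial time; the content is that a random $g$ works with constant probability.

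The first step I would carry out is the ``warm start''. Since $g$ is Gaussian, $x_i-x_j\sim N(0,\|v_i-v_j\|^2)$, so $\mathbb{E}\big[\sum_{i<j}(x_i-x_j)^2\big]=\sum_{i<j}\|v_i-v_j\|^2\ge 4c(1-c)n^2$, while $\mathbb{E}\big[\tfrac1n\sum_i x_i^2\big]=1$ together with Gaussian concentration keeps $\tfrac1n\sum_i x_i^2$ and $\max_i|x_i|$ small with high probability. Hence with constant probability the projected points are ``spread but concentrated'', and a one-dimensional averaging argument produces a threshold $\theta$, a constant $\sigma>0$, and sets $S_0\subseteq\{i:x_i\le\theta\}$ and $T_0\subseteq\{i:x_i\ge\theta+\sigma\}$, each of linear size, that are $\sigma$-separated along the projection. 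This is the standard preprocessing reducing the problem to two large sets with a constant ``gap''.

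The second step upgrades the projection gap into an honest $\ell_2^2$ gap, and this is where the real work of~\cite{arora2009expander} lies. Call a cross pair $(i\in S_0,\,j\in T_0)$ \emph{short} if $\|v_i-v_j\|^2\le\Delta$. Such a pair satisfies $|x_i-x_j|\ge\sigma$ by construction, while $x_i-x_j$ is Gaussian with variance $\le\Delta$, an event of probability at most $e^{-\sigma^2/(2\Delta)}$. A first-moment bound over the (at most $n^2$) pairs of squared distance $\le\Delta$ then shows that, with constant probability, deleting the endpoints of all short cross pairs from $S_0$ and $T_0$ removes only $o(n)$ vertices and leaves sets $L,R$ of linear size with $\|v_i-v_j\|^2\ge\Delta$ for all $i\in L$, $j\in R$ --- but this only works for $\Delta=O(1/\log n)$. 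Obtaining the stronger separation $\Delta=\Omega(1/\sqrt{\log n})$ requires replacing the per-pair estimate by the multi-scale \emph{chaining} argument of~\cite{arora2009expander}: one follows entire chains of short hops through the $\Delta$-proximity graph and exploits the cancellation in the induced Gaussian walk (so a long chain does not accumulate $x$-displacement linearly in its length), controlling all chains simultaneously by a Dudley/Talagrand-style union bound over scales. This chaining step is the main obstacle; the random projection, the one-dimensional averaging, the deletion of short pairs, and checking that the procedure runs in polynomial time are all routine. Since the random $g$ succeeds with constant probability, the output $L,R$ has the claimed properties, and the success probability is amplified by independent repetition.
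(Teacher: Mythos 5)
This statement is not proved in the paper at all: it is imported as a black-box citation of the Arora--Rao--Vazirani separation theorem (in the form stated by Agarwal et al.), and your proposal does exactly the same, invoking it as an external result. Your accompanying sketch of the random-projection and chaining argument is a reasonable summary of the original ARV proof, but it is not needed for, and does not differ from, the paper's treatment.
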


Now we follow Algorithm 4 of~\cite{agarwal2005log}. Step 1 solves the SDP. In Step 2, given the vectors from the SDP, we apply the ARV separation theorem to only the vectors $v_i$, $i \in T$ to find $\Delta = \frac{1}{\sqrt{\log |T|}}$ separated sets $L$ and $R$. Concretely, we obtain disjoint sets $L, R \subseteq T$ with $|L|, |R| = \Omega(|T|)$ so that $\|v_i - v_j\|^2 \geq \Delta$ for any $i \in L$ and $j \in R$.

Next, we define $r$ so that  both 
$L^+ = \{i \in L \mid |v_0 - v_i|^2 \leq r^2\}$ and $L^{-} = \{i \in L \mid |v_0 - v_i|^2 \geq r^2\}$ have more than $\frac{|L|}{2}$ vertices. Note that such an $r$ always exists. Once we fix $r$, we define $R^+ = \{i \in R \mid |v_0 - v_i|^2 \leq r^2\}$ and $R^{-} = \{i \in R \mid |v_0 - v_i|^2 \geq r^2\}$ similarly. Finally, if $|R^+| > \frac{|R|}{2}$ we compute a directed min-cut $F$ between $R^+$ and $L^-$, else we compute a directed min-cut $F$ between $L^+$ and $R-$. 

\begin{lemma}\label{lemma:approx}
$|F| \leq \OO(\frac{1}{\Delta} SDP)$ and $F$ is a $b^*$-balanced separator with respect to $T$ for some $b^* < 1$ depending on $c$, where SDP is the optimal SDP value.
\end{lemma}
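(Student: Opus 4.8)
The plan is to read the numbers $d(i,j)$ as nonnegative \emph{directed} arc lengths and run a directed region-growing argument. First I would record two consequences of the SDP constraints. Applying the triangle inequality with third point $0$ gives $d(i,j)=\|v_i-v_j\|^2+\|v_j-v_0\|^2-\|v_i-v_0\|^2\ge 0$, and since $d(i,j)+d(j,k)-d(i,k)=\|v_i-v_j\|^2+\|v_j-v_k\|^2-\|v_i-v_k\|^2\ge 0$, the function $d$ satisfies the directed triangle inequality $d(i,k)\le d(i,j)+d(j,k)$. Hence, if we put length $d(i,j)$ on the arc $(i,j)$, the directed shortest-path distance $\operatorname{dist}(u,w)$ from any $u$ to any $w$ is at least $d(u,w)$, and $\sum_{(i,j)\in E(G)}d(i,j)=8\,SDP$ for the optimal SDP solution computed in Step~1.

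Next I would treat the two branches of the algorithm symmetrically and assume $|R^+|>|R|/2$, so $F$ is a minimum-size arc set whose removal leaves no directed path from $R^+$ to $L^-$ (the other branch is identical with the roles of $L$ and $R$ swapped). For $i\in R^+\subseteq R$ and $j\in L^-\subseteq L$, \Cref{thm:structure} gives $\|v_i-v_j\|^2\ge\Delta$, and the choice of the radius $r$ gives $\|v_i-v_0\|^2\le r^2\le\|v_j-v_0\|^2$; substituting both into the definition of $d$ yields $d(i,j)\ge\Delta$. By the directed triangle inequality, the directed distance (under lengths $d$) from the set $R^+$ to the set $L^-$ is therefore at least $\Delta$.

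Now I would invoke the standard directed region-growing (Leighton--Rao style ball-growing) with the reachability ball $B_\rho:=\{v:\operatorname{dist}(R^+,v)\le\rho\}$. An arc $(i,j)$ leaves $B_\rho$ exactly for $\rho\in[\operatorname{dist}(R^+,i),\operatorname{dist}(R^+,j))$, an interval of length at most $d(i,j)$, so $\int_0^{\Delta}|\partial B_\rho|\,d\rho\le\sum_{(i,j)\in E(G)}d(i,j)$, and hence some $\rho^*\in[0,\Delta)$ satisfies $|\partial B_{\rho^*}|\le\tfrac1\Delta\sum_{(i,j)}d(i,j)=\tfrac8\Delta SDP$. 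Since $\rho^*<\Delta$ is strictly less than the $R^+$-to-$L^-$ distance, $L^-\cap B_{\rho^*}=\emptyset$, so deleting the arcs $\partial B_{\rho^*}$ destroys every directed $R^+$-to-$L^-$ path. As $F$ is a \emph{minimum} such arc set, $|F|\le|\partial B_{\rho^*}|\le\tfrac8\Delta SDP=\OO(\tfrac1\Delta SDP)$.

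For the balance, note that $F$ separates $R^+$ from $L^-$, so in $G\setminus F$ we may let $A$ be the set of vertices reachable from $R^+$ and $B:=V(G)\setminus A$; then there is no arc from $A$ to $B$, $R^+\subseteq A$, and $L^-\subseteq B$. By \Cref{thm:structure}, $|L|,|R|=\Omega(|T|)$ with a constant depending only on $c$, and by construction $|R^+|>|R|/2$ and $|L^-|>|L|/2$, so both $A$ and $B$ contain at least $c'|T|$ terminals for some constant $c'=c'(c)>0$. Every strongly connected component of $G\setminus F$ lies entirely in $A$ or entirely in $B$ (a component meeting both would force an arc from $A$ to $B$), hence has at most $(1-c')|T|$ terminals; that is, $F$ is a $b^*$-balanced separator with respect to $T$ for $b^*=1-c'<1$. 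I expect the main care to go into the directed region-growing bookkeeping — it is the directed analogue of the Leighton--Rao lemma, standard but requiring precision that the relevant ``ball'' is a reachability set and that the cut is the set of arcs leaving it — and into tracking how $b^*$ (equivalently $c'$) depends on the constant in \Cref{thm:structure} and on $c$.
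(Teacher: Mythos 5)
Your proof is correct and takes essentially the same approach as the paper: both establish $d(i,j)\ge\Delta$ for $i\in R^+$, $j\in L^-$ (via the ARV separation plus the choice of $r$), bound the size of the directed $R^+$-$L^-$ min-cut by $\OO(SDP/\Delta)$ through a distance-threshold argument, and derive the balance from $|R^+|>|R|/2$, $|L^-|>|L|/2$ and $|L|,|R|=\Omega(|T|)$. The only difference is cosmetic: the paper cuts at a random threshold $\tau\in(0,\Delta/4)$ on $d(R^+,\cdot)$ and bounds the expected number of cut arcs, while you average deterministically over ball radii under shortest-path distances with arc lengths $d(i,j)$ -- the same computation in derandomized form.
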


\begin{proof}
Without loss of generality, we assume that $|R^+| > \frac{|R|}{2}$. The other case is symmetric. First, observe that for any $i \in R^+$, $j \in L^-$,  we have $d(i,j) > \Delta$. This follows from the definition of the sets $R^+$ and $L^-$ which in particular means that $|v_i - v_0|^2 \leq r^2$ and  $|v_j - v_0|^2 \geq r^2$ and hence $d(i,j) = \|v_i - v_j\|^2 +\|v_j - v_0\|^2 - \|v_i - v_0\|^2 \geq \|v_i - v_j\|^2 \geq \Delta$.

To show the bound on number of cut edges, we analyze the algorithm as follows. For sets $U$ and $W$, we define $d(U,W) := \min_{u \in U, w \in W} d(u,w)$. Similarly, one can define $d(U,w) := d(U,\{w\})$ and $d(u,W) = d(\{u\},W)$ for $u,w \in [n] \cup \{0\}$. Pick a random threshold $\tau \in (0, \frac{\Delta}{4})$ at random and consider the set $X = \{x \mid d(R^+, x) \leq \tau\}$. Clearly $X$ includes $R^+$, and does not include $L^{-}$. Also for each arc $(a,b)$ such that $a \in X$ and $b \notin X$, we have $d(a,b) \geq d(R^+, b) - d(R^+, a)$ since the directed distance metric $d$ satisfies the triangle inequality. It follows that the probability that the edge $(a,b)$ is cut is at most $\OO(\frac{d(a,b)}{\Delta})$ and hence the number of edges cut in expectation is at most $\OO(\frac{SDP}{\Delta})$. In particular, the $(R^+, L^-)$ min-cut has size at most $\OO(\frac{SDP}{\Delta})$.   

Next, we show that this cut is $b^*$- balanced with respect to $T$, for some $b^* < 1$ depending only on $b$. This follows directly from the description of the algorithm. Recall that the ARV separation theorem ensures that $|L|, |R| \geq \Omega_c(|T|)$. Since $|R^+| > \frac{|R|}{2}$ and by construction $|L^-| > \frac{|L|}{2}$ and there is no path from $R^+$ to $L^-$ in $G \setminus F$, the balance condition follows.
\end{proof}

\begin{proof}[Proof of ~\Cref{thm:approxbalsep}]
Given the directed graph $G$ which forms the instance of \DB{}, we first guess the value of $k$, so that there is a set of edges $F$ with $|F| \leq k$ so that each strongly connected component of $G \setminus F$ has size at most $bn$. Using a prefix of the strongly connected components of $G \setminus F$, it follows that there is a partition $X \cup Y$ of $V(G)$ so that $|X|, |Y| \geq c'n$ where $c' = \frac{1-b}{2}$ and there is no arc from $X$ to $Y$ in $G \setminus F$. Next, construct a $(\epsilon, k)$ sample set $T$ for some constant $\epsilon \ll c'$. With constant probability, $T$ is a sample set and $|T \cap X|, |T \cap Y| \geq (c' - \OO(\epsilon))|T| \geq \frac{c'}{2}|T|$.  Use the above algorithm and~\Cref{lemma:approx} with $c = \frac{c'}{2}$ to obtain a set of edges $F'$ with $|F'| \leq \OO(k \sqrt{\log |T|})$ that forms a balanced separator for $T$. It is clear that in $G \setminus F'$, every strongly connected component has at most $b^*|T|$ terminals where $b^*$ depends only on $c'$, as given by~\Cref{lemma:approx}. We choose $\epsilon$ small enough so that $b^* < 1- 2\epsilon$. It follows from~\Cref{def:sample_set} that each strongly connected component has size at most $(b^* + \epsilon)n$. Finally, note that $|T| = \OO(k)$ (since $\epsilon$ is a constant), hence $\sqrt{\log |T|} = \OO(\sqrt{\log k})$. This concludes the proof.

\end{proof}

\section{Vertex Cut Sparsifiers}

In this section, we present our result on vertex cut sparsifiers. We start with a few definitions.

\begin{definition}
Given a directed graph $G$, a set of terminals $T \subseteq V(G)$, and an integer $c$, a directed graph $G'$ satisfying $T \subseteq V(G')$ is said to be a $(c,T)$ vertex cut sparsifier for $G$ if for any partition $A \cup B$ of $T$ such that the size of the minimum $A$-$B$ (vertex) cut in $G$ is at most $c$, the size of the $A$-$B$ mincut is the same in $G$ and $G'$.
\end{definition}

Our next result gives our deterministic construction of vertex sparsifiers in directed graphs.
\mimicking*

To prove this result, we use our result on important separators combined with the standard approach of ``closing'' unnecessary vertices - given a graph $G$ and a vertex $v \in V(G)$, applying the closure operation to $v$ gives the graph $G'$ with the vertex set $V(G) \setminus \{v\}$. The arc set of $G'$ is the same as that of $G$, except that it excludes all arcs incident on $v$, and for every pair of vertices $(u,w)$ such that $u$ is an in-neighbour of $v$ and $w$ is an out-neighbour of $v$,
we add the arc $(u,w)$ to $E(G')$. Before we prove this result, we have the following basic observation. Recall that given a graph $H$ and sets $S,Z \subseteq V(H)$, $R_H^S(Z)$ denotes the set of vertices reachable from some vertex of $S$ in $H \setminus Z$.

\begin{lemma}\label{lemma:closurereach}
Let $A,Y \subseteq V(G)$ and let $G'$ be the graph obtained by applying the closure operation on a vertex $v \in V(G)$. Also suppose $v \notin Y$. Then,

\begin{enumerate}
\item $R_{G'}^A(Y) \subseteq R^A_{G}(Y)$.
\item $R_G^A(Y) \setminus \{v\} \subseteq R^A_{G'}(Y)$

\end{enumerate}

\end{lemma}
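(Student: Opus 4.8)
The plan is to prove both inclusions by lifting paths across the closure operation, using throughout that $v \notin Y$. Recall that $V(G') = V(G) \setminus \{v\}$, that every arc of $G'$ is either an arc of $G$ not incident to $v$ or a \emph{shortcut} arc $(u,w)$ with $(u,v),(v,w) \in E(G)$, and that every arc of $G$ incident to $v$ contributes only such shortcuts to $G'$.

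For the first inclusion, I would take any $x \in R_{G'}^A(Y)$ and fix a path $P$ from some $a \in A$ to $x$ in $G' \setminus Y$. Transform $P$ into a walk $W$ in $G$ by replacing each shortcut arc $(u,w)$ that $P$ uses with the length-two subpath $u \to v \to w$, which exists in $G$ by definition of the closure; every other arc of $P$ is already an arc of $G$. The vertex set of $W$ is contained in $V(P) \cup \{v\}$, and since $V(P) \subseteq V(G') \setminus Y$ and $v \notin Y$ by hypothesis, $W$ avoids $Y$. Hence $W$ is an $a$-to-$x$ walk in $G \setminus Y$, which gives $x \in R_G^A(Y)$.

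For the second inclusion, I would take any $x \in R_G^A(Y) \setminus \{v\}$ and fix a \emph{simple} path $P$ from some $a \in A$ to $x$ in $G \setminus Y$. We may assume $a \neq v$, i.e.\ that $P$ does not start at $v$; in every application of this lemma $v$ is a closed, non-terminal vertex while $A$ is a set of sources disjoint from it, so indeed $v \notin A$. If $P$ avoids $v$ altogether, then all its arcs are arcs of $G$ not incident to $v$, hence arcs of $G'$, and $P$ is already an $a$-to-$x$ path in $G' \setminus Y$. Otherwise $P$ visits $v$; since $P$ is simple and $a \neq v \neq x$, the vertex $v$ is internal to $P$, so $P$ contains a subpath $u \to v \to w$, which we replace by the shortcut arc $(u,w) \in E(G')$. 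The result is an $a$-to-$x$ walk in $G'$ whose vertex set is $V(P) \setminus \{v\} \subseteq V(G') \setminus Y$, so $x \in R_{G'}^A(Y)$.

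The argument is essentially bookkeeping and I expect no serious obstacle. The one place requiring care is the second inclusion, where one must use that $P$ is simple, so that a single shortcut replacement suffices and no vertex other than $v$ is ever deleted, and where one must exclude the degenerate case $a = v$ — precisely the point at which one uses that the closed vertex is not itself a source.
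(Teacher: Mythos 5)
Your proof is correct and follows essentially the same path-rerouting argument as the paper's: for the first inclusion, replace each shortcut arc by the detour through $v$ (using $v \notin Y$), and for the second, replace the subpath $u \to v \to w$ by the shortcut arc $(u,w)$. Your extra care about taking $P$ simple and excluding the degenerate case $a = v$ (i.e.\ $v \notin A$) only makes explicit what the paper's proof leaves implicit, since in every application the closed vertex lies outside $A \cup Y$.
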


\begin{proof}

To prove the first part, consider a $y \in R^A_{G'}(Y)$. Consider the path from some vertex $a \in A$ to $y$ in $G' \setminus Y$. If this path uses an arc $(z_1,z_2)$ of $G'$ that is not an arc of $G$, then we use $v$ to replace this arc by the $3$-path $z_1vz_2$ and obtain a new path from $a$ to $y$ in $G \setminus Y$.

For the second part, similarly consider a $y \in R^A_{G}(Y) \setminus \{v\}$. Consider the path from some $a \in A$ to $y$ in $G \setminus Y$. If this path uses some arc that goes through $v$, consider the predecessor $z_1$ and successor $z_2$ of $v$ in this path: there must be an arc between $z_1$ and $z_2$ by the closure operation. We replace the 3-path $z_1vz_2$ by the arc $(z_1,z_2)$ to obtain a path from $a$ to $y$ in $G' \setminus Y$.

\end{proof}

The next two lemmas show that applying the closure operation to a single vertex which is not in any $A$-$B$ important separator of size at most $c$, across all partitions $A \cup B$ of $T$, preserves the set of important separators. This in turn will imply that we can apply the closure operation at once to all such vertices.  

\begin{lemma}\label{lemma:mimicking1}
Given a directed graph $G$, disjoint sets $A$, $B$ and a vertex $v \notin A \cup B$ which is not in any $A$-$B$ important separator of size $\leq c$, let $G'$ be the graph obtained by closing $v$. Then if $X \subseteq V(G')$, $|X| \leq c$, is an important $A$-$B$ separator in $G'$, then it is an $A$-$B$ important separator in $G$ as well.
\end{lemma}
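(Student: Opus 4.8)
The plan is to verify, one by one, the three defining properties of an important $A$-$B$ separator for $X$ in $G$ — that $X$ is an $A$-$B$ separator, that it is inclusion-wise minimal, and that it is not dominated by a separator of size $\le |X|$ — each time reducing to the corresponding fact in $G'$ through a clean reachability identity. First I would note that $X \subseteq V(G')$ forces $v \notin X$, and that \Cref{lemma:closurereach} can be sharpened: since closing $v$ deletes it, $R^A_{G'}(Z) \subseteq V(G') = V(G)\setminus\{v\}$, so the two inclusions of \Cref{lemma:closurereach} combine into the exact identity $R^A_{G'}(Z) = R^A_G(Z)\setminus\{v\}$ for every $Z \subseteq V(G)$ with $v \notin Z$. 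Given this, separator-hood and minimality of $X$ in $G$ follow at once from the same properties in $G'$: if some $b \in B$ were reachable from $A$ in $G\setminus X$, then, as $b \ne v$, we would have $b \in R^A_{G'}(X)$, contradicting that $X$ separates $A$ from $B$ in $G'$; and for each $x\in X$ the witness path from $A$ to $B$ in $G'\setminus(X\setminus\{x\})$ yields, via the identity, such a path in $G\setminus(X\setminus\{x\})$, so $X$ stays inclusion-wise minimal in $G$.

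For the remaining property, suppose toward a contradiction that $X$ is not important in $G$. Since $X$ is a minimal $A$-$B$ separator in $G$, non-importance produces an $A$-$B$ separator of size $\le |X|$ whose reachability set strictly contains $R^A_G(X)$; dominating it by an important $A$-$B$ separator of no larger size (the standard fact that every $A$-$B$ separator is dominated by an important one of size at most its own) gives an important $A$-$B$ separator $Y$ with $|Y|\le|X|\le c$ and $R^A_G(X) \subsetneq R^A_G(Y)$. By hypothesis $v$ lies in no important $A$-$B$ separator of size $\le c$, so $v \notin Y$, and hence the reachability identity applies to both $X$ and $Y$.

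The delicate point — and the step I expect to be the main obstacle — is that strictness of $R^A_G(X) \subsetneq R^A_G(Y)$ need not survive deleting $v$ from both sides, which could fail exactly when $R^A_G(Y)\setminus R^A_G(X) = \{v\}$. I would rule this out directly: if $v \notin R^A_G(X)$, then no in-neighbour of $v$ in $G$ lies in $R^A_G(X)$ (such an in-neighbour lies outside $X$, as does $v$, so its reachability from $A$ in $G\setminus X$ would propagate along the arc to $v$); but taking a shortest $A$-to-$v$ path in $G\setminus Y$, which exists since $v \in R^A_G(Y)$, its penultimate vertex is an in-neighbour of $v$ lying in $R^A_G(Y)\setminus\{v\}$, and under the assumed equality this set equals $R^A_G(X)$ — a contradiction. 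Hence there is some $w \ne v$ with $w \in R^A_G(Y)\setminus R^A_G(X)$, so $R^A_{G'}(X) = R^A_G(X)\setminus\{v\} \subsetneq R^A_G(Y)\setminus\{v\} = R^A_{G'}(Y)$.

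To finish, observe that $Y \subseteq V(G) \setminus\{v\} = V(G')$, that $|Y| \le |X|$, and that $R^A_{G'}(Y)\cap B \subseteq R^A_G(Y)\cap B = \emptyset$, so $Y$ is an $A$-$B$ separator in $G'$ of size $\le|X|$ with $R^A_{G'}(Y) \supsetneq R^A_{G'}(X)$. This contradicts the assumption that $X$ is an important $A$-$B$ separator in $G'$, and completes the proof.
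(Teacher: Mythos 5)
Your proof is correct and follows essentially the same route as the paper: check separator-hood, minimality, and non-domination of $X$ in $G$ by transferring reachability through \Cref{lemma:closurereach}, and use the hypothesis that $v$ lies in no important separator of size $\leq c$ to ensure the important dominating separator $Y$ avoids $v$. The only (valid) variation is in how strictness of $R^A_{G'}(X) \subsetneq R^A_{G'}(Y)$ is obtained: the paper argues via minimality of $X$ in $G'$ and a vertex of $X \setminus Y$ becoming reachable under $Y$, whereas you rule out the possibility $R^A_G(Y) \setminus R^A_G(X) = \{v\}$ directly with an in-neighbour argument.
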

\begin{proof}
We need to prove three things. First, that $X$ is an $A$-$B$ separator in $G$. Second, it is inclusion-wise minimal. Third, that $X$ is not \emph{dominated} by any other separator - that is, there does not exist another separator $Y$ with $|Y| \leq |X|$ so that $R_{G}^A(X) \subset R_{G}^A(Y)$.

First, $X$ is a separator in $G$, for if not, an $a$-$b$ path for dome $a \in A$ and $b \in B$ in $G \setminus X$ would imply such a path in $G' \setminus X$ by~\Cref{lemma:closurereach}, a contradiction.

Next, if $X \setminus \{w\}$ for some $w \in X$ is also an $A$-$B$ separator in $G$, then it must be an $A$-$B$ separator in $G'$ as well: for if there is a $a$-$b$ path in $G' \setminus (X \setminus\{w\})$ for some $w \in X,a  \in A, b \in B$, then there must be an $a$-$b$ path in $G \setminus (X \setminus \{w\})$ as well by~\Cref{lemma:closurereach}.

Finally, suppose $Y$ is an important separator that dominates $X$ in $G$. Observe that $v \notin Y$. First we show that $Y$ is a separator in $G'$. This follows since an $a$-$b$ path in $G' \setminus Y$ implies an $a$-$b$ path in $G$ by~\Cref{lemma:closurereach}.

Now we wish to prove that $Y$ dominates $X$ in $G'$ as well. Consider a vertex $u \in R^{G'}_A(X)$. Then by~\Cref{lemma:closurereach} it is clear that $u \in R^G_A(X)$ as well. But this means $u \in R^G_A(Y)$, by our assumption. This in turn means that $u \in R^{G'}_A(Y)$. Thus $R^{G'}_A(X) \subseteq R^{G'}_A(Y)$. Finally, we note that $X$ was an inclusion-wise minimal separator in $G'$, and $Y \neq X$, therefore there must exist at least one vertex of $X$ that is reachable in $R^{G'}_A(Y)$. This means that in fact $R^{G'}_A(X) \subset R^{G'}_A(Y)$, and we obtain that $Y$ dominates $X$ in $G'$, a contradiction to the fact that $X$ was important in $G'$.
\end{proof}

The next lemma is analogous and proves the other direction.

\begin{lemma}\label{lemma:mimicking2}
Given a directed graph $G$, disjoint sets $A$, $B$ and a vertex $v$ which is not in any $A-B$ important separator of size $\leq c$, let $G'$ be the graph obtained by closing $v$. Then if $X \subseteq V(G)$, $|X| \leq c$ is an $A$-$B$ important separator in $G$, then it is an $A$-$B$ important separator in $G'$ as well.
\end{lemma}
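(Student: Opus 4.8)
The plan is to mirror the argument in Lemma~\ref{lemma:mimicking1} with the roles of $G$ and $G'$ swapped, using Lemma~\ref{lemma:closurereach} as the bridge. First I would establish the three defining properties of an important $A$-$B$ separator for $X$ in $G'$: that $X$ separates $A$ from $B$ in $G'$, that $X$ is inclusion-wise minimal in $G'$, and that $X$ is not dominated by any separator of size at most $|X|$ in $G'$. Note first that $v \notin X$: since $X$ is an $A$-$B$ important separator of size $\leq c$ in $G$ and $v$ lies in no such separator, $v\notin X$, so $X \subseteq V(G')$ and the statement makes sense. For the separation property, if some $a$-$b$ path survives in $G' \setminus X$ for $a\in A$, $b\in B$, then by Part~1 of Lemma~\ref{lemma:closurereach} (applied with $Y = X$) this reachability persists in $G \setminus X$, contradicting that $X$ separates $A$ from $B$ in $G$. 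For minimality, if $X \setminus \{w\}$ were still an $A$-$B$ separator in $G'$ for some $w \in X$, then by Part~2 of Lemma~\ref{lemma:closurereach} (with $Y = X\setminus\{w\}$, noting $v \notin X \setminus \{w\}$) every vertex of $R^G_A(X\setminus\{w\}) \setminus \{v\}$ is reachable in $G'\setminus(X\setminus\{w\})$; combined with the fact that $v$ by itself cannot reach $B$ without leaving $R^{G'}_A(X\setminus\{w\})$ — more carefully, if there were an $a$-$b$ path in $G\setminus(X\setminus\{w\})$ it would have to pass through $v$, and replacing the subpath $z_1 v z_2$ by the arc $(z_1,z_2)$ present by the closure operation yields an $a$-$b$ path in $G'\setminus(X\setminus\{w\})$ — this contradicts the minimality of $X$ in $G$.

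For the domination property, suppose toward a contradiction that some separator $Y$ with $|Y| \leq |X| \leq c$ dominates $X$ in $G'$, i.e.\ $R^{G'}_A(X) \subset R^{G'}_A(Y)$. Since $v \notin R^{G'}_A(X)$ trivially (as $v \notin V(G')$), and since $Y$ is a minimum-ish separator we may assume $v \notin Y$ — more precisely, $Y \subseteq V(G')$ so $v\notin Y$ automatically. I would then argue $Y$ is a separator in $G$: an $a$-$b$ path in $G\setminus Y$, pushed through the closure, gives an $a$-$b$ path in $G'\setminus Y$ (using Part~2 of Lemma~\ref{lemma:closurereach} with $Y$ the removed set and the closure-arc replacement for any subpath through $v$), a contradiction. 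Next, to transfer domination: take $u \in R^G_A(X)$. If $u = v$, then since $v$ has an in-neighbour $z_1 \in R^G_A(X)$ with $z_1 \neq v$ and $z_1 \notin X$, we get $z_1 \in R^{G'}_A(X) \subseteq R^{G'}_A(Y)$ by assumption, and since $v\notin Y$ and $v$ has an out-neighbour, by the closure operation $z_1$ connects to all out-neighbours of $v$, so those out-neighbours lie in $R^{G'}_A(Y)$, hence in $R^G_A(Y)$ by Part~1 of Lemma~\ref{lemma:closurereach}, hence $v \in R^G_A(Y)$; and if $u \neq v$, then $u \in R^G_A(X) \setminus \{v\} \subseteq R^{G'}_A(X) \subseteq R^{G'}_A(Y) \subseteq R^G_A(Y)$ by Lemma~\ref{lemma:closurereach}. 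Either way $u \in R^G_A(Y)$, so $R^G_A(X) \subseteq R^G_A(Y)$; minimality of $X$ in $G$ forces a vertex of $X$ into $R^G_A(Y)$, giving $R^G_A(X) \subset R^G_A(Y)$, so $Y$ dominates $X$ in $G$, contradicting that $X$ is important in $G$.

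The main obstacle I expect is the bookkeeping around the vertex $v$ itself when transferring reachability sets, since $v$ is present in $G$ but absent from $G'$, so the two reachability sets $R^G_A(\cdot)$ and $R^{G'}_A(\cdot)$ differ precisely at $v$ and at vertices whose only access route in $G$ passed through $v$. Lemma~\ref{lemma:closurereach} is exactly tailored to control this discrepancy — it gives containment in both directions up to the single vertex $v$ — so the crux is to apply it carefully in the domination step, handling the case $u = v$ separately via an in-neighbour/out-neighbour of $v$ and the closure arc, as sketched above. The remaining steps are routine once this is set up correctly, and symmetric to the proof of Lemma~\ref{lemma:mimicking1}.
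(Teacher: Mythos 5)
Your proof is correct and follows essentially the same route as the paper's: it transfers the three defining properties (separation, inclusion-wise minimality, non-domination) between $G$ and $G'$ via Lemma~\ref{lemma:closurereach}, and handles the special vertex in the domination step through an in-neighbour of $v$ and the arc into $v$, exactly as the paper does with its vertex $v^*$. The only blemishes are cosmetic — the aside that an $a$-$b$ path in $G \setminus (X \setminus \{w\})$ ``would have to pass through $v$'' is not needed (Part~2 of Lemma~\ref{lemma:closurereach} already puts $b$ in $R^{G'}_A(X\setminus\{w\})$), and the detour through the out-neighbours of $v$ is superfluous since $z_1 \in R^G_A(Y)$ together with the arc $(z_1,v)$ and $v \notin Y$ already yields $v \in R^G_A(Y)$ — but the underlying argument is sound and matches the paper.
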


\begin{proof}

We need to prove three things. First, that $X$ is an $A$-$B$ separator in $G'$. Second, it is inclusion-wise minimal. Third, that $X$ is not dominated by any other separator - that is, there does not exist another separator $Y$ with $|Y| \leq |X|$ so that $R_{G'}^A(X) \subset R_{G'}^A(Y)$.

First, we show that $X$ is an $A$-$B$ separator in $G'$. If not, there is an $a$-$b$ path in $G' \setminus X$ for some $a \in A, b \in B$. But then there is an $a$-$b$ path in $G \setminus X$ by~\Cref{lemma:closurereach}, a contradiction.

Second, suppose $X \setminus \{w\}$ is an $A$-$B$ separator in $G'$ for some $w \in X$. Then $X \setminus \{w\}$ is also an $A$-$B$ separator in $G$, for if there is an $a$-$b$ path in $G \setminus (X \setminus \{w\})$, then such a path exists in $G' \setminus (X \setminus \{w\})$ as well by~\Cref{lemma:closurereach}, again a contradiction.

Third, suppose that $X$ is dominated by $Y$ in $G'$. First, we show that $Y$ is a separator in $G$. This holds since any $a$-$b$ path in $G \setminus Y$ would imply an $a$-$b$ path in $G' \setminus Y$ by~\Cref{lemma:closurereach}.

Then we would like to show that $X$ is dominated by $Y$ in $G$ as well. Pick some $u \in R_G(X)$. If $u \neq v$, then it is clear that $u \in R_{G'}(X)$ as well by~\Cref{lemma:closurereach}. Then $u \in R_{G'}(Y)$ by our assumption, and it follows that $u \in R_G(Y)$ by~\Cref{lemma:closurereach}. Now suppose $v \in R_G(X)$. Pick any in-neighbour $v^*$ of $v$ in $G$, so that $v^*$ in $R_G(X)$. Then it follows that $v^* \in R_{G}(Y)$ by our previous argument. But $v$ is an out-neighbour of $v^*$, and $v \notin Y$. Hence it follows that $v \in R_G(Y)$. Thus it follows that $R_G(X) \subseteq R_G(Y)$. Also since $X$ was an inclusion-wise minimal separator in $G$ and $Y \neq X$, at least one vertex of $X$ must be reachable in $G \setminus Y$, which means that in fact $R_G(X) \subset R_G(Y)$ which contradicts the fact that $X$ is an important separator in $G$.
\end{proof}

\begin{proof}[Proof of~\Cref{thm:mimicking}]
Given $T$, consider all pairs of subsets $(A',B')$ of $T$ satisfying $|A'| \leq k$ and $|B'| \leq 2k$. Let $Y$ be the union of all $(A',B')$ important separators in $G$ across all such $(A',B')$. Observe that $|Y| \leq 4^k {|T| \choose \leq 3k} {3k \choose \leq k}$. 

Apply the closure operation to every vertex of $ Z = V(G) \setminus Y$ simultaneously. More precisely, define $Z_{in} = \{v \in V(G) \mid (v,z) \in E(G)\,\text{for some}\,z \in Z\}$ and $Z_{out} = \{v \in V(G) \mid (z,v) \in E(G)\,\text{for some}\,z \in Z\}$. Then in $G$, we delete the set of vertices $Z$, and add arcs $(z_{in}, z_{out})$, $z_{in} \in Z_{in}, z_{out} \in Z_{out}$ to obtain $G'$.

The correctness and running time of the algorithm follows from the soundness of~\Cref{thm:importantbound},\\~\Cref{lemma:mimicking1} and~\Cref{lemma:mimicking2}.
\end{proof}
}

\bibliographystyle{alpha}
\bibliography{references}

\newcommand{\etalchar}[1]{$^{#1}$}
\begin{thebibliography}{ACMM05}

\bibitem[ACMM05]{agarwal2005log}
Amit Agarwal, Moses Charikar, Konstantin Makarychev, and Yury Makarychev.
\newblock {$\mathcal{O}(\sqrt{\log n})$} approximation algorithms for min
  uncut, min 2cnf deletion, and directed cut problems.
\newblock In {\em Proceedings of the thirty-seventh annual ACM symposium on
  Theory of computing}, pages 573--581, 2005.

\bibitem[ARV09]{arora2009expander}
Sanjeev Arora, Satish Rao, and Umesh Vazirani.
\newblock Expander flows, geometric embeddings and graph partitioning.
\newblock {\em Journal of the ACM (JACM)}, 56(2):1--37, 2009.

\bibitem[CDK{\etalchar{+}}21]{chalermsook2021vertex}
Parinya Chalermsook, Syamantak Das, Yunbum Kook, Bundit Laekhanukit, Yang~P
  Liu, Richard Peng, Mark Sellke, and Daniel Vaz.
\newblock Vertex sparsification for edge connectivity.
\newblock In {\em Proceedings of the 2021 ACM-SIAM Symposium on Discrete
  Algorithms (SODA)}, pages 1206--1225. SIAM, 2021.

\bibitem[CFK{\etalchar{+}}15]{cygan2015parameterized}
Marek Cygan, Fedor~V Fomin, {\L}ukasz Kowalik, Daniel Lokshtanov, D{\'a}niel
  Marx, Marcin Pilipczuk, Micha{\l} Pilipczuk, and Saket Saurabh.
\newblock {\em Parameterized algorithms}, volume~4.
\newblock Springer, 2015.

\bibitem[CHM13]{chitnis2013fixed}
Rajesh Chitnis, MohammadTaghi Hajiaghayi, and D{\'a}niel Marx.
\newblock Fixed-parameter tractability of directed multiway cut parameterized
  by the size of the cutset.
\newblock {\em SIAM Journal on Computing}, 42(4):1674--1696, 2013.

\bibitem[Chu12]{chuzhoy2012vertex}
Julia Chuzhoy.
\newblock On vertex sparsifiers with steiner nodes.
\newblock In {\em Proceedings of the forty-fourth annual ACM symposium on
  Theory of computing}, pages 673--688, 2012.

\bibitem[CKL{\etalchar{+}}20]{cygan2020randomized}
Marek Cygan, Pawe{\l} Komosa, Daniel Lokshtanov, Marcin Pilipczuk, Micha{\l}
  Pilipczuk, Saket Saurabh, and Magnus Wahlstr{\"o}m.
\newblock Randomized contractions meet lean decompositions.
\newblock {\em ACM Transactions on Algorithms (TALG)}, 17(1):1--30, 2020.

\bibitem[CLL{\etalchar{+}}08]{chen2008fixed}
J~Chen, Y~Liu, S~Lu, B~O'Sullivan, and Igor Razgon.
\newblock A fixed-parameter algorithm for the directed feedback vertex set
  problem.
\newblock {\em Journal of the ACM}, 55(5):1--19, 2008.

\bibitem[CLL09]{chen2009improved}
Jianer Chen, Yang Liu, and Songjian Lu.
\newblock An improved parameterized algorithm for the minimum node multiway cut
  problem.
\newblock {\em Algorithmica}, 55(1):1--13, 2009.

\bibitem[CLP{\etalchar{+}}19]{cygan2014minimum}
Marek Cygan, Daniel Lokshtanov, Marcin Pilipczuk, Micha{\l} Pilipczuk, and
  Saket Saurabh.
\newblock Minimum bisection is fixed-parameter tractable.
\newblock {\em SIAM Journal on Computing}, 48(2):417--450, 2019.

\bibitem[DF{\etalchar{+}}13]{downey2013fundamentals}
Rodney~G Downey, Michael~R Fellows, et~al.
\newblock {\em Fundamentals of parameterized complexity}, volume~4.
\newblock Springer, 2013.

\bibitem[ENRS99]{even1999fast}
Guy Even, Joseph Naor, Satish Rao, and Baruch Schieber.
\newblock Fast approximate graph partitioning algorithms.
\newblock {\em SIAM Journal on Computing}, 28(6):2187--2214, 1999.

\bibitem[FHL08]{feige2005improved}
Uriel Feige, Mohammadtaghi Hajiaghayi, and James~R Lee.
\newblock Improved approximation algorithms for minimum weight vertex
  separators.
\newblock {\em SIAM Journal on Computing}, 38(2):629--657, 2008.

\bibitem[FM06]{fm06}
Uriel Feige and Mohammad Mahdian.
\newblock Finding small balanced separators.
\newblock In {\em Proceedings of the thirty-eighth annual ACM symposium on
  Theory of computing}, pages 375--384, 2006.

\bibitem[FY03]{feige2003complexity}
Uriel Feige and Orly Yahalom.
\newblock On the complexity of finding balanced oneway cuts.
\newblock {\em Information processing letters}, 87(1):1--5, 2003.

\bibitem[HLW21]{he2021near}
Zhiyang He, Jason Li, and Magnus Wahlstr{\"o}m.
\newblock Near-linear-time, optimal vertex cut sparsifiers in directed acyclic
  graphs.
\newblock In {\em 29th Annual European Symposium on Algorithms (ESA 2021)}.
  Schloss Dagstuhl-Leibniz-Zentrum f{\"u}r Informatik, 2021.

\bibitem[Kle03]{kleinberg2004detecting}
Jon Kleinberg.
\newblock {Detecting a Network Failure}.
\newblock {\em Internet Mathematics}, 1(1):37 -- 55, 2003.

\bibitem[KSS08]{kleinberg2008network}
Jon Kleinberg, Mark Sandler, and Aleksandrs Slivkins.
\newblock Network failure detection and graph connectivity.
\newblock {\em SIAM Journal on Computing}, 38(4):1330--1346, 2008.

\bibitem[KW12]{kratsch2012representative}
Stefan Kratsch and Magnus Wahlstr{\"o}m.
\newblock Representative sets and irrelevant vertices: New tools for
  kernelization.
\newblock In {\em 2012 IEEE 53rd Annual Symposium on Foundations of Computer
  Science}, pages 450--459. IEEE, 2012.

\bibitem[KW20]{kratsch2020representative}
Stefan Kratsch and Magnus Wahlstr{\"o}m.
\newblock Representative sets and irrelevant vertices: New tools for
  kernelization.
\newblock {\em Journal of the ACM (JACM)}, 67(3):1--50, 2020.

\bibitem[Liu23]{liu2023vertex}
Yang~P Liu.
\newblock Vertex sparsification for edge connectivity in polynomial time.
\newblock In {\em 14th Innovations in Theoretical Computer Science Conference
  (ITCS 2023)}, volume 251, page~83. Schloss Dagstuhl – Leibniz-Zentrum für
  Informatik, 2023.

\bibitem[LM13]{lokshtanov2013clustering}
Daniel Lokshtanov and D{\'a}niel Marx.
\newblock Clustering with local restrictions.
\newblock {\em Information and Computation}, 222:278--292, 2013.

\bibitem[LMR{\etalchar{+}}21]{lokshtanov2021fpt}
Daniel Lokshtanov, Pranabendu Misra, MS~Ramanujan, Saket Saurabh, and Meirav
  Zehavi.
\newblock Fpt-approximation for fpt problems.
\newblock In {\em Proceedings of the 2021 ACM-SIAM Symposium on Discrete
  Algorithms (SODA)}, pages 199--218. SIAM, 2021.

\bibitem[LN22]{li2022detecting}
Jason Li and Jesper Nederlof.
\newblock Detecting feedback vertex sets of size k in {$\mathcal{O}(2.7^k)$}
  time.
\newblock {\em ACM Transactions on Algorithms (TALG)}, 18(4):1--26, 2022.

\bibitem[LRS{\etalchar{+}}25]{lokshtanov2025wannabe}
Daniel Lokshtanov, Maadapuzhi-Sridharan Ramanujan, Saket Saurabh, Roohani
  Sharma, and Meirav Zehavi.
\newblock Wannabe bounded treewidth graphs admit a polynomial kernel for
  directed feedback vertex set.
\newblock {\em ACM Trans. Comput. Theory}, 17(1), February 2025.

\bibitem[Mar06]{marx2006parameterized}
D{\'a}niel Marx.
\newblock Parameterized graph separation problems.
\newblock {\em Theoretical Computer Science}, 351(3):394--406, 2006.

\bibitem[Mar11]{marx2011important}
D{\'a}niel Marx.
\newblock Important separators and parameterized algorithms.
\newblock In {\em International Workshop on Graph-Theoretic Concepts in
  Computer Science}, pages 5--10. Springer, 2011.

\bibitem[Moi09]{moitra2009approximation}
Ankur Moitra.
\newblock Approximation algorithms for multicommodity-type problems with
  guarantees independent of the graph size.
\newblock In {\em 2009 50th Annual IEEE Symposium on Foundations of Computer
  Science}, pages 3--12. IEEE, 2009.

\bibitem[MPRS20]{misra2020linear}
Pranabendu Misra, Fahad Panolan, MS~Ramanujan, and Saket Saurabh.
\newblock Linear representation of transversal matroids and gammoids
  parameterized by rank.
\newblock {\em Theoretical Computer Science}, 818:51--59, 2020.

\bibitem[MR14]{marx2011fixed}
Daniel Marx and Igor Razgon.
\newblock Fixed-parameter tractability of multicut parameterized by the size of
  the cutset.
\newblock {\em SIAM Journal on Computing}, 43(2):355, 2014.

\bibitem[MSZ21]{madathil2021sub}
Jayakrishnan Madathil, Roohani Sharma, and Meirav Zehavi.
\newblock A sub-exponential fpt algorithm and a polynomial kernel for minimum
  directed bisection on semicomplete digraphs.
\newblock {\em Algorithmica}, 83:1861--1884, 2021.

\bibitem[R{\"a}c08]{racke2008optimal}
Harald R{\"a}cke.
\newblock Optimal hierarchical decompositions for congestion minimization in
  networks.
\newblock In {\em Proceedings of the fortieth annual ACM symposium on Theory of
  computing}, pages 255--264, 2008.

\bibitem[SY22]{saranurak2022deterministic}
Thatchaphol Saranurak and Sorrachai Yingchareonthawornchai.
\newblock Deterministic small vertex connectivity in almost linear time.
\newblock In {\em 2022 IEEE 63rd Annual Symposium on Foundations of Computer
  Science (FOCS)}, pages 789--800. IEEE, 2022.

\end{thebibliography}

\appendix

\section{Appendix}

\lowerbound*
\begin{proof} 
Essentially, we will construct undirected graphs: these graphs can be made directed simply by adding for every undirected edge $(u,v)$, the directed edges $(u,v)$ and $(v,u)$.

For the first part, consider the star on $c + 1$ vertices for some $c \gg k$. Let $S$ consist of the single center vertex, and let $T$ consist of all the leaves. Observe that for any $T' \subseteq T$ of size at most $k$, the separator which consists of exactly the vertices $T'$ is an important $S$-$T'$ separator of size at most $k$. Thus there are at least ${|T| \choose \leq k}$ such separators.

For the second part, consider again a star, but this time we blow up the center vertex to a clique of size $k + 1$. Formally, consider the graph $G$ on $c + k + 1$ vertices, where $T \subseteq V(G)$ is a clique of size $k + 1$, and $S \subseteq V(G)$ forms the remaining $c$ vertices. Each vertex of $S$ is adjacent to each vertex of $T$, and not adjacent to any vertex of $S$. Thus the graph is a star with a core $T$ of size $k + 1$ and $c$ leaves which form the set $S$. Now for every subset $S' \subseteq S$ with $|S'| \leq k$, $S'$ itself forms an $S'$-$T$ important separator of size at most $k$. Thus there are at least ${|S| \choose \leq k}$ such separators.    
\end{proof}

\begin{lemma}[Tight example for important separator preservation]\label{lemma:impseppreserve}
There is a graph $G$, source vertex $s \in V(G)$, integer $k$, sink vertices $B \subseteq V(G)$ with $|B| = 2k$ and an $s$-$B$ important separator $X \subseteq V(G)$ with $|X| = k$, such that $X$ is not an $s$-$B'$ important separator for any $B' \subset B$.
\end{lemma}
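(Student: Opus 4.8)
The plan is to exhibit an explicit ``fan of forks'' gadget. Take a single source $s$, internal vertices $x_1,\dots,x_k$, and, for each $i\in[k]$, two sink vertices $b_i^{(1)},b_i^{(2)}$. Add an arc $s\to x_i$ for every $i\in[k]$ and arcs $x_i\to b_i^{(1)}$, $x_i\to b_i^{(2)}$ for every $i\in[k]$. Set $B=\{\,b_i^{(j)}:i\in[k],\,j\in\{1,2\}\,\}$, so $|B|=2k$, and $X=\{x_1,\dots,x_k\}$, so $|X|=k$. The intuition is that each ``fork'' $i$ forces us to keep \emph{both} of its sinks in order to certify that cutting at $x_i$ (cost $1$) rather than at the pair $b_i^{(1)},b_i^{(2)}$ (cost $2$) is the unique cheapest way to disconnect that fork; as soon as one sink is dropped from $B$, that fork admits an alternative separator reaching strictly further from $s$. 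This is exactly the situation that forces the bound $2k$ appearing in the proof of \Cref{thm:importantbound}.

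First I would check that $X$ is an important $s$-$B$ separator via \Cref{lemma:equivalence}, i.e.\ that $X$ is a minimal $s$-$B$ separator and that $X$ is closest to $B$. Minimality is immediate: deleting any $x_i$ leaves the path $s\to x_i\to b_i^{(1)}$ in $G\setminus(X\setminus\{x_i\})$. For closeness, observe that the only two $x_i$-to-$B$ paths, $x_i\to b_i^{(1)}$ and $x_i\to b_i^{(2)}$, share only $x_i$; hence any $X$-$B$ vertex cut must delete at least one vertex of each of the $k$ (vertex-disjoint) forks, so its size is at least $k$, and the unique way to achieve size exactly $k$ is to delete one vertex per fork, namely $x_i$ in fork $i$. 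Thus $X$ is the unique minimum $X$-$B$ vertex cut, i.e.\ $X$ is closest to $B$. (Equivalently, one may exhibit the $|X|+1$ near-disjoint paths required by \Cref{lemma:closest}: for a fixed $x_i$, take $x_j\to b_j^{(1)}$ for $j\ne i$ together with $x_i\to b_i^{(1)}$ and $x_i\to b_i^{(2)}$.)

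Next I would show that for every proper subset $B'\subsetneq B$, $X$ is not an important $s$-$B'$ separator. Note that in $G\setminus X$ the source $s$ has no out-arcs, so $R_{\{s\}}(X)=\{s\}$. Fix $i$ and $j$ with $b_i^{(j)}\notin B'$ and set
$$Y=(X\setminus\{x_i\})\cup\bigl(\{b_i^{(1)},b_i^{(2)}\}\cap B'\bigr).$$
Since fork $i$ is missing at least one sink from $B'$, we have $|Y|\le k=|X|$, and $Y\ne X$ because $x_i\notin Y$. Moreover $Y$ is an $s$-$B'$ separator: every $x_j$ with $j\ne i$ is deleted, while the only out-neighbours of $x_i$ are $b_i^{(1)},b_i^{(2)}$, and those that lie in $B'$ have been deleted. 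Finally, the arc $s\to x_i$ survives in $G\setminus Y$, so $R_{\{s\}}(Y)\supseteq\{s,x_i\}\supsetneq\{s\}=R_{\{s\}}(X)$. Thus $Y$ is an $s$-$B'$ separator with $|Y|\le|X|$ and $R_{\{s\}}(Y)\supsetneq R_{\{s\}}(X)$, which by definition prevents $X$ from being an important $s$-$B'$ separator. This argument covers \emph{every} proper subset, including the degenerate case $B'=\emptyset$ (where $Y=X\setminus\{x_i\}$).

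I do not expect a genuine obstacle here; the construction and the two verifications are short. The only point needing a little care is to arrange both requirements simultaneously: that $X$ is genuinely important for the \emph{full} set $B$ (which is precisely why each fork needs two sinks, and why the ``minimal $+$ closest'' characterization of \Cref{lemma:equivalence}, together with the path characterization of \Cref{lemma:closest}, is the convenient tool), and that $X$ collapses under the removal of \emph{any} single sink (the role of the explicit dominating separator $Y$). If an undirected instance is preferred, one replaces each arc by a pair of anti-parallel arcs; the same reachability and minimum-cut computations go through with only cosmetic changes.
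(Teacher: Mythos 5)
Your proposal is correct and takes essentially the same route as the paper's tight example: a fan of $k$ forks in which $X$ is the unique minimum $X$--$B$ cut (hence closest, hence important), while dropping any sink lets one swap the corresponding fork vertex for a cheaper replacement and reach strictly further, killing importance for every $B'\subset B$. The only cosmetic difference is that the paper inserts an intermediate layer of vertices $v_{2i-1},v_{2i}$ so the dominating separator uses non-terminal vertices, whereas you delete the remaining sink of the fork directly, which is permitted since the paper allows deleting source and sink vertices in vertex cuts.
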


\begin{proof}
Consider the graph in~\Cref{fig:three-boxes}. $X$ is an important $s$-$B$ separator of size $3$. However, clearly $X$ is not an important $s$-$B'$ important separator for any $B' \subset B$, since we must have $B' \cap \{b_{2i - 1}, b_{2i}\} \leq 1$ for some $i \in [3]$ and this means that $(X \setminus \{u_i\}) \cup v_{2i - 1}$ or $(X \setminus \{u_i\}) \cup v_{2i}$ is an $s$-$B'$ separator closer to $B'$ than $X$, contradicting the importance of $X$.
\begin{figure}[!htbp]
  \centering
  \begin{tikzpicture}[node distance=1cm,
      vertex/.style={circle, draw, minimum size=6mm},
      arrow/.style={-{Latex[length=3mm]}},
      group/.style={draw, rectangle, dashed, inner sep=4mm}]
      
    \node[vertex] (s) at (0,0) {\(s\)};
    
    \node[vertex] (u1) at (2,2)   {\(u_1\)};
    \node[vertex] (u2) at (2,0)   {\(u_2\)};
    \node[vertex] (u3) at (2,-2)  {\(u_3\)};
    \draw[arrow] (s) -- (u1);
    \draw[arrow] (s) -- (u2);
    \draw[arrow] (s) -- (u3);

    \node[vertex] (v1) at ($(u1)+(2,0.5)$)  {\(v_1\)};
    \node[vertex] (v2) at ($(u1)+(2,-0.5)$) {\(v_2\)};
    \node[vertex] (v3) at ($(u2)+(2,0.5)$)  {\(v_3\)};
    \node[vertex] (v4) at ($(u2)+(2,-0.5)$) {\(v_4\)};
    \node[vertex] (v5) at ($(u3)+(2,0.5)$)  {\(v_5\)};
    \node[vertex] (v6) at ($(u3)+(2,-0.5)$) {\(v_6\)};
    \draw[arrow] (u1) -- (v1); \draw[arrow] (u1) -- (v2);
    \draw[arrow] (u2) -- (v3); \draw[arrow] (u2) -- (v4);
    \draw[arrow] (u3) -- (v5); \draw[arrow] (u3) -- (v6);

    \node[vertex] (b1) at ($(v1)+(2,0)$) {\(b_1\)};
    \node[vertex] (b2) at ($(v2)+(2,0)$) {\(b_2\)};
    \node[vertex] (b3) at ($(v3)+(2,0)$) {\(b_3\)};
    \node[vertex] (b4) at ($(v4)+(2,0)$) {\(b_4\)};
    \node[vertex] (b5) at ($(v5)+(2,0)$) {\(b_5\)};
    \node[vertex] (b6) at ($(v6)+(2,0)$) {\(b_6\)};
    \draw[arrow] (v1) -- (b1); \draw[arrow] (v2) -- (b2);
    \draw[arrow] (v3) -- (b3); \draw[arrow] (v4) -- (b4);
    \draw[arrow] (v5) -- (b5); \draw[arrow] (v6) -- (b6);

    \begin{pgfonlayer}{background}
      \node[group, fit=(s)] (gS) {};
      \node[above] at (gS.north) {\(S\)};
      \node[group, fit=(u1) (u2) (u3)] (gU) {};
      \node[above] at (gU.north) {\(X\)};
      \node[group, fit=(b1) (b2) (b3) (b4) (b5) (b6)] (gB) {};
      \node[above] at (gB.north) {\(B\)};
    \end{pgfonlayer}

  \end{tikzpicture}
  \caption{Tight example for important separator preservation.}
  \label{fig:three-boxes}
\end{figure}
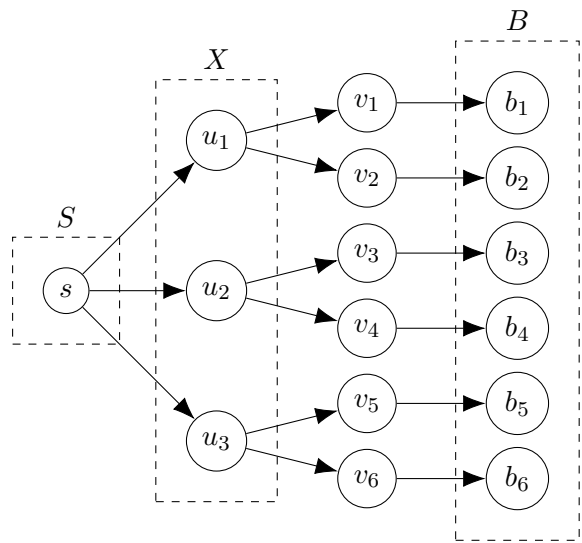

\end{proof}

\end{document}